\newcommand{\mparagraph}[1]{\noindent \textbf{#1} \;}
\newtheorem{theorem}{Theorem}[section]
\newtheorem{lemma}[theorem]{Lemma}
\newtheorem{corollary}[theorem]{Corollary}
\newtheorem{definition}[theorem]{Definition}
\DeclareMathOperator{\Forall}{\forall\,}
\newtheorem{subremark}{Remark}[theorem]
\crefname{subremark}{Remark}{Remarks}
\crefname{subobservation}{Observation}{Observations}
\newcommand{\Primes}{\ensuremath{\mathsf{Primes}}\xspace}
\newcommand{\proofATinappendix}{\em (Proof in \Cref{sec:proof-at}.)}
\newcommand{\proofQAATinappendix}{\em (Proof in \Cref{sec:proof-qaat}.)}
\newenvironment{proofsketch}{
\begin{proof}
}{%
\end{proof}
}%
\newcommand{\AlgoSkip}{\vspace{0.5em}}
\let\oldnl\nl
\newcommand{\nonl}{\renewcommand{\nl}{\let\nl\oldnl}}
\newcommand{\defeq}{\stackrel{\mathsmaller{\mathsf{def}}}{=}}
\newcommand{\mock}[1]{\widehat{#1}}
\newcommand{\Adv}{\ensuremath{\mathit{Adv}}\xspace}
\newcommand{\aangle}[1]{\langle #1 \rangle}
\newcommand{\tx}{\ensuremath{\tau}\xspace}
\newcommand{\supportSet}{\ensuremath{\mathsf{set}}\xspace}
\newcommand{\transferSet}{\ensuremath{\mathsf{transferSet}}\xspace}
\newcommand{\transferFunc}{\ensuremath{\mathit{transfer}}\xspace}
\newcommand{\sendingUA}{\ensuremath{\mathit{sendingUA}}\xspace}
\newcommand{\receivingUA}{\ensuremath{\mathit{receivingUA}}\xspace}
\newcommand{\precedingUAs}{\ensuremath{\mathit{precedingUAs}}\xspace}
\newcommand{\precedingTrans}{\ensuremath{\mathit{precedingTrans}}\xspace}
\newcommand{\bal}{\ensuremath{\mathit{bal}}\xspace}
\newcommand{\init}{\ensuremath{\mathtt{init}}\xspace}
\newcommand{\transferm}{\textsc{transfer}\xspace}
\newcommand{\prover}{\ensuremath{\mathit{pvr}}\xspace}
\newcommand{\balcom}{\ensuremath{\mathit{bal\_c}}\xspace}
\newcommand{\balopen}{\ensuremath{\mathit{bal\_o}}\xspace}
\newcommand{\balancee}{\ensuremath{\mathsf{balance}}\xspace}
\newcommand{\transfer}{\ensuremath{\mathsf{transfer}}\xspace}
\newcommand{\total}{\ensuremath{\mathit{total}}\xspace}
\newcommand{\prove}{\textup{\ensuremath{\mathsf{process\_transfer}}}\xspace}
\newcommand{\snd}{\ensuremath{\mathit{snd}}\xspace}
\newcommand{\rcv}{\ensuremath{\mathit{rcv}}\xspace}
\newcommand{\TransferValidity}{\ensuremath{\mathit{TransferValidity}}\xspace}
\newcommand{\AccountUpdate}{\ensuremath{\mathit{AccountUpdate}}\xspace}
\newcommand{\ap}{\ensuremath{\sigma}\xspace}
\newcommand{\Pa}{\ensuremath{\mathit{P_A}}\xspace}
\newcommand{\approve}{\textup{\ensuremath{\mathsf{ap\_prove}}}\xspace}
\newcommand{\apverify}{\textup{\ensuremath{\mathsf{ap\_verify}}}\xspace}
\newcommand{\data}{\ensuremath{\mathit{data}}\xspace}
\newcommand{\sn}{\ensuremath{\mathit{sn}}\xspace}
\newcommand{\sig}{\ensuremath{\mathit{sig}}\xspace}
\newcommand{\sigs}{\ensuremath{\mathit{sigs}}\xspace}
\newcommand{\seqnums}{\ensuremath{\mathit{seq\_nums}}\xspace}
\newcommand{\quoruminitm}{\textsc{quorumInit}\xspace}
\newcommand{\quorumsigm}{\textsc{quorumSig}\xspace}
\newcommand{\comCommit}{\ensuremath{\mathsf{c\_commit}}\xspace}
\newcommand{\comVerify}{\ensuremath{\mathsf{c\_verify}}\xspace}
\renewcommand{\mp}{\ensuremath{w}\xspace}
\newcommand{\nmp}{\ensuremath{u}\xspace}
\newcommand{\accIsEmpty}{\textup{\ensuremath{\mathsf{ua\_is\_empty}}}\xspace}
\newcommand{\accAdd}{\textup{\ensuremath{\mathsf{ua\_add}}}\xspace}
\newcommand{\accProveMem}{\textup{\ensuremath{\mathsf{ua\_prove\_mem}}}\xspace}
\newcommand{\accProveNonMem}{\textup{\ensuremath{\mathsf{ua\_prove\_non\_mem}}}\xspace}
\newcommand{\accVerifyNonMem}{\textup{\ensuremath{\mathsf{ua\_verify\_non\_mem}}}\xspace}
\newcommand{\accVerifyMem}{\textup{\ensuremath{\mathsf{ua\_verify\_mem}}}\xspace}
\newcommand{\zkp}{\ensuremath{\pi}\xspace}
\newcommand{\Pzk}{\ensuremath{\mathit{P_{ZK}}}\xspace}
\newcommand{\secdata}{\ensuremath{\mathit{secret\_data}}\xspace}
\newcommand{\pubdata}{\ensuremath{\mathit{public\_data}}\xspace}
\newcommand{\olddata}{\ensuremath{\mathit{old\_state\_data}}\xspace}
\newcommand{\zkpprove}{\textup{\ensuremath{\mathsf{zkp\_prove}}}\xspace}
\newcommand{\zkpverify}{\textup{\ensuremath{\mathsf{zkp\_verify}}}\xspace}
\newcommand{\send}{\textup{\ensuremath{\mathsf{send}}}\xspace}
\newcommand{\receive}{\textup{\ensuremath{\mathsf{receive}}}\xspace}
\newcommand{\receivedd}{\textup{\ensuremath{\mathsf{received}}}\xspace}
\newcommand{\rasend}{\textup{\ensuremath{\mathsf{ra\_send}}}\xspace}
\newcommand{\rareceive}{\textup{\ensuremath{\mathsf{ra\_receive}}}\xspace}
\newcommand{\rareceived}{\textup{\ensuremath{\mathsf{ra\_received}}}\xspace}
\newcommand{\broadcast}{\textup{\ensuremath{\mathsf{broadcast}}}\xspace}
\newcommand{\wait}{\textup{\ensuremath{\mathsf{wait}}}\xspace}
\newcommand{\return}{\textup{\ensuremath{\mathsf{return}}}\xspace}
\newcommand{\msgm}{\textsc{msg}\xspace}
\newcommand{\sequence}{\ensuremath{S}\xspace}
\newcommand{\op}{\ensuremath{\mathsf{op}}\xspace}
\newcommand{\commit}{\ensuremath{\mathtt{commit}}\xspace}
\newcommand{\abort}{\ensuremath{\mathtt{abort}}\xspace}
\newcommand{\ffalse}{\ensuremath{\mathtt{false}}\xspace}
\newcommand{\ttrue}{\ensuremath{\mathtt{true}}\xspace}
\newcommand{\SpacedCommentLine}[1]{\vspace{0.1em}\par\Comment{\nonl#1}}
\newcounter{mtassumctr}
\newcommand{\ie}{\textit{i.e.},\xspace}
\newcommand{\eg}{\textit{e.g.},\xspace}
\newcommand{\POATSeq}{\rightsquigarrow_i}%
\newcommand{\POATSeqOne}{\overset{\textbf{1}}{\rightsquigarrow}_i}%
\newcommand{\POATSeqTwo}{\overset{\textbf{2}}{\rightsquigarrow}_i}%
\newcommand{\transTo}[1]{t_{\vartriangleright #1}}%
\newcommand{\transFrom}[1]{t_{#1 \vartriangleright}}%
\newif\ifannote
    \newcommand{\anninsert}[2]{{\color{#1}#2}}
    \newcommand{\anncomment}[3]{
        {\color{#1}\colorbox{#1}{\bfseries\sffamily\tiny\textcolor{white}{#2}}
        $\blacktriangleright$ \em #3 $\blacktriangleleft$}
    }
    \newcommand{\anninsert}[2]{#2}
    \newcommand{\anncomment}[3]{}
\newcommand{\TA}[1]{\anncomment{blue}{TA}{#1}}
\newcommand{\ft}[1]{\anninsert{magenta}{#1}}
\newcommand{\ta}[1]{\anninsert{blue}{#1}}
\newcommand{\ar}[1]{\anninsert{teal}{#1}}
\newcommand{\todo}[1]{\anncomment{red}{TODO}{#1}}
 \title{\textbf{Asynchronous BFT Asset Transfer: \texorpdfstring{\\}{} Quasi-Anonymous, Light, and Consensus-Free}}
 \author{Timoth\'e Albouy,
    ~Emmanuelle Anceaume,
    ~Davide Frey, \\
    ~Mathieu Gestin,
    ~Arthur Rauch,
    ~Michel Raynal,
    ~Fran\c{c}ois Ta\"{i}ani
    ~\\~\\
 Univ Rennes, Inria, CNRS, IRISA, 
 35042 Rennes-cedex, France
}
\begin{document}

\maketitle

\begin{abstract}
This paper introduces a new asynchronous Byzantine-tolerant asset transfer system (cryptocurrency) with three noteworthy properties: quasi-anonymity, lightness, and consensus-freedom. Quasi-anonymity means no information is leaked regarding the receivers and amounts of the asset transfers. Lightness means that the underlying cryptographic schemes are \textit{succinct} (\textit{i.e.}, they produce short-sized and quickly verifiable proofs) and each process only stores its own transfers while keeping communication cost as low as possible. Consensus-freedom means the system does not rely on a total order of asset transfers. The proposed algorithm is the first asset transfer system that simultaneously fulfills all these properties in the presence of asynchrony and Byzantine processes. To obtain them, the paper adopts a modular approach combining a new distributed object called ``agreement proof'' and well-known techniques such as commitments, universal accumulators, and zero-knowledge proofs. 

\noindent
\textbf{Keywords:} Anonymity, Asset transfer, Asynchrony, Byzantine Fault Tolerance, Consensus-freedom, Cryptography, Distributed computing, Lightness.
\end{abstract}

\mparagraph{Acknowledgments.}
The authors would like to thank Maxence Brugères, Petr Kuznetzov, and Victor Languille for their valuable comments which helped improve this paper significantly.

This work is partially supported by the French ANR projects ByBloS (ANR-20-CE25-0002-01) and PriCLeSS (ANR-10-LABX-07-81), and the SOTERIA project.
SOTERIA has received funding from the European Union's Horizon 2020 research and innovation programme under grant agreement No 101018342.
This content reflects only the author's view.
The European Agency is not responsible for any use that may be made of the information it contains.

\todo{Changer les refs pour les thresholds signatures pour mettre celles basées sur SSS (DKG) -> Arthur}


\todo{RELIRE INTRO: Potentiellement ajouter une footnote que notre façon de noter et de raisonner autour des ZKP est nouvelle et qu'elle est utilisée spécifiquement pour le distribué. + On est les premiers à étuddié Byzantins + asynchrone + ZKP etc... -> Francois + Davide }


\todo{To check: Make sure that we distinguish invocations (of operations) and operations themselves.}
\todo{Final version: thanks Maxence, but not in submission for anonymity reason.}

\thispagestyle{empty}
\setcounter{page}{0}
\clearpage



\section{Introduction} \label{sec:intro}

Imagine a world without the ability to send money instantly across the globe, a world where financial borders seemed impassable.
This was our world just a few decades ago.
The advent of online money transfers, a real silent revolution, has turned our lives upside down and is reshaping the global economic landscape.
Its potential for a positive impact on the world (\eg decentralized and secured financial transactions, democratized access to financial services, micro-payments) is far from exhausted. 
Existing decentralized money transfer, also called \textit{decentralized  or  distributed asset transfer},
faces a range of challenges pertaining to  privacy (\ie transfer confidentiality), performance (\ie throughput and latency), and communication/computational/storage costs.
Addressing these challenges amounts to designing a system that enjoys:
\begin{itemize}[topsep=0pt,itemsep=-1ex,partopsep=1ex,parsep=1ex]
\item \textbf{Anonymity}  to hide both the identity of the parties involved in the transfer (\ie the sender and/or the recipient)\footnote{Note that when the anonymity of only the sender or the recipient is hidden, we talk about quasi-anonymity.} and the amount of transferred assets,
\item \textbf{Lightness} to maintain low operational costs in terms of computation (in particular, ensuring the \emph{succinctness} of security proofs), storage (each participant only needs to store her own transfer history), and communication (\ie low bit complexity),
\item \textbf{Consensus freedom} to operate deterministically in an asynchronous failure-prone setting.
\end{itemize}

\noindent
Unfortunately, existing systems invariably satisfy only a subset of these three important properties. 

First, until recently, all asset-transfer systems relied on a consensus primitive.
Consensus was thought necessary to avoid double spending, but as shown in \cite{AFRT20,GKMPS22,G16}, a reliable broadcast primitive whose deliveries respect process sending order is sufficient when each account is owned by a single process.
This has direct practical implications because consensus imposes a non-required sequential processing of all the asset transfers (or block by block in the case of blockchains).
On the other hand, not only can consensus-free asset-transfer systems process transfers concurrently, resulting in higher throughput~\cite{BDS20,CGKKMPPST20}, but they also make it possible to leverage deterministic algorithms that achieve progress in asynchronous settings.
In practical systems, this means not having to wait for synchrony assumptions to be satisfied, a condition that would otherwise lead to long delays in large-scale systems.
The interested reader may find possibility and impossibility results for payment channels in asynchronous asset transfer in \cite{NK22}.

Second, even if recent years have shown the emergence of a variety of consensus-free solutions for asset transfer, none of them, except for the recent Zef~\cite{BSKD23} feature any level of anonymity.
Anonymous asset transfer systems typically rely on consensus to enable multiple users to hide behind a set of accounts or tokens.
Indeed, recent work~\cite{FGR23} has shown that consensus is necessary to guarantee full anonymity in a system where asset transfers from correct processes never fail. Zef~\cite{BSKD23} circumvents this impossibility by weakening the anonymity requirement, similar to what we do in this paper. 

Finally, most systems (including Zef~\cite{BSKD23}) exhibit high storage and/or communication cost as summarized in Table~\ref{tab:comparison}.
With respect to storage, most systems require nodes to store the entire history of all asset transfers (\eg the entire blockchain) or at least a number  of transfers proportional to this entire history. With respect to communication, they feature a cost that is at least linear in the number of nodes, and quadratic in the case of Zef (Table~\ref{tab:comparison}).

\mparagraph{Combining quasi-anonymity, lightness, and consensus freedom.}
The challenge  addressed in this paper consists in designing the first asset transfer system that achieves anonymity, and consensus-freedom while incurring as-low-as-possible storage and communication costs. Specifically, the proposed solution  achieves the following properties.

\begin{enumerate}[topsep=0pt,itemsep=-1ex,partopsep=1ex,parsep=1ex]
    \item \textbf{Quasi-anonymity:} it hides the amount and the receiver's identity of each asset transfer.
    
    
    
    \item \textbf{Lightness:} 
    All the cryptographic schemes it uses are succinct---\ie with at most polylogarithmic proof size and verification time---the storage cost incurred by each process $p_i$ is linear in the number of transfers of $p_i$ for a fixed security parameter $\lambda$, and the communication cost of the algorithm remains as low as possible.
    
    \item \textbf{Consensus freedom:} The proposed solution consists of a deterministic algorithm that can operate in an asynchronous setting prone to failures, thereby supporting responsive implementations. 
\end{enumerate}



\mparagraph{Roadmap.}
This paper is made up of \ref*{sec:conclusion} 
main  sections.
\Cref{sec:background} positions the contribution with respect to the state of the art. 
\Cref{sec:model} defines the computing system model.
\Cref{sec:amt-spec} provides a concurrent specification of quasi-anonymous asset transfer (QAAT).
\Cref{sec:techniques} informally defines the cryptographic schemes used in our system.
\Cref{sec:amt-algo} presents a novel 
{\it Agreement Proof} abstraction, our QAAT system, and the intuition behind its correctness.
Finally, \Cref{sec:conclusion} concludes the article.

Mote technical developments, such as the proofs of 
the QAAT algorithm, and the properties and implementations of \Cref{sec:techniques}'s s
chemes appear in appendices.

\section{Background and State of the Art}
\label{sec:background}
As we briefly mentioned above, some notable systems have addressed each of the challenges we presented individually, but to the best of our knowledge, we are the first to address all of them in a single system.
\Cref{tab:comparison} compares some of the notable existing asset transfer systems evoked below.

\begin{table}[t]
    \centering
    \begin{tabular}{|c||c|c|c|c|c|}
    \hline
    {\small \textbf{System}} & {\small \textbf{Anonymity}} & {\small \textbf{Comm.}} & {\small \textbf{Storage}} & {\small \textbf{Cons.-Free}} & {\small \textbf{Model}} \\
    \hhline{|=#=|=|=|=|=|}
    {\small Zcash~\cite{HBHW16}} & Full & $\Omega(\lambda n)$ & $\Omega(\lambda|T|)$ & No & Sync. \\
    \hline
    {\small Mina~\cite{BMVS20,SJSW20}} & None & $\Omega(\lambda n)$ & $\Omega(\lambda|T|/n + n \log n + n)$ & No & Sync. \\
    \hline
    \makecell{{\small Pastro~\cite{KPPT23}}} & None & ? & $\Omega(\lambda|T| + n)$ & Yes & Async. \\
    \hline
    \makecell{{\small Zef~\cite{BSKD23}}} & Quasi & $\Omega(\lambda n^2)$ & $\Omega(\lambda|T| + n)$ \footnotemark & Yes & Async. \\
    \hline
    {\small \textbf{This paper}} & Quasi & $O(\lambda n)$ & $O(\lambda+(|T|/n)\log n + n)$ & Yes & Async. \\
    \hline
    \end{tabular}
    \vspace{.5em}
    \caption{Comparison  with notable existing systems.
    {$|T|$ denotes the total number of transfers  in an execution.
The storage cost is given per process, and the communication cost is given for the whole system.}
    }
    \label{tab:comparison}
\end{table}

    
    



\mparagraph{Anonymous and quasi-anonymous asset transfer systems.}
Anonymous Asset Transfer (AAT) systems have been studied since the introduction of the online payment system Bitcoin in $2008$~\cite{N08} and can be categorized as either \emph{token-based} or \emph{account-based}.
The two types of systems differ by the persistency of the financial medium. A token can only be used once while an account has a recorded balance. 
To transfer a token, the sender must prove that it possesses it and that this token has never been used before.
This is generally achieved by maintaining a list of unspent tokens and a list of spent tokens.
Anonymity is typically obtained by relying on cryptographic primitives (Zero-knowledge proofs---ZKPs) that allow one party to convince another party that a given statement is true without divulging any information beyond the fact that the statement is true.
An example of token-based systems is Zcash~\cite{HBHW16}.
\footnotetext{
    The authors of~\cite{BSKD23} describe a public-key rotation method to further reduce Zef's storage cost, however this mechanism requires additional synchrony assumptions, as discussed in \Cref{sec:enhancements}. 
}

Asset transfer via accounts requires the sender to prove that its account has sufficient funds.
Anonymity is guaranteed by having the sender select at random a subset of its accounts.
The Quisquis~\cite{FMMO19} and the Zether~\cite{BAZB20} asset-transfer systems represent account-based AATs.
%

All the systems we mentioned~\cite{HBHW16,FMMO19,BAZB20} provide full anonymity (\ie they ensure both  sender and receiver anonymity), but they also rely on consensus and therefore cannot be implemented deterministically in an asynchronous failure-prone system.
Indeed, it was recently shown~\cite{FGR23} that fully anonymous asset transfer requires consensus or equivalently total order to be implemented, if asset transfers from a correct process must never fail.

We show in the this paper that guaranteeing quasi-anonymity in place of anonymity is sufficient to circumvent this impossibility result, similar to what is done by Zef~\cite{BSKD23}.

\mparagraph{Light asset transfer systems.}
The idea of lightness consists in guaranteeing that all underlying cryptographic schemes are \textit{succinct}, that processes only store transactions they are involved in, and that communication cost remains as low as possible.
A cryptographic proving scheme is succinct if and only if its proof size and verification time are at most polylogarithmic in the ``size of the problem''~\cite{BCIOP22}, where the size of the problem depends on the scheme.\footnote{Following the literature, we only consider the internal state and the cryptographic schemes used by our system when analyzing its storage cost.
In particular, the storage cost induced by the network routing protocol is ignored.
}
For instance, it may refer to the number of signatures in an aggregated signature scheme~\cite{BLS04}, or to the arithmetic circuit size used in a Succinct Non-interactive Argument scheme (SNARG).
Intuitively, succinctness captures the fact that a practicable system has to use cryptographic implementations that are themselves practicable (\ie their verification and storage costs do not become prohibitively high with time).
%

Reducing the local storage cost of decentralized asset transfer systems has been a research challenge since the advent of the Bitcoin blockchain. The notion of light clients, implemented with the simplified payment verification (SPV) protocol~\cite{N08} allows clients to only store block headers and Merkle proofs in place of full blocks.
Still, their local storage grows linearly with the size of the blockchain.
Other solutions, such as the one presented in~\cite{BKLZ20}, allow a succinct and secure construction of proofs, \ie Merkle mountain ranges, but still require the full blockchain to be maintained to update the proof with new blocks.
In contrast, non-interactive proofs of proof-of-work succeed in constructing secure and succinct token-based asset transfer systems by sampling a polylogarithmic number of blocks of the blockchain~\cite{JAG23,KLZ21} but do not target full anonymity of the transfers.
Finally, the Mina system~\cite{BMVS20} leverages SNARKs (Succinct Non-interactive Arguments of Knowledge) to obtain a succinct blockchain. 
However, their construction requires maintaining the full knowledge of all the accounts of the system to update the proof when a transfer occurs.
In addition, Mina does not provide any level of anonymity for the transfers.
Those solutions do not explicitly say that they rely on P2P for their communication. Assuming that this is the case, sending a piece of information costs $n \log(n)$ messages in average.

\mparagraph{Consensus-free asset transfer systems.} 
Fundamentally, an asset-transfer system must be double-spending-free; that is, spending a unit of value more than once must be impossible.
This means that all parties agree on the order in which transfers are issued from each individual account: if some account issues two conflicting transfers spending the same funds, the rest of the network, and especially the corresponding creditors, have to agree on which transfer (if any) is correct.
This relaxed, per-account ordering of transfers can be obtained by Byzantine-tolerant communication primitives weaker than consensus, \eg \emph{Byzantine-tolerant reliable broadcast}~\cite{B87}.%
\footnote{
    This is only the case for asset transfer systems where each account is owned by a single process.
    In systems where accounts can have multiple owners, total ordering is needed~\cite{GKMPS22,G16}, \ie consensus or atomic broadcast.
}
Following~\cite{GKMPS22}, several works have proposed payment systems based on reliable broadcast only, namely AFRT20~\cite{AFRT20}, Astro~\cite{CGKKMPPST20}, and FastPay~\cite{BDS20}.
%
Astro was later extended to permissionless environments with Pastro~\cite{KPPT23}, combining weighted quorums and proof-of-stake.
Building on FastPay, Zef~\cite{BSKD23} provides quasi-anonymous transfers, but it incurs quadratic communication cost and storage cost linear in the total number of transfers.


\section{System Model} \label{sec:model}
\mparagraph{Processes.}
\sloppy
The system comprises $n$ sequential asynchronous processes denoted by $p_1,...,p_n$.
Each process $p_i$ has a unique identity, which is known to other processes. 
To simplify the presentation without loss of generality, we assume that $i$ is the identity of $p_i$.
%
Up to $t<n/3$ processes can be Byzantine, where a Byzantine process is a process whose behavior may not follow the algorithm~\cite{LSP82,PSL80}.\footnote{The assumption $t<n/3$ is only needed for the \textit{Agreement Proof} scheme of our system (see \Cref{sec:agreement-proof}).}
Byzantine processes may collude to fool non-Byzantine (a.k.a. correct) processes.

\mparagraph{Network.}
Processes communicate by exchanging messages through a fully connected asynchronous point-to-point communication network, which is assumed to be reliable (\ie the network does not corrupt, drop, duplicate, or create messages).
Let \msgm be a message type and $v$ the value contained in the message.
The operation ``\send $\msgm(v)$ {\bf to} $p_j$'' is used for sending, and the callback ``{\bf when} $\msgm(v)$ {\bf is} \receivedd'' is used for receiving.
For syntactic sugar, processes can also communicate using the macro-operation denoted \broadcast $\msgm(v)$, that is a shorthand for ``{\bf for all} $j \in \{1, \cdots, n\}$ {\bf do} \send $\msgm(v)$ {\bf to} $p_j$''.
When processes use this macro-operation to disseminate a message, we say that this message is \textit{broadcast} and \textit{received}.
The macro-operation \broadcast $\msgm(v)$ is unreliable.
For example, if the invoking process crashes during its invocation, an arbitrary subset of processes receives the message $\msgm(v)$.
Moreover, due to its very nature, a Byzantine process can send conflicting messages without using the macro-operation \broadcast.
Finally, the processes have access to \rasend/\rareceive operations (for ``receiver-anonymous send/receive''), which function like the classical \send/\receive operations with the additional guarantees that \emph{(i)} the message cannot be read by anyone other than the receiver, and that \emph{(ii)} the receiver remains anonymous from an adversary eavesdropping the network.
For instance, these two operations could be implemented by broadcasting an encrypted message to the whole network that only the intended receiver can decrypt, or by using onion routing~\cite{RSG98}.\footnote{
    Note that both these solutions have a communication complexity of at most $O(\lambda n)$.
}

\mparagraph{Cryptographic schemes.}
In this paper, we use cryptographic schemes such as secure hash functions, asymmetric signatures, commitments, accumulators, and arguments of knowledge.
In the following, we specify these schemes regarding abstract operations and guarantees.
We provide concrete implementation examples in \Cref{sec:impl}, detailing how each choice constrains the adversary's power.

\mparagraph{Security parameter $\bm{\lambda}$.}
We denote by $\lambda$ the security parameter of the cryptographic schemes used in our algorithms.
There is a trade-off between the security level of our cryptographic schemes (represented by
$\lambda$) and their computational, communication, and storage costs.
We further assume that $\lambda$ dominates $\log n$, \ie $\lambda=\Omega(\log n)$.



\mparagraph{Different notions of adversaries.}
We consider an adversary denoted \Adv with both \textit{distributed} and     \textit{cryptographic} aspects.
The distributed aspect seeks to compromise the correctness of the system by controlling process faults (Byzantine faults in our case) and the scheduling of messages (asynchrony).
The cryptographic aspect, on the other hand, strives to compromise the anonymity properties of the system.
In this respect, we consider a \textit{probabilistic polynomial-time} (PPT) adversary, \ie one that has bounded computational power.
It has only access to the information publicly transiting on the network (\ie the messages communicated using the clear-text primitives \send, \broadcast, and \receivedd), not the messages privately exchanged between pairs of processes using the \rasend/\rareceive operations defined before.

%


\mparagraph{Notations.}
The \textit{invocation} by a process of an operation $\op$ with input parameter $\mathit{in}$ and output result $\mathit{out}$ is denoted 
$\op(\mathit{in})/\mathit{out}$.
The ``$\star$'' symbol means that the corresponding value is left unspecified, \eg $\op_i(\star)$ refers to an invocation of \op by $p_i$ with an arbitrary input.
A pair made up of $a$ and $b$ is denoted $\langle a,b \rangle$. 
\Cref{tab:notations} summarizes key notations used throughout this paper.

\begin{table*}[t]
    \centering
    \begin{tabular}{|c|c|}
    \hline
    \textbf{Notation} & \textbf{Meaning} \\
    \hhline{|=|=|}   
    $\lambda$, $\epsilon(\lambda)$, \Adv & Security parameter, Negligible real number in $[0;1]$, Adversary \\
    \hline
    AT, QAAT & Asset transfer object, Quasi-anonymous asset transfer object \\
    \hline
    AP, \ap, \Pa & Agreement proof scheme, Agreement proof, AP predicate \\
    \hline
    $c$, $o$ & Commitment digest, Opening \\
    \hline
    \makecell{UA, $A$, \\ \mp, \nmp} & \makecell{Universal accumulator scheme, Accumulator, \\Membership proof, Non-membership proof} \\
    \hline
    ZKP, \zkp, \Pzk & Zero-knowledge proof scheme, Zero-knowledge proof, ZKP predicate \\
    \hline
    $\tx{:}\aangle{\snd,v,\rcv,\sn}$ & Asset transfer details: sender, amount, receiver, sequence number \\
    \hline
    \end{tabular}
    \vspace{-.2em}
    \caption{Key notations used in the paper.} \label{tab:notations}
\end{table*}


\section{Quasi-Anonymous Asset Transfer (QAAT): Concurrent Specification} \label{sec:amt-spec}
\textit{Asset Transfer} (\textit{AT}), and its extension, \textit{Quasi-Anonymous Asset Transfer} (\textit{QAAT}) are derived from the {\it concurrent} asset transfer specification of AFRT20~\cite{AFRT20}.
We refer to this specification as concurrent because it does not consider asset transfer as a sequential object handling operation invocations in a total order.
In the following, we assume that each account is owned by a single process.

\mparagraph{AT operations.}
A distributed AT object provides processes with two operations: 
\begin{itemize}[topsep=0pt,itemsep=-1ex,partopsep=1ex,parsep=1ex]
    \item $\balancee_i()/v$ returns the account balance $v$ of the calling process $p_i$ according to its current local vision.
    
    \item $\transfer_i(v,j)/r$ transfers the amount $v$ from the account of the calling process $p_i$ to the account of $p_j$, returns $r=\commit$ if the account of $p_i$ has enough funds, $r=\abort$ otherwise.
    For simplicity, if $r=\commit$, we omit the writing of the result of the corresponding invocation: $\transfer_i(v,j)/\commit$ can be simply written $\transfer_i(v,j)$.
\end{itemize}

The above $\balancee()$ operation is weaker than that of AFRT20~\cite{AFRT20} as it can only return the balance of the local process, and not of any process of the system.
This is needed to ensure confidentiality (since a given process should not be able to read the accounts of other processes). 

It is assumed that the account of each process $p_i$ is initialized to a nonnegative value denoted $\init_i$ (in our QAAT implementation of \Crefrange{alg:aat-init-vars}{alg:aat}, $\init_i$ is known only by $p_i$).

\mparagraph{
Histories and sequences.}
\begin{itemize}[topsep=0pt,itemsep=-1ex,partopsep=1ex,parsep=1ex]
   
    \item Let   $S=(o_k)_{k\in[1..|S|]}$  be a sequence of operation invocations. $S$ might be finite (in which case $|S|\in\mathbb{N}$), or infinite (in which case $|S|=+\infty$, and $[1..|S|]=\mathbb{N}$). We note $\supportSet(S)$ the set of invocations in  $S$, $\rightarrow_S$ the total order defined by $S$, and $\transferSet(S)$ the set of \transfer invocations contained in $S$ (\ie $\transferSet(S)=\{k \in [1,|S|] \mid o_k=\transfer_\star(\star,\star)\}$).
    \item A local history of a correct process $p_i$, denoted $L_i$, is a sequence of operation invocations made by $p_i$.
    If an invocation $o$ precedes another invocation $o'$ in $L_i$, we say that ``$o$ precedes $o'$ in the process order of $p_i$'', which is written $o \rightarrow_i o'$.
    
    \item In a similar way, a local history of a Byzantine process $p_i$ is a sequence $L_i$ of operations invocations. However, because Byzantine processes might deviate arbitrarily from their prescribed behavior, these invocations do not necessarily correspond to how the behavior of Byzantine processes is \emph{perceived} by correct processes.
    %
    
    \item A global history $H$ is an array of $n$ local histories, one for each process: 
    $H = [L_1,...,L_n]$. 

    \item We use the notion of \emph{mock history} to capture the perceived behavior of Byzantine processes~\cite{AFRT20}. Intuitively, a mock history $\mock{H}$ of some global history $H$ is a history that preserves the local histories $L_i$ of correct processes $p_i$ but might change the local histories of Byzantine processes. More formally, $\mock{H} = [\mock{L}_1,...,\mock{L}_n]$ is a mock history of $H=[L_1,...,L_n]$ iff $\mock{L}_i=L_i$ for all correct processes.
    
    

  
\end{itemize}

\mparagraph{Active processes.}
Given a global history $H$, we say that a process is \emph{active} in $H$ if it is involved in a $\transfer_i(v,j)$ operation invocation appearing in a local history $L_i$ of $H$, either as the calling process $p_i$ or as the receiver $p_j$.

\mparagraph{AT-sequence.}
Given a process ID $i$ and a set of operation invocations $O$ (performed by $p_i$ and other processes), the function $\total(i,O)$ returns the balance of the account of $p_i$ resulting from the transfers it sends and receives according to $O$ (by adding the initial balance of $p_i$ to the funds received by $p_i$ in $O$, and subtracting the funds sent by $p_i$ in $O$):
\newcommand{\definitionTotal}[1]{%
\total(#1,O) = \init_{#1} + \big(\Sigma_{\transfer_\star(v,#1) \in O} \, v\big) - \big(\Sigma_{\transfer_{#1}(v,\star) \in O} \, v\big).%
}
$$
    \definitionTotal{i}
$$


\noindent
A sequence $S$ of invocations 
is an \emph{asset-transfer-sequence} (\emph{AT-sequence}) if and only if:
\begin{enumerate}[topsep=0pt,itemsep=-1ex,partopsep=1ex,parsep=1ex]
    \item $\Forall i\hspace{0.8em}\in[0..n], \Forall o = \balancee_i()/v \hspace{0.5em}\in S: v = \total\big(i,\{o' \in S \mid o' \rightarrow_S o\}\big)$\label{at-seq:balance:inv};
    
    \item $\begin{aligned}[t]
    \Forall i,j\in[0..n], \Forall o = \transfer_i(v,j) &\in S: v \leq \total\big(i,\{o' \in S \mid o' \rightarrow_S o\}\big).
    \end{aligned}$\label{at-seq:positive:account}
\end{enumerate}
Informally, these conditions mean that the \balancee operation must return the balance of the process's account when it is invoked in the sequence, and the \transfer operation must succeed (\ie return \commit) only if the balance of the debtor's account is sufficient according to the sequence $S$.

We say that a global history $H$ can be \emph{AT-sequenced} iff for every correct process $p_i$ there exists an \emph{AT-sequence} $\sequence_i$ with the following properties:
\begin{itemize}[topsep=0pt,itemsep=-1ex,partopsep=1ex,parsep=1ex]
    \item $\sequence_i$ contains exactly all of $p_i$'s invocations (both $\transfer_i$ and $\balancee_i$ invocations), and all transfers by other processes, \ie $\supportSet(\sequence_i)=\supportSet(L_i)\cup\bigcup_{j\neq i}\transferSet(L_j).$
    \item $\sequence_i$ respects $p_i$'s process order, \ie $\rightarrow_i\subseteq \rightarrow_{\sequence_i}$.
\end{itemize}
Intuitively, the above conditions ensure that each process $p_i$ can explain its local execution (Condition~\ref{at-seq:balance:inv} of the AT-sequence definition above) from the transfers contained in the system in a way that respects $p_i$'s local order and the invariant that no account should ever be negative (Condition~\ref{at-seq:positive:account} of the AT-sequence definition).


\mparagraph{AT properties.}
A distributed algorithm $\mathcal{A}$ implements the AT specification iff it provides \balancee and \transfer operations as defined earlier, such that the following properties hold.
\begin{itemize}[topsep=0pt,itemsep=-1ex,partopsep=1ex,parsep=1ex]
    \item \textbf{AT-Termination.} All operation invocations of $\mathcal{A}$ (\balancee and \transfer) by correct processes terminate.
    \item \textbf{AT-Sequentiality.} For any global history $H$ capturing an execution of $\mathcal{A}$, there exists a mock history $\mock{H}$ of $H$ that can be AT-sequenced.
\end{itemize}

\mparagraph{Quasi-Anonymous Asset Transfer.}
An algorithm $\mathcal{A}$ implements a Quasi-Anonymous AT object (QAAT for short) if it verifies the AT properties (stated above) and also meets the following privacy-preserving properties.
Recall that \Adv denotes an adversary trying to guess private information from system asset transfers.
\begin{itemize}[topsep=0pt,itemsep=-1ex,partopsep=1ex,parsep=1ex]
    
    \item \textbf{QAAT-Receiver-Anonymity.}
    For any global history $H$ capturing an execution of $\mathcal{A}$, for any invocation $\transfer_i(\star,j)$ contained in $H$ where $p_i$ and $p_j$ are correct processes, then, with high probability, $p_j$ is indistinguishable among all the correct active processes of $H$.
    

    
    \item \textbf{QAAT-Confidentiality.}
    For any global history $H$ capturing an execution of $\mathcal{A}$, for any invocation $\transfer_i(v,j)$ contained in $H$ where $p_i$ and $p_j$ are correct processes, then, with high probability, $v$ is indistinguishable among any value in $\mathbb{R}^+$.


\end{itemize}

\ta{Informally, if we consider an adversary \Adv and a $\transfer_i(v,j)$ invocation appearing in a global history $H$, where $p_i$ and $p_j$ are correct processes, the best \Adv can do (w.h.p.) is to pick $v \in \mathbb{R}^+$ and $j\in[1..n]$ uniformly at random.}
Doing so, \Adv will guess $j$ correctly with a probability inversely proportional to the number of active processes in $H$, and $v$ with a probability of $0$.\footnote{
    For simplicity, we assume that $v$ is defined on $\mathbb{R}^+$, but in practice, $v$ is a bounded positive number.
}
The above two properties capture that a Quasi-Anonymous Asset Transfer can be made arbitrarily close to this ideal situation by increasing the number of correct active processes and the security parameter $\lambda$ of the cryptographic schemes.
\ta{Furthermore, in our QAAT implementation (\Crefrange{alg:aat-init-vars}{alg:aat}), since each correct process $p_i$ is the only one knowing its initial account balance $\init_i$, QAAT-Confidentiality ensures that the account balance of $p_i$ also remains private throughout the execution with high probability.}



\section{A Modular Approach: Underlying Building Blocks} \label{sec:techniques}

The QAAT algorithm proposed in this paper follows a modular approach.
Before moving to our actual contributions in \Cref{sec:amt-algo}, we present in this section the cryptographic schemes upon which our QAAT system builds, namely, \textit{Commitments}, \textit{Universal Accumulators} (\textit{UA}), and \textit{Zero-Knowledge Proofs} (\textit{ZKP}).
Each scheme features a proving algorithm that aims to convince a verifying algorithm of some claim through a proof.
In the following, the proving algorithm is called the prover, while the verifying algorithm is called the verifier.
Due to space constraints, the full definition of the properties of these cryptographic schemes is given in~\Cref{sec:techniques-long}.

\mparagraph{Commitments.}
A \textit{commitment} scheme allows one to create a commitment $c$ to a chosen value $v$ while keeping this value hidden from others, with the ability to reveal the committed value later through a proof $o$ called opening.

A commitment scheme provides 2 operations: \textit{(i)} $\comCommit(v)/\langle c,o \rangle$ takes a value $v$ and outputs the corresponding commitment $c$ and opening $o$, and \textit{(ii)} $\comVerify(c,v,o)/r$ outputs $r=\ttrue$ if $o$ is a valid opening for the commitment $c$ and value $v$, and $r=\ffalse$ otherwise.
Informally, a commitment scheme must be \textit{binding} (it opens only to the committed value $v$) and \textit{hiding} (it does not leak information on its committed value).

\mparagraph{Universal Accumulators (UA).}
A \textit{universal accumulator} (\textit{UA}) scheme~\cite{BBF19} produces a short commitment $A$ (the accumulator) to a set of elements $E$, upon which the prover can produce both \textit{membership proofs} \mp or \textit{non-membership proofs} \nmp of elements $e$ in the set $E$, without divulging any other information.
We assume each process creates an  empty accumulator during the system's setup phase (see \Cref{sec:aat-setup}).

Consider an accumulator $A$, its associated accumulated set $E$, and an arbitrary value $v$.
A UA scheme provides 6 operations: \textit{(i)} $\accIsEmpty(A)/b$ takes $A$ and outputs $b=\ttrue$ if $E=\varnothing$, and $b=\ffalse$ otherwise; \textit{(ii)} $\accAdd(A,v)/A'$ takes $A$ and $v$ and outputs the updated accumulator $A'$ containing $v$; \textit{(iii)} $\accProveMem(A,E,v)/r$ takes $A$, $E$, and $v$ and outputs a membership proof $r=\mp$ of value $v$ in accumulator $A$ if $v \in E$, and $r=\abort$ otherwise; \textit{(iv)} $\accProveNonMem(A,E,v)/r$ takes $A$, $E$, and $v$ and outputs a non-membership proof $r=\nmp$ of value $v$ in $A$ if $v \notin E$, and $r=\abort$ otherwise; \textit{(v)} $\accVerifyMem(A,v,\mp)/r$ outputs $r=\ttrue$ if \mp is a correct membership proof of $v$ for $A$, and $r=\ffalse$ otherwise; and \textit{(vi)} $\accVerifyNonMem(A,v,\nmp)/r$ outputs $r=\ttrue$ if \nmp is a correct non-membership proof of $v$ for $A$, and $r=\ffalse$ otherwise.

Informally, a UA scheme must be \textit{sound} (proofs are unforgeable), \textit{complete} (the verification of \mbox{(non-)membership} works as intended), \textit{undeniable} (an element cannot have both membership and non-membership proofs), and \textit{indistinguishable} (the accumulator and proofs do not leak information on the set).

\mparagraph{Zero-Knowledge Proofs (ZKP).}
A \textit{zero-knowledge proof} (\textit{ZKP}) scheme~\cite{S20} produces proofs $\zkp$ that a prover knows some \secdata without divulging any other information on it to the verifier.
Each object $\mathit{Obj}_\mathit{ZK}$ of a ZKP scheme is set up with a specific predicate \Pzk, called a \textit{ZKP predicate}, taking as parameters  \pubdata (known both by the prover and verifier) and \secdata (known only by the prover), and returning \ttrue or \ffalse.
The prover $p_i$ passes to the \zkpprove operation the \pubdata and \secdata parameters of \Pzk, and the proof $\zkp_i$ is correctly generated only if $\Pzk(\pubdata,\secdata$$)=\ttrue$.
The \Pzk predicate is typically passed to the implicit setup operation of the ZKP scheme.

Consider a ZKP object $\mathit{Obj}_\mathit{ZK}$ set up with a ZKP predicate \Pzk.
A ZKP scheme provides 2 operations: \textit{(i)} $\mathit{Obj}_\mathit{ZK}.\zkpprove(\pubdata,\secdata)/r$ outputs a result $r$, which can be either a zero-knowledge proof $r=\zkp$ if the input parameters \secdata and \pubdata satisfy \Pzk or $r=\abort$ otherwise, and \textit{(ii)} $\mathit{Obj}_\mathit{ZK}.\zkpverify(\zkp,\pubdata)/r$ outputs the validity $r \in \{\ttrue,\ffalse\}$ of a zero-knowledge proof \zkp w.r.t. the \pubdata and the ZKP predicate \Pzk.
Informally, a ZKP scheme must be \textit{knowledge-sound} (if a proof \zkp is valid, then the prover knows the associated \secdata), \textit{complete} (a valid proof \zkp can be generated from some \pubdata and \secdata that satisfy \Pzk), and \textit{zero-knowledge} (a proof \zkp does not leak information on \secdata).

\section{A Light Consensus-Free Quasi-Anonymous Asset Transfer Algorithm} \label{sec:amt-algo}

Our novel asset-transfer system leverages the building blocks presented in \Cref{sec:techniques} and comprises two main contributions. The first, Agreement Proofs, presented in \Cref{sec:agreement-proof}  lies in a novel construct that produces transferable proofs that the processes have reached an agreement on some payload. The second, presented in \Cref{sec:algorithm} consists of our novel asset-transfer algorithm, which leverages the building blocks presented in \cref{sec:techniques}, and our novel Agreement Proofs to guarantee \emph{quasi-anonymity}, \emph{lightness}, and \emph{consensus freedom}.


\subsection{A new distributed scheme: Agreement Proofs (AP)} \label{sec:agreement-proof}

An \textit{Agreement proof} scheme (\textit{AP}) is a novel distributed scheme defined by two (explicit) operations $\approve$ and $\apverify$.
It aims at producing transferable agreement proofs (APs) that the system has reached an agreement regarding some payload value $v$ originating from a given sender/prover $p_i$ with sequence number $\sn_i$.
Sequence numbers uniquely identify each proof the prover generates using $\approve$, 
making the scheme \textit{multi-shot} (\ie a given prover can generate multiple proofs).
Our asset-transfer algorithm, presented in \Cref{sec:algorithm}, leverages APs to prevent double-spending.
Hence, the AP scheme, specified in the following, provides easily interpretable properties that formalize the notion of cryptographic certificates, which are often used in distributed systems~\cite{BSKD23}.
In \Cref{sec:ap-implem}, we provide a consensus-free AP implementation with a storage complexity of $O(\lambda+n)$ and a communication complexity of $O(n \lambda)$.

\mparagraph{AP predicate.}
Each object of an AP scheme is set up with a specific predicate \Pa, called an \textit{AP predicate}, taking in the AP's sequence number $\sn$ and some arbitrary $\data$ as parameters, and returning \ttrue or \ffalse.
The prover $p_i$ passes the payload $v$ and the \data parameter of \Pa to the \approve operation, which generates a correct proof $\ap_i$ only if $\Pa(v,\data,\sn_i)=\ttrue$, where $\sn_i$ is the sequence number of the current \approve invocation by $p_i$, \ie the number of times the prover $p_i$ has invoked \approve up to the current invocation.
We further assume that during system initialization, a valid \emph{genesis AP} can be generated by the set-up procedure for a pair $(v,\data)$ at sequence number $\sn_i=0$. (We return to the set-up procedure in \Cref{sec:init-var} and discuss implementation details in \Cref{sec:aat-setup}.)

\mparagraph{AP operations.}
We consider an AP object $\mathit{Obj}_A$ set up with an AP predicate \Pa.
\begin{itemize}[topsep=0pt,itemsep=-1ex,partopsep=1ex,parsep=1ex]
       \item $\mathit{Obj}_A.\approve(v,\data)/r$: Given the prover $p_i$, a payload value $v$ and some \data,  the operation returns $r=\ap_i$ if the predicate $\Pa(v,\data,\sn_i)$ is \ttrue, where $\sn_i$ is the sequence number of the current \approve invocation by $p_i$, and $\ap_i$ is an agreement proof for value $v$ at sequence number $\sn_i$, or $r=\abort$ otherwise.

    \item $\mathit{Obj}_A.\apverify(\ap_j,v,\sn_j,j)/r$: The operation returns the validity $r \in \{\ttrue,\ffalse\}$ of an agreement proof $\ap_j$ for a value $v$ of a prover $p_j$ at a sequence number $\sn_j$.
\end{itemize}

\mparagraph{Validity of an AP $\bm{\sigma}$.}
Given an AP object $\mathit{Obj}_A$, a payload value $v$, a sequence number $\sn_i$ and a prover $p_i$ (correct or faulty), we say that some \ap is a ``\textit{valid AP for $v$ at $\sn_i$ from $p_i$}'' if and only if any invocation of $\mathit{Obj}_A.\apverify(\ap,v,\sn_i,i)$ by any correct process would return true.

\mparagraph{AP properties.}
An AP object $\mathit{Obj}_A$ set up with an AP predicate \Pa satisfies four properties. 

\begin{itemize}[topsep=0pt,itemsep=-1ex,partopsep=1ex,parsep=1ex]
    \item \textbf{AP-Validity.} \sloppy 
    If $\ap_i$ is a valid AP for a value $v$ at sequence number $\sn_i > 0$ from a correct prover $p_i$, then $p_i$ has executed $\mathit{Obj}_A$.$\approve(v,\star)/\ap_i$ as its $\sn_i$\textsuperscript{th} invocation of $\mathit{Obj}_A$.$\approve(\star,\star)$.
    
    \item \textbf{AP-Agreement.}
    There are no two different valid APs $\ap_i$ and $\ap_i'$ for two different values $v$ and $v'$ at the same sequence number $\sn_i$ and from the same prover $p_i$.
    More formally, $\mathit{Obj}_A.\apverify(\ap_i,v,\sn_i,i)=\mathit{Obj}_A.\apverify(\ap_i',v',\sn_i,i)=\ttrue$ implies $v=v'$.

    
    


    \item \textbf{AP-Knowledge-Soundness.}
    If $\ap_i$ is a valid AP for value $v$ at sequence number $\sn_i > 0$ from a prover $p_i$ (correct or faulty), then
        $p_i$ knows some \data such that $\Pa(v,\data,\sn_i)/\ttrue$.

    \item \textbf{AP-Termination.}
    Given a correct process $p_i$ that executes $\mathit{Obj}_A.\approve(v,\data)/r$ with value $v$ and a \data, if $\Pa(v,\data,\sn_i)=\ttrue$ where $\sn_i$ is the sequence number of the current \approve invocation by $p_i$, then $r=\ap_i$ where $\ap_i$ is a valid AP for $v$ at $\sn_i$ from $p_i$.
    If $\Pa(v,\data,\sn_i)=\ffalse$ then $r=\abort$.
    
    

\end{itemize}

\subsection{QAAT Algorithm}\label{sec:algorithm}
Our quasi-anonymous asset-transfer (QAAT) algorithm leverages agreement proofs to guarantee that there can be at most one transfer per sequence number from any process $p_i$ ensuring that  a process cannot spend  the same funds twice.  In addition,  it leverages non-membership proofs to guarantee  that a given transfer is not already in the receiver's accumulator, ensuring that a process cannot redeem the same transfer twice. 
\subsubsection{Setup and initialization of process variables} \label{sec:init-var}

We present in \Cref{alg:aat-init-vars} the initialization of each of the variables maintained by each process $p_i$ in our system: $\bal_i$ (the balance of process $p_i$, only known to $p_i$ and initialized to $\init_i$), $\sn_i$ (the current sequence number of $p_i$, \ie the one for $p_i$'s latest transfer, initialized to 0), $\balcom_i$ and $\balopen_i$ (respectively the commitment and opening of $\bal_i=\init_i$), $T_i$ (the set of transfers details of $p_i$, including debits and credits, initially empty), $A_i$ (a universal accumulator of $T_i$, recording all $p_i$'s debits and credits and initialized to the empty accumulator), and $\ap_i$ (the previous agreement proof of $p_i$, initialized to a valid AP for the initial accumulator and balance commitment of $p_i$).
We assume that the $\bal_i$ and $\sn_i$ variables are of constant size (\eg 64 bits).

\begin{algorithm}[!tb]
\Init{%
$\bal_i \gets \init_i$;
$\sn_i \gets 0$;
$\aangle{\balcom_i, \balopen_i} \gets \comCommit(\init_i)$;
$T_i \gets \varnothing$;
\\
\nonl \hspace{2.1em}
$A_i$ $\gets$ empty universal accumulator of $p_i$; \\
\nonl \hspace{2.1em}
$\ap_i$ $\gets$ initial valid agreement proof for $\aangle{A_i,\balcom_i}$ at $\sn_i=0$ from $p_i$.
} \label{line:aat-init}
\caption{Initialization of the variables of \Cref{alg:aat} (code for $p_i$).}
\label{alg:aat-init-vars}
\end{algorithm}


To initialize all these variables, and in particular $\ap_i$, which involves communication among the processes, we employ the trustless setup procedure presented in \Cref{sec:aat-setup}.

\subsubsection{AP and ZKP predicates}
\IncMargin{.5em}

\begin{algorithm}[!tb]
\caption{AP and ZKP predicates of \Cref{alg:aat}.}
\label{alg:pred}
\setcounter{AlgoLine}{\getrefnumber{line:aat-init}}
\Init{%
\begin{tabular}[t]{@{}l@{ }c@{ }c@{ }l}
\AccountUpdate &$\gets$& AP & object setup with \Pa;\\
\TransferValidity &$\gets$& ZKP & object setup with \Pzk.
\end{tabular}
} \label{line:pred-init}
\AlgoSkip

\Predicate{$\Pa(\aangle{A'_\prover,\balcom'_\prover},\langle\ap_\prover,\zkp_\prover,\olddata_\prover\rangle,\sn_\prover)$\label{line:pa-data}%
}{%
    \SpacedCommentLine{Unpacking data about the prover's preceding state}
    $\aangle{A_\prover,\balcom_\prover} \gets \olddata_\prover$; \label{line:pa-olddata-unwrap} \\
    
    $\pubdata_\prover\gets \aangle{A_\prover,\balcom_\prover} \oplus\aangle{A'_\prover,\balcom'_\prover,\sn_\prover}$; \label{line:pa-pubdata-wrap}

    \SpacedCommentLine{$\zkp_\prover$ proves the knowledge of a valid transfer 
    fulfilling the ZKP predicate \Pzk (\cref{line:Pzk:start})
    from $\aangle{A_\prover,\balcom_\prover}$ to $\aangle{A'_\prover,\balcom'_\prover}$
    }
    \Assert $\TransferValidity.\zkpverify(\zkp_\prover,\pubdata_\prover)$;\label{line:pa-check-zkp}

    \SpacedCommentLine{If this is the prover's first transfer, its preceding  accumulator $A_\prover$ is empty}
    \lIf{$\sn_\prover = 1$}{
      \Assert $\accIsEmpty(A_\prover)$; 
    } \label{line:pa-check-first-trf}

    \SpacedCommentLine{The prover's preceding AP $\ap_\prover$ is valid at sequence number $\sn_\prover-1$}
    \Assert $\AccountUpdate.\apverify(\ap_\prover, \aangle{A_\prover,\balcom_\prover}, \sn_\prover-1, \prover)$. \label{line:pa-check-ap}
}
\AlgoSkip

\Predicate{$\Pzk(\pubdata,\secdata)$\label{line:Pzk:start}}{%
    \SpacedCommentLine{Unpacking public and private data}
    $\aangle{A_\prover, \balcom_\prover, A'_\prover, \balcom'_\prover, \sn_\prover} \gets \pubdata$; \label{line:pzk-pubdata-unwrap}
    
    {\setlength{\thinmuskip}{2mu}\setlength{\thickmuskip}{1mu}
    $\aangle{A'_\snd, \balcom_\snd, \ap_\snd, \mp_\snd, \tx, \bal_\prover, \balopen_\prover,$
    $ \balopen'_\prover, \nmp_\prover}\gets\secdata;\hspace{-5em}$\label{line:pzk-secdata-unwrap}}

    \SpacedCommentLine{The public commitment for balance should match the private data}
    \Assert $\comVerify$$(\balcom_\prover,\bal_\prover,\balopen_\prover)$;\label{line:pzk-check-com-bal}
    
    \label{line:pzk-check-com-trf}

    \SpacedCommentLine{The prover's accumulator has received the transfer $\tx$, $A'_\prover=A_\prover\uplus\{\tx\}$}
    \Assert $\accVerifyNonMem(A_\prover,\tx,\nmp_\prover) \land \accAdd(A_\prover,\tx) = A'_\prover$;\label{line:pzk-check-nonmem-trf}
    \label{line:pzk-check-new-acc}
    
    \SpacedCommentLine{The prover's balance has been properly updated}
    $\aangle{\snd, v, \rcv, \sn_\snd} \gets \tx$; \Assert $v \geq 0$;\label{line:pkz-start-check-balance}\label{line:pzk-trf-unwrap}

    \SpacedCommentLine{Trfs of sending APs are tagged with the AP's s.n. / a sending process has sufficient funds}
    \lIf{$\prover = \snd$}{\Assert $\sn_\snd = \sn_\prover \wedge \bal_\prover \geq v$;\label{line:pzk-sending-transfer-same-sn}}

    \lIf{$\prover=\snd=\rcv$}{\Assert $\comVerify$$(\balcom'_\prover, \bal_\prover, \balopen_\prover')$;} \label{line:pzk-check-snd-is-rcv}

    \lElseIf{$\prover = \snd$}{\Assert $\comVerify$$(\balcom'_\prover, \bal_\prover-v, \balopen'_\prover);
    \hspace{-3em}$}\label{line:pzk-check-pvr-is-snd}
    \ElseIf{$\prover = \rcv$}{%
        \Assert $\comVerify$$(\balcom'_\prover, \bal_\prover+v, \balopen'_\prover)$; \label{line:pkz-end-check-balance}
        
        \SpacedCommentLine{A receiving prover implies a valid corresp. sending AP $\ap_\snd$ from the sender}
        \Assert $\accVerifyMem$$(A'_\snd, \tx, \mp_\snd)$; \label{line:pzk-valid-sending-AP-exists-start}
        
        \Assert $\AccountUpdate.\apverify(\ap_\snd, \aangle{A'_\snd, \balcom_\snd},$
        $ \sn_\snd, \snd)$; \label{line:pzk-valid-sending-AP-exists-end}
    } 
    
    \lElse{\return \ffalse.} \label{line:pzk-ret-false}
} \label{line:pred-end}
\end{algorithm}

Our system relies on an AP object \AccountUpdate, and a ZKP object \TransferValidity, both shown in \Cref{alg:pred}. They are respectively set up with the AP predicate \Pa and the ZKP predicate \Pzk on \cref{line:pred-init}.
A process creates new AP and ZKP objects each time it sends or receives an asset transfer.
Hence, in these predicates, the prover can either be the sender or receiver of the asset transfer at hand.
We also assume that these predicates have access to the identity of the prover, denoted \prover. 
In the following, \snd and \rcv refer to the sender and receiver (resp.).
The statement \Assert $B$, where $B$ evaluates to a Boolean, is a shortcut for ``\textbf{if} $\neg B$ \textbf{then} \return \ffalse''.
Predicates return \ttrue by default.

\mparagraph{Predicate ${\bm{P_A}}$.}
The \Pa predicate takes 3 parameters: the payload $v$ (used both for the AP generation and verification), the \data (used only for the AP generation), and the current sequence number of the prover $\sn_\prover$.
The $v$ parameter contains the new accumulator of the prover, $A'_\prover$, as well as the commitment to its new balance, $\balcom'_\prover$.
The \data parameter contains the previous AP of the prover, $\ap_\prover$, the current ZKP of the prover, $\zkp_\prover$, and some data $\olddata_\prover$ about the state of the prover before the transfer.
This $\olddata_\prover$ contains the prover's preceding accumulator $A_\prover$ and a commitment to its previous balance $\balcom_\prover$.

\Pa starts with some unpacking (\cref{line:pa-olddata-unwrap}), and tuple concatenation (operation $\oplus$) to obtain the public data $\pubdata_\prover$ describing the current transfer (\cref{line:pa-pubdata-wrap}). $\pubdata_\prover$ encompasses the prover's accumulator before applying the transfer $A_\prover$, a commitment $\balcom_\prover$ to the prover's old balance, the prover's accumulator after the transfer $A_\prover'$, and a commitment $\balcom_\prover'$ to the prover's new balance.
\Pa then checks that the ZKP $\zkp_\prover$ is valid (\cref{line:pa-check-zkp}) using the public data describing the transfer.
Then, if this is the first transfer of the prover (debit or credit), \Pa checks that the prover's preceding accumulator is empty (\cref{line:pa-check-first-trf}). 
Finally, \Pa verifies the prover's preceding AP (\cref{line:pa-check-ap}).

\mparagraph{Predicate ${\bm{P_\mathit{ZK}}}$.}
The \Pzk predicate takes two parameters as input: \pubdata and \secdata. The former, \pubdata, consists of the tuple constructed at \cref{line:pa-pubdata-wrap}.
The latter, \secdata, consists of the sender's accumulator $A_\snd$ (recall that the sender is not always the prover), a commitment $\balcom_\snd$ of the sender's balance, the sender's AP $\ap_\snd$, the sender's membership proof $\mp_\snd$ that the transfer is in $A_\snd$ (notice that all the previous parameters are equal to $\bot$ if the prover is not the receiver as in this case there are not used in the body of \Pzk), the transfer details $\tx$, the prover's opening of the transfer $\balopen_\prover$, the prover's balance $\bal_\prover$, commitments of the prover's balance before and after applying the transfer $\bal_\prover$ and $\bal'_\prover$ (resp.), and the prover's non-membership proof $\nmp_\prover$ that the transfer was not already in $A_\prover$ (\cref{line:pzk-secdata-unwrap}).

All the checks of \Pzk are done in zero-knowledge, without divulging any data to the verifier(s).
\Pzk first checks that the commitments to the prover's balance and to the prover's transfer indeed open respectively to its balance (\cref{line:pzk-check-com-bal}).
\Pzk then checks that the prover's accumulator has been correctly updated, \ie that the transfer did not already belong to the prover's old accumulator, and that the prover's new accumulator can be obtained by adding the transfer to its old accumulator (\cref{line:pzk-check-nonmem-trf,line:pzk-check-new-acc}).
Finally, the \Pzk predicate verifies two properties of the transfer: \emph{(i)} that the prover's balance has been updated according to the transfer's information in $\tx$ (\crefrange{line:pkz-start-check-balance}{line:pkz-end-check-balance}), and \emph{(ii)}, in case the prover is the receiver, that the passed AP $\ap_\snd$ is valid and matches $\tx$ on the sender's side (\crefrange{line:pzk-valid-sending-AP-exists-start}{line:pzk-valid-sending-AP-exists-end}).
In more detail, \Pzk first verifies that if the prover is a sender, the transfer $\tx$ is stamped with the same sequence number $\sn_\prover$ as that of the prover's current AP (\cref{line:pzk-sending-transfer-same-sn}, $\sn_\prover$ is passed as a parameter to \Pzk from \Pa at \cref{line:pa-check-zkp}).
Then, if the prover, sender and receiver are the same, \Pzk checks that the prover's balance remains unchanged (\cref{line:pzk-check-snd-is-rcv})\footnote{
    This condition is needed to support empty transfers, which are used to enhance the sender anonymity of the system.
}.
Otherwise, if the prover is only the sender, \Pzk{} ensures the transfer amount has been subtracted from the prover's balance, and that the prover had enough funds to perform the transfer (\cref{line:pzk-check-pvr-is-snd}).
Finally, if the prover is only the receiver, then \Pzk verifies that the new prover's balance was obtained by adding the transfer's amount to its old balance (\cref{line:pkz-end-check-balance}), that the sender's accumulator contains the transfer (\cref{line:pzk-valid-sending-AP-exists-start}), and that the sender's AP is valid (\cref{line:pzk-valid-sending-AP-exists-end}).
If the prover is neither the sender nor the receiver of $\tx$, \Pzk returns \ffalse (\cref{line:pzk-ret-false}).

\subsubsection{Algorithm} \label{sec:qaat-algo}
\Cref{alg:aat} presents the code of our QAAT implementation for a process $p_i$.
The \balancee operation simply returns the value of $\bal_i$ (\cref{line:aat-blc}).
In the \transfer operation, $p_i$ first checks it has enough funds (\cref{line:aat-trf-check-bal}).
Then $p_i$ creates the transfer details $\tx$ with its next sequence number (\cref{line:aat-trf-create-trf}), processes its own transfer using the \prove internal operation (\cref{line:aat-trf-process}), creates a membership proof $\mp_i$ of the transfer in its accumulator (\cref{line:aat-trf-prove-mem}), sends all the necessary information to the receiver in a \transferm message (\cref{line:aat-trf-send}) and returns \commit (\cref{line:aat-trf-ret-commit}).
When $p_i$ receives a \transferm message, it processes the transfer using the \prove internal operation if it is the transfer receiver, and if the sender's AP and membership proof are valid (\cref{line:aat-rcv-trf-process}).

\IncMargin{.5em}

\begin{algorithm}[!tb]
\setcounter{AlgoLine}{\getrefnumber{line:pred-end}}
\newcommand{\llIf}[2]{{\let\par\relax\lIf{#1}{#2}}}

\lOperation{$\balancee()$}{\return $\bal_i$.} \label{line:aat-blc}
\AlgoSkip

\Operation{$\transfer(v,j)$}{
    \lIf{$v<0 \;\lor\; v > \bal_i$}{\return \abort;} \label{line:aat-trf-check-bal}
    
    $\tx \gets \aangle{i,v,j,\sn_i+1}$; \label{line:aat-trf-create-trf} \\
    
    $\prove(\tx,\bot,\bot,\bot,\bot)$; \label{line:aat-trf-process}
    \Comment*{Producing ZKP and Agreement Proof for $\tx$}
    
    $\mp_i \gets \accProveMem(A_i,T_i,\tx)$; \label{line:aat-trf-prove-mem} \Comment*{Membership proof that $\tau$ is now in $A_i$}
    
    \rasend $\transferm(\tx,\ap_i,A_i,\mp_i,\balcom_i)$ to $p_j$; \label{line:aat-trf-send} \Comment*{\rasend is receiver-anonymous}
    
    \return \commit. \label{line:aat-trf-ret-commit}
}
\AlgoSkip

\When{$\textup{\transferm}(\tx,\ap_j,A_j,\mp_j,\balcom_j)$ {\bf is} \rareceived\ {\bf from} $p_j$}{%
    \If{$\left\{\begin{array}{@{}l@{}l@{}}\text{$\tx$.$\rcv$}=i \land \accVerifyMem(A_j,\tx,\mp_j) \land\\ \AccountUpdate.\apverify(\ap_j, \aangle{A_j,\balcom_j}, \tx.\sn, j)\\ \end{array}\right\}$}{ \label{line:aat-rcv-trf-cond}
        $\prove(\tx,A_j,\balcom_j,\ap_j,\mp_j)$. \label{line:aat-rcv-trf-process}
    }
}
\AlgoSkip

\InternalOperation{$\prove(\tx, A_j,\balcom_j,\ap_j,\mp_j)$}{\label{line:aat-prove-start}
    
    \SpacedCommentLine{Computing new balance with corresponding commitment and opening}
    $\bal'_i \gets \bal_i$;\label{line:aat-process-update-bal-start}
    \llIf{$i = \tx.\snd$}{$\bal'_i \gets \bal'_i-\tx.v$} ; 
    \lIf{$i = \tx.\rcv$}{$\bal'_i \gets \bal'_i+\tx.v$}\label{line:aat-process-update-bal} \label{line:aat-process-update-bal-end}
    
    $\aangle{\balcom'_i,\balopen'_i} \gets \comCommit$$(\bal'_i)$; \label{line:aat-process-com-bal} \Comment*{New commitment and opening for $\bal'_i$}

    $\nmp_i \gets \accProveNonMem(A_i,T_i,\tx)$; \label{line:aat-process-prove-nonmem}\Comment*{Non-membership proof, $\tau$ not in $A_i$}

    $A'_i \gets \accAdd(A_i,\tx)$; \label{line:aat-zkp-start}\label{line:aat-process-add-trf} \Comment*{Adding new transfer to accumulator}
    \SpacedCommentLine{Constructing ZK proof that transfer is valid}
    \ft{$\olddata \gets \aangle{A_i, \balcom_i}$}; \label{line:aat-process-create-olddata} \\

    $\pubdata \gets \ft{\olddata \oplus \aangle{A'_i, \balcom'_i, \sn_i+1}}$; \label{line:aat-process-create-pubdata} \\
    
    $\secdata \gets \aangle{A_j, \balcom_j, \ap_j,\mp_j, \tx, \bal_i, \balopen_i, \balopen'_i, \nmp_i}$; \label{line:aat-process-create-secdata} \\
    
    $\zkp \gets \TransferValidity.\zkpprove(\pubdata, \secdata)$; \label{line:aat-process-create-zkp}\label{line:aat-zkp-end} \\
    

    \SpacedCommentLine{Obtaining Agreement Proof (AP) on transfer's validity}

    $\data \gets \aangle{\ap_i, \zkp, \ft{\olddata}}$; \label{line:aat-process-create-data}
    
    $\ap_i \gets \AccountUpdate.\approve(\aangle{A'_i,\balcom'_i},\data)$; \label{line:aat-process-create-ap}
    \\

    \SpacedCommentLine{Updating local state}
    
    $\sn_i \gets \sn_i+1$; $A_i \gets A'_i$; $\bal_i \gets \bal'_i$; $\balcom_i \gets \balcom'_i$;
    $\balopen_i \gets \balopen'_i$. \label{line:aat-process-update-vars}
    \label{line:aat-prove-end}
}

\caption{QAAT algorithm (code for $p_i$).}
\label{alg:aat}
\end{algorithm}
\TA{We could simplify the algorithm even further by removing the \accVerifyMem operation: we only need the old accumulator and the new accumulator (after the \accAdd) to verify the membership.}

In the \prove internal operation, $p_i$ first 
computes its new balance $\bal'_i$ depending on whether it is the sender or receiver (\cref{line:aat-process-update-bal-start}),
and commits its new value (\cref{line:aat-process-com-bal}).
Next, $p_i$ proves the non-membership of the transfer $\tx$ in its old accumulator $A_i$ (\cref{line:aat-process-prove-nonmem}) and creates its new accumulator by adding the transfer (\cref{line:aat-process-add-trf}).
Process $p_i$ then constructs the ZK proof $\zkp$ that the transfer $\tx$ is valid 
(\crefrange{line:aat-process-create-olddata}{line:aat-zkp-end}). 
It then creates the public data (\cref{line:aat-process-create-pubdata}) and secret data (\cref{line:aat-process-create-secdata}) of the ZKP, and generates the ZKP (\cref{line:aat-process-create-zkp}).
Next, $p_i$ creates the data of the Agreement Proof (AP) predicate (\cref{line:aat-process-create-data}).
Finally, $p_i$ generates the AP $\ap_i$ (\cref{line:aat-process-create-ap}) before updating its local state (\cref{line:aat-process-update-vars}).


\subsubsection{Intuition of Algorithm~\ref{alg:aat}'s proofs} \label{sec:intuition-proofs-qaat}

In this section, we sketch the high-level intuition behind the correctness of our solution.
The full correctness developments are given in \Cref{sec:proofs-sys}.

\mparagraph{Correctness proof.}
The correctness of our system comes from the fact that it satisfies AT-Sequentiality and AT-Termination.

\begin{restatable}[AT-Sequentiality]{rlemma}{atsequentiality}
\label{lemma:at-sequentiality}
For any global history $H$ capturing an execution of \emph{\Cref{alg:aat}}, there exists a mock history $\mock{H}$ of $H$ that can be AT-sequenced.
\end{restatable}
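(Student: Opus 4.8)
The plan is to produce, for a given execution captured by $H$, a single mock history $\mock{H}$ together with one AT-sequence $\sequence_i$ for each correct process $p_i$. The whole construction is driven by the Agreement-Proof chains: since every call to \prove increments \sn and emits a fresh AP (\Cref{line:aat-process-create-ap,line:aat-process-update-vars}), each process---correct or Byzantine---is associated with a \emph{linear chain} of APs covering all of its processings (sends and credits interleaved), and \textbf{AP-Agreement} guarantees this chain is unique, carrying at most one value per sequence number. I would use these chains first to pin down the mock history, and then to linearise transfers into each $\sequence_i$.

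\textbf{Step 1 (mock history).} For a correct $p_i$ I keep $\mock{L}_i=L_i$. For a Byzantine $p_k$ I reconstruct $\mock{L}_k$ from the valid APs of $p_k$ that are (transitively) referenced by transfers appearing in correct processes' executions. The predicate \Pa forces the AP at $\sn$ to certify the AP at $\sn-1$ (\Cref{line:pa-check-ap}), so these APs form a genuine chain rooted at the genesis AP, and \textbf{AP-Agreement} makes it unambiguous; \textbf{AP-Knowledge-Soundness} then guarantees that every link is backed by \data satisfying \Pa, hence by a well-formed transfer \tx. Reading the sends off this chain yields a finite sequence $\mock{L}_k$ of \transfer invocations. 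The receiver branch of \Pzk (\Crefrange{line:pzk-valid-sending-AP-exists-start}{line:pzk-valid-sending-AP-exists-end}) further shows that any credit inside any chain---even a Byzantine one---is itself backed by a valid sending AP, so the reconstructed histories are mutually consistent.

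\textbf{Step 2 (causal order and sequences).} Each AP chain is read as a process's processing order; on top of the per-process orders I superimpose the causal dependency ``the send of \tx precedes every credit of \tx'', valid because a receiver applies \tx only after verifying the sender's AP (\Cref{line:pzk-valid-sending-AP-exists-end}). Treating sends and credits as distinct events, this relation is consistent with the real-time order of the run and is therefore acyclic. For each correct $p_i$ I build $\sequence_i$ by ordering the $p_i$-involving transfers in $p_i$'s own processing order---sends at emission, credits at the moment $p_i$ applies them---inserting $p_i$'s \balancee invocations at the matching points (so that $\rightarrow_i\subseteq\rightarrow_{\sequence_i}$), and interleaving the remaining transfers along a linear extension of the causal order. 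The ZKP invariants then discharge the two AT-sequence conditions: \Cref{line:pzk-check-com-bal} ties the committed balance to the value returned by \balancee, giving Condition~\ref{at-seq:balance:inv}, the funds test of \Cref{line:pzk-sending-transfer-same-sn,line:pzk-check-pvr-is-snd} gives Condition~\ref{at-seq:positive:account}, and the accumulator non-membership test (\Cref{line:pzk-check-nonmem-trf}) forbids crediting the same transfer twice, so that $\total(i,\cdot)$ tracks $\bal_i$ exactly.

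\textbf{Main obstacle.} The delicate point is satisfying Condition~\ref{at-seq:positive:account} for \emph{every} sender---correct and Byzantine---\emph{simultaneously} with the exact balance Condition~\ref{at-seq:balance:inv} for $p_i$, inside one linear order. Each transfer is a debit for its sender and a credit for its receiver, and these two effects are applied at different instants, so it cannot be positioned by a single timestamp without jeopardising one role. The resource I would exploit is the balance-chain invariant carried by the APs: along any chain the running balance stays non-negative and each send is covered (\Cref{line:pzk-sending-transfer-same-sn,line:pzk-check-pvr-is-snd}), while the non-membership and sender-AP checks (\Cref{line:pzk-check-nonmem-trf,line:pzk-valid-sending-AP-exists-end}) ensure every credit is backed by exactly one genuine send. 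I would then show that the causal order of Step~2 admits a linear extension in which, for each sender, the credits it had already applied precede each of its sends, while $p_i$'s balance queries are cut at the exact prefix matching $\bal_i$; proving that these requirements are jointly realisable---where the acyclicity established in Step~2 is essential---is the technical heart of the argument.
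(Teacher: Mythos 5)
Your overall strategy is the one the paper follows (reconstruct Byzantine behaviour from the AP/UA chains, close causally, define an order, take a topological sort, and discharge the two AT-sequence conditions via the ZKP invariants), but two of your steps fail as stated. First, your mock history admits \emph{phantom} transfers. You read \emph{all} sends off each reconstructed Byzantine chain, whereas the paper (\Cref{def:mock:H}) admits a transfer of a Byzantine $p_k$ into $\mock{L}_k$ only if \emph{both} its sending UA \emph{and} a matching receiving UA lie in the causal closure of the UAs issued by correct processes. The difference is not cosmetic: a Byzantine $p_k$ can obtain a sending AP for a transfer $\tx$ of positive amount destined to a correct $p_i$, never send the corresponding \transferm message, and later perform a transfer to a correct $p_j$ that $p_j$ does apply. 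Since \Pa chains every AP to its predecessor, your reconstructed chain for $p_k$ then contains $\tx$, so $\sequence_i$ must contain a credit to $p_i$ that $p_i$ never applied; every $\balancee_i$ invocation placed after $\tx$ violates Condition~\ref{at-seq:balance:inv}, and if $p_i$ invokes \balancee infinitely often there is no admissible position for $\tx$ at all.

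Second---and this is the heart of the proof---your ordering step proves acyclicity of the wrong relation and defers the part that actually needs proof. Reading ``each AP chain as a process's processing order'' and imposing it on invocations is inconsistent with the exact-balance cut points you also require: if $p_j$ sends $\tx_1$ then $\tx_2$ to $p_i$, and $p_i$ applies $\tx_2$, invokes \balancee (returning a balance including $\tx_2$ but not $\tx_1$), and only then applies $\tx_1$, then $p_j$'s chain order forces $\tx_1$ before $\tx_2$, while Condition~\ref{at-seq:balance:inv} forces $\tx_2$ before the \balancee invocation before $\tx_1$: a cycle. Your acyclicity argument holds only at the level of \emph{events} (send and credit kept distinct); it does not survive the collapse to single invocations, and you explicitly leave that collapse open (``proving that these requirements are jointly realisable \ldots is the technical heart''). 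This is exactly the difficulty the paper resolves by \emph{not} enforcing any local order other than $p_i$'s: its relation $\POATSeqOne$ orders only a process's applied credits before its causally \emph{subsequent} sends (never send-before-send), $\POATSeqTwo$ handles $p_i$'s balance cuts, and a dedicated shortest-cycle argument (\Cref{subremark:one:cup:two:acyclic}) shows the union is acyclic before taking a topological sort. Similarly, the funding Condition~\ref{at-seq:positive:account} does not follow from the chain invariant alone; the paper needs the comparison in \Cref{lemma:transfer} between sequence-prefix totals and AP-chain totals (the prefix's sends form a subset of the chain's sends, and the chain's credits a subset of the prefix's credits), which your sketch gestures at but never carries out.
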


\begin{proofsketch}
The proof first constructs the mock history $\mock{H}$ by starting from the UAs linked to the APs generated by correct processes, and then recursively traversing these UAs and APs using the predicates $\Pa$ and $\Pzk$ to uncover UAs issued by Byzantine processes that are causally linked to the operations of correct processes. Each UA yields information on a transfer invocation at a given sequence number, which we then use to construct the mock local execution $\mock{L}_j$ of each Byzantine process $p_j$.

Given a correct process $p_i$, we then construct a partial $\POATSeq$ order on $\mathcal{S}_i=\supportSet(\mock{L}_i)\cup\bigcup_{j\neq i}\transferSet(\mock{L}_j)$. This construction is somewhat technical for two reasons: (1) Transfers might not be received in the order they were issued (in particular by $p_i$), which implies $\POATSeq$ should not enforce any local process order $\rightarrow_j$ other than that of $p_i$. (2) Simultaneously, $\POATSeq$ should be constraining enough to guarantee that \balancee invocations by $p_i$ exactly reflect previous transfers (Property n.\ref{at-seq:balance:inv} of an AT-sequence in \Cref{sec:amt-spec}) and that each transfer is backed by sufficient funds (Property n.\ref{at-seq:positive:account}). We construct $\POATSeq$ incrementally using two binary relations: $\POATSeqOne$ that captures $p_i$'s local order and ensures that all transfers are sufficiently funded, and $\POATSeqTwo$ to guarantee \balancee invocations by $p_i$ can be properly explained. Part of the proof focuses on the acyclicity of $\POATSeqOne\cup\POATSeqTwo$, so that $\POATSeq$ can be defined as the transitive closure $(\POATSeqOne\cup\POATSeqTwo)^{+}$. Finally, we chose $\sequence_i$ as a topological sort of $(\mathcal{S}_i,\POATSeq)$ and prove it fulfills the two properties of AT-Sequences.
(Full derivations in \Cref{sec:proof-at}.)
\end{proofsketch}

\begin{restatable}[AT-Termination]{rlemma}{attermination}
\label{lemma:at-termination}
All operation invocations of \emph{\Cref{alg:aat}} (\balancee and \transfer) terminate for correct processes. \proofATinappendix
\end{restatable}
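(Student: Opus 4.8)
The plan is to prove AT-Termination (Lemma 2) by showing that each of the two operations---$\balancee$ and $\transfer$---completes in a finite number of steps for any correct process. Since \Cref{alg:aat} contains no unbounded loops or explicit \wait statements in these operations, the key is to verify that every line either performs a bounded local computation or invokes a sub-operation that is itself guaranteed to terminate. The structure of the argument is therefore a line-by-line termination audit that reduces the termination of the QAAT operations to the termination guarantees of the underlying building blocks (commitments, universal accumulators, ZKPs, and agreement proofs).

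\textbf{Proof approach.} First I would dispatch $\balancee_i()$ immediately: it consists of a single return of the local variable $\bal_i$ (\cref{line:aat-blc}), so it trivially terminates. For $\transfer_i(v,j)$, I would walk through \crefrange{line:aat-trf-check-bal}{line:aat-trf-ret-commit}: the fund check is a local comparison, constructing $\tx$ is a local tuple assignment, and the remaining steps delegate to $\prove$, to $\accProveMem$, and to $\rasend$. The crucial observation is that $\transfer$ performs \emph{no} blocking wait on incoming messages---it sends the \transferm message and returns \commit without awaiting acknowledgment. Thus its termination reduces to the termination of $\prove$, of $\accProveMem$, and of the \rasend primitive (which is a non-blocking send by the network model of \Cref{sec:model}).

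\textbf{The core of the argument} is to establish that the internal operation $\prove$ (\crefrange{line:aat-prove-start}{line:aat-prove-end}) terminates. Here I would appeal to the termination guarantees of the cryptographic building blocks of \Cref{sec:techniques}: $\comCommit$, $\accProveNonMem$, $\accAdd$, and the tuple packings are each bounded local computations, and the ZKP generation $\TransferValidity.\zkpprove$ is assumed to terminate. The single non-trivial step is the call $\AccountUpdate.\approve$ (\cref{line:aat-process-create-ap}), which is a \emph{distributed} operation and is the only place where inter-process communication is genuinely required. For this I would invoke the \textbf{AP-Termination} property of the agreement proof scheme (stated in \Cref{sec:agreement-proof}): provided the predicate $\Pa$ evaluates to \ttrue on the supplied arguments, $\approve$ returns a valid AP (and otherwise returns \abort); in either case it terminates. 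A small subtlety to address is that the \When\ callback handling received \transferm messages also calls $\prove$ (\cref{line:aat-rcv-trf-process}), so the same termination analysis of $\prove$ covers the receive side as well; here the guards $\accVerifyMem$ and $\apverify$ on \cref{line:aat-rcv-trf-cond} are bounded verification calls.

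\textbf{The main obstacle} I anticipate is justifying that the distributed $\approve$ invocation inside $\prove$ terminates in the asynchronous Byzantine setting. This cannot be resolved purely locally: it hinges on the liveness of the AP implementation of \Cref{sec:ap-implem}, which presumably requires waiting for a quorum of $n-t$ responses under the $t<n/3$ assumption. I would therefore reduce this obligation cleanly to \textbf{AP-Termination} rather than re-proving it, making explicit that the correctness of $\prove$'s termination is \emph{conditional} on the $\Pa$ predicate being satisfiable by the arguments a correct process supplies---which in turn follows because a correct $p_i$ constructs $\pubdata$, $\secdata$, and $\data$ precisely so that $\Pzk$ and hence $\Pa$ evaluate to \ttrue. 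A secondary point worth flagging is ensuring no hidden blocking occurs: I would confirm that neither operation waits on a matching acknowledgment or on a reply from the receiver, so that a correct sender's $\transfer$ terminates even if the receiver is Byzantine or slow. With these reductions in place, AT-Termination follows by composing the termination of each bounded local step with AP-Termination for the single distributed call.
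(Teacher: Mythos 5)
Your proposal is correct and follows essentially the same route as the paper's proof: \balancee terminates trivially, every other step of \transfer (and of \prove) is a bounded local computation or a non-blocking send, and the only genuinely blocking step is the $\AccountUpdate.\approve$ call at \cref{line:aat-process-create-ap}, whose termination is delegated to AP-Termination. Your additional observations (the receive-side callback, and the fact that AP-Termination guarantees termination whether $\Pa$ holds or not) are sound refinements of the same argument rather than a different approach.
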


\TA{We have the space to move back the proof here, it is very short}

\renewcommand{\proofATinappendix}{}

\mparagraph{Quasi-anonymity proof.}
Intuitively, our system is quasi-anonymous because it uses cryptographic schemes that do not leak sensitive data (\ie commitments, universal accumulators, and zero-knowledge proofs).
We prove the following lemmas in \Cref{sec:proof-qaat}.


\begin{restatable}[QAAT-Receiver-Anonymity]{rlemma}{qaatrcvanonymity}
\label{lem:qaat-rcv-anonymity}
For any global history $H$ capturing an execution of $\mathcal{A}$, for any invocation $\transfer_i(\star,j)$ contained in $H$ where $p_i$ and $p_j$ are correct processes, then, with high probability, $p_j$ is indistinguishable among all the correct active processes of $H$.
\proofQAATinappendix
\end{restatable}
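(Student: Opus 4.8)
The plan is to show that the adversary's view is computationally independent of which correct active process plays the role of the receiver $p_j$, so that $p_j$ is hidden among all the correct active processes of $H$. Since \Adv is PPT and, by the model of \Cref{sec:model}, only observes the clear-text traffic (\send, \broadcast, and \receivedd messages) and never the \rasend/\rareceive channel, I would first isolate the three places where the receiver identity $j$ can enter \Adv's view: \emph{(a)} the \transferm message carrying $\tx=\aangle{i,v,j,\sn_i+1}$, which $p_i$ sends to $p_j$ via \rasend on \cref{line:aat-trf-send}; \emph{(b)} the public communication triggered when the \emph{sender} $p_i$ generates its agreement proof $\ap_i$ on the payload $\aangle{A'_i,\balcom'_i}$; and \emph{(c)} the public communication triggered when the \emph{receiver} $p_j$ later generates its own agreement proof while processing the transfer in the \textbf{when} handler of \Cref{alg:aat}.

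For \emph{(a)}, properties \emph{(i)}--\emph{(ii)} of \rasend guarantee that an eavesdropping adversary learns neither the content $\tx$ nor the destination $p_j$, so this channel contributes nothing about $j$. For \emph{(b)}, the only data exposed during the sender's AP generation are the payload $\aangle{A'_i,\balcom'_i}$, the sequence number $\sn_i$, the prover identity $i$, and the proof $\zkp$ verified inside \Pa on \cref{line:pa-check-zkp}; I would show each is independent of $j$ up to a negligible advantage: $A'_i=\accAdd(A_i,\tx)$ leaks nothing about the accumulated element $\tx$ (hence nothing about $j$) by the indistinguishability of the UA scheme, $\balcom'_i$ is a commitment to $\bal_i-v$ that does not even syntactically depend on $j$ and is hiding, and $\zkp$ reveals nothing about \secdata (which contains $\tx$) by the zero-knowledge property of \Pzk. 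For \emph{(c)}, the decisive observation is that a receiving process performs exactly the same publicly-observable steps as any other active process generating an AP (a sender, or a process issuing an empty self-transfer with $\snd=\rcv$, cf.\ \cref{line:pzk-check-snd-is-rcv}): it advances its sequence number and publishes a fresh AP on an indistinguishable payload. Thus the act of receiving does not single $p_j$ out among the correct active processes.

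To make this rigorous I would set up a sequence of hybrids over the whole execution transcript. Starting from the real view, I would successively \emph{(1)} drop every \rasend message using the fact that it is unobservable, \emph{(2)} replace every $\zkp$ by a simulated proof from the zero-knowledge simulator, \emph{(3)} replace every balance commitment by a commitment to a fixed dummy value using the hiding property, and \emph{(4)} replace every accumulator and (non-)membership proof by simulated ones independent of the accumulated set using UA-indistinguishability. Each step changes \Adv's view by at most a negligible $\epsilon(\lambda)$, by a standard reduction in which a distinguisher between two adjacent hybrids yields an adversary against the corresponding building-block property of \Cref{sec:techniques}. In the final hybrid, the only remaining dependence on the receiver was through the objects just removed, so the transcript seen by \Adv is information-theoretically independent of which correct active process received the transfer. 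Consequently \Adv's posterior over the receiver is uniform on the correct active processes of $H$, and any PPT \Adv in the real execution does no better than $1/k + \epsilon(\lambda)$, where $k$ is the number of correct active processes; this is exactly the claimed indistinguishability with high probability.

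I expect the main obstacle to be step \emph{(c)}: arguing precisely that the receiver's observable AP generation cannot be correlated with $p_i$'s transfer. Two points need care. First, the reduction must produce, for \emph{each} candidate receiver $p_{j'}$ among the correct active processes, a globally consistent execution (a valid mock history whose correct local histories match) that is indistinguishable from the real one---essentially a ``swap'' of the receiver from $j$ to $j'$ that remains well-formed; here the empty-transfer mechanism and the fact that every active process routinely emits APs supply the cover that makes the transcripts line up. Second, since \Adv also controls scheduling (the distributed aspect), I must ensure no timing or causal correlation between \cref{line:aat-trf-send} and the receiver's handler survives, which follows because \rasend deliveries are anonymous and may be delayed arbitrarily under asynchrony, decoupling the sender's and receiver's public AP generations. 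Discharging these two points, rather than the per-primitive reductions (which are routine), is where the real work lies.
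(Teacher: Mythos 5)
Your proposal is correct and follows essentially the same route as the paper's proof: you decompose the adversary's view into the \rasend-protected \transferm message (dismissed by the receiver-anonymity of \rasend) and the publicly observable AP/accumulator-update traffic, to which you apply exactly the three properties the paper invokes---C-Hiding for $\balcom_i,\balcom'_i$, UA-Indistinguishability for $A_i,A'_i$, and ZKP-Zero-Knowledge for $\zkp$---together with the observation that a receiver's AP generation (and its sequence number) is observationally identical to that of any other active process. The only difference is one of rigor: your hybrid-argument scaffolding and your flagged concerns about scheduling/timing correlations formalize and make explicit what the paper's much more informal proof simply asserts (indeed the paper itself only acknowledges timing attacks later, in its discussion of full anonymity in \Cref{sec:enhancements}, without addressing them in this proof).
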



\begin{restatable}[QAAT-Confidentiality]{rlemma}{qaatconfidentiality}
\label{lem:qaat-confidentiality}
For any global history $H$ capturing an execution of $\mathcal{A}$, for any invocation $\transfer_i(v,j)$ contained in $H$ where $p_i$ and $p_j$ are correct processes, then, with high probability, $v$ is indistinguishable among any value in $\mathbb{R}^+$.
\proofQAATinappendix
\end{restatable}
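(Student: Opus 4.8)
The plan is to show that the amount $v$ of a transfer between two correct processes $p_i$ and $p_j$ never appears in the clear in any message the adversary \Adv can observe, and that wherever $v$ does influence a publicly observable artifact, it is concealed by the security property of a cryptographic scheme. First I would pin down \Adv's exact view. By the model, \Adv is PPT and only sees messages sent through the clear-text primitives \send, \broadcast, and \receivedd; it never reads the \transferm message, which carries $\tx=\aangle{i,v,j,\sn_i+1}$ and travels through \rasend. Hence the only way $v$ can surface publicly is through the messages the AP object \AccountUpdate must exchange when a correct process runs \approve inside \prove. Inspecting \Cref{alg:aat} and \Cref{alg:pred}, the data a validator needs to evaluate \Pa, and which therefore transits over public channels, is the payload $\aangle{A'_i,\balcom'_i}$, the sequence number $\sn_i$, the previous AP $\ap_i$, the proof $\zkp$, and $\olddata=\aangle{A_i,\balcom_i}$. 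The symmetric statement holds on the receiver's side when $p_j$ later runs \prove.

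Next I would trace every entry point of $v$ into these artifacts. The amount enters (i) the accumulator $A'_i=\accAdd(A_i,\tx)$, since $\tx$ contains $v$; (ii) the balance commitment $\balcom'_i$, which commits to $\bal_i-v$ on the sender side (resp. $\bal_i+v$ on the receiver side); and (iii) the witness \secdata of the proof $\zkp$, which contains $\tx$, the balance, and the openings. Crucially, $\sn_i$ and the structural shape of all these objects are independent of $v$. I would then invoke the three hiding guarantees recalled in \Cref{sec:techniques}: the \emph{indistinguishability} of the universal accumulator ensures $A'_i$ and its (non-)membership proofs leak nothing about the accumulated set, hence nothing about $v$; the \emph{hiding} property of the commitment ensures $\balcom'_i$ and $\balcom_i$ leak nothing about the committed balance, hence nothing about $v$; and the \emph{zero-knowledge} property ensures $\zkp$ leaks nothing beyond the truth of \Pzk, hence nothing about its witness and in particular nothing about $v$.

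To turn these per-scheme guarantees into a single statement about \Adv's advantage, I would use a hybrid / simulation argument. Starting from \Adv's real view for the transfer of amount $v$, I would replace, one scheme at a time, the real accumulator by a simulated one, the real commitments by commitments drawn from the hiding game, and the real proof by an output of the ZKP simulator; each substitution is computationally indistinguishable to any PPT distinguisher by the corresponding security property. In the final hybrid, \Adv's view is a function of $v$-independent quantities only, so the views induced by any two amounts $v,v'\in\mathbb{R}^+$ are computationally indistinguishable. Hence \Adv cannot guess $v$ with non-negligible advantage, which is exactly QAAT-Confidentiality; the closing remark of \Cref{sec:amt-spec} then yields, as a corollary, the confidentiality of $p_i$'s running balance, since $\init_i$ is known only to $p_i$.

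The hard part will be \emph{composition}, not the individual hiding arguments. A single transfer does not expose $v$ through just one commitment and one accumulator: because the balance is carried forward, $v$ is baked into every subsequent balance commitment and accumulator of $p_i$, and symmetrically into those of $p_j$, across a number of transfers that grows with the execution. The hybrid argument must therefore run over this entire (polynomially long) chain of mutually dependent artifacts on both the sender's and the receiver's sides, invoking each scheme's security in its multi-instance form while keeping the reduction PPT. I would also have to check that \Adv's control of the message schedule and of Byzantine processes on the distributed side does not help it on the cryptographic side; this holds because the quantities to be hidden originate from correct processes and \Adv's cryptographic power is bounded independently of the schedule. Carrying out this composition carefully is where the real work lies.
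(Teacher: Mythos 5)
Your proposal is correct and follows essentially the same route as the paper. The paper's proof of QAAT-Confidentiality is given by reference to its proof of QAAT-Receiver-Anonymity, which does exactly what your first two paragraphs do: it enumerates the two message types of \Cref{alg:aat}, dismisses the sender-to-receiver \transferm message because it travels over \rasend, and then appeals directly to C-Hiding (for $\balcom_i,\balcom'_i$), UA-Indistinguishability (for $A_i,A'_i$), and ZKP-Zero-Knowledge (for $\zkp$) to conclude that the publicly broadcast accumulator-update data leaks nothing about $v$, adding only the observation that the sequence number $\sn_i{+}1$ is amount-independent. Where you differ is in rigor, and in your favor: the paper stops at these per-scheme assertions and never performs the hybrid/simulation argument you describe, nor does it address the composition issue you identify as the hard part --- that $v$ propagates into every subsequent balance commitment, accumulator, and proof of both $p_i$ and $p_j$, so the indistinguishability claim must really be argued over a polynomially long chain of dependent artifacts under multi-instance security. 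Your plan therefore subsumes the paper's argument; the extra work you anticipate is genuinely needed for a fully formal proof and is simply absent from the paper.
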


\renewcommand{\proofQAATinappendix}{}

\mparagraph{Succinctness.}
The overall succinctness of our system stems from the succinctness of our Agreement Proof implementation (\Cref{sec:ap-implem-succinctness}), the constant size of the digests and proofs of RSA accumulators (\Cref{sec:acc-impl}) and of commitments (\Cref{sec:com-impl}), and of the succinctness of Spartan zk-SNARKs (\Cref{sec:zkp-impl}).
\ar{By definition, the proofs of computation produced by the prover of a succinct ZKP scheme are succinctly verified. Therefore, primitives whose verification is encapsulated inside the ZKP predicate only need to be constant size for our scheme to be succinct (although accumulator (non-)membership proof verification and commitment openings can, in fact, be succinct).}
\TA{I don't fully understand the previous sentences}

\mparagraph{Storage of $\bm{O(\lambda+(|T|/n)\log n + n)}$ bits per correct process.}
In our system, each correct process $p_i$ only stores its local transfer details, the sequence number of each of the $n$ processes, and a few cryptographic structures, yielding a storage of $O(\lambda+(|T|/n) \log n + n)$, where $T$ is the set of all transfers in the system.
Let us consider the size of $p_i$'s local variables.
By definition, $\bal_i$ and $\sn_i$ are constant-size (see \Cref{sec:qaat-algo}).
Moreover, $\balcom_i$ and $\balopen_i$ have $O(\lambda)$ bits with a constant size commitment scheme (\Cref{sec:com-impl}), $A_i$ has $O(\lambda)$ bits with the constant size RSA accumulator implementation (\Cref{sec:acc-impl}), and $\ap_i$ has $O(\lambda)$ bits with our implementation of agreement proofs (\Cref{sec:ap-implem}).
Moreover, our AP algorithm stores $O(\lambda+n)$ bits per correct process (\Cref{sec:ap-implem}). 
Finally, the $T_i$ set is of size $O(|T|/n)$ (where $T$ is the set of all transfers of the system) and contains transfer details of size $O(\log n)$ bits (because of process IDs).
This amounts to a total storage cost of $O(\lambda+(|T|/n)\log n + n)$.

\mparagraph{Communication of $\bm{O(n \lambda)}$ bits overall.}
In our system, a $\transfer_i(\star,j)$ invocation by a correct process $p_i$ entails the following communications: \emph{(i)} the $\approve$ invocation by the sender $p_i$ at \cref{line:aat-process-create-ap}, \emph{(ii)} the $\rasend$ of the transfer details from the sender to the receiver at \cref{line:aat-trf-send}, and \emph{(iii)} the $\approve$ invocation by the receiver $p_j$ at \cref{line:aat-process-create-ap} (which is only guaranteed to happen if $p_j$ is correct).
As our Agreement Proof implementation (\Cref{sec:ap-implem}) communicates only $O(\lambda n)$ bits overall (\Cref{sec:ap-implem-comm}), \emph{(i)} and \emph{(iii)} incur an overall communication of $O(\lambda n)$.
For \emph{(ii)}, the receiver-anonymous $\rasend$ operation can be implemented with an overall communication cost of $O(\lambda n)$, as discussed in \Cref{sec:model}.
As a result, our system's overall communication cost is $O(\lambda n)$.

\mparagraph{Consensus-freedom.}
This directly comes from our distributed Agreement Proof implementation (\Cref{sec:ap-implem}), which does not rely on strong agreement such as consensus (\Cref{sec:ap-implem-cons-free}). 

\subsubsection{Further enhancements} \label{sec:enhancements}

\mparagraph{Transfer batching.}
One could argue that requiring users to commit an accumulator update to the system each time they send or receive a single transfer is inefficient. To address this problem, we propose aggregating an arbitrary number of transfers into a single accumulator update.
As a result, a user could choose only to declare an accumulator update when making a payment, while cashing all receipts simultaneously when doing so.
To implement transfer batching, we use an optimization method for ZK proofs known as \emph{folding} \cite{KST22} in \Cref{sec:batching}.

\mparagraph{Reducing local storage (public key rotation).}
The QAAT system presented in this article is light, \ie the storage cost per process $p_i$ is in $O(\lambda+(|T|/n)\log n + n)$, where $|T|$ is the set of all transfers (debits and credits) of the system.
This storage cost is justified by $p_i$'s need to store some data whose size is proportional to its entire transfer history, to prove that it is not trying to redeem the same transfer several times.
However, a public key rotation mechanism could be added to remove the need to store past transfers.
When a process $p_i$ rotates its public key, it changes its old public key for a new one, while transferring all its funds to the account associated to the new public key.
Once this is done, $p_i$ can flush its old local data (and, in particular, its accumulator data).
Thus, the process only has to record information concerning this rotating transfer which can be seen as the genesis state of a newly created account, and obtains a storage cost of $O(\lambda + n)$.
However, this mechanism would involve one major technical challenge: a sender has to retrieve the receiver's public key before it can send funds to her.
To achieve this retrieval, the sender can initiate a handshake with the receiver, but due to asynchrony and the fact that the receiver may be faulty, the handshake may never complete, hampering the termination of the transfer.
This method is briefly mentioned by Zef~\cite{BSKD23} as a solution to safely delete the data of unused accounts. To circumvent the incompatibility of its asynchronous model and the handshake, Zef assumes that each transfer is initiated in a synchronous environment outside the system where the receiver transmits its public key to the sender.


\mparagraph{Towards full anonymity.} Our system is not fully anonymous, as it guarantees Receiver Anonymity and Confidentiality, but not that the transfer issuer remains anonymous (\ie Sender Anonymity).
However, remark that our system allows ``empty'' transfers with amount 0 (or transfers to oneself), which do not change any balance.
These empty transfers could be used to obfuscate the traffic of funds from the adversary's point of view, and if they are issued at the right moment, we conjecture that they could preclude (w.h.p.) timing attacks, \ie attacks where an adversary deanonymizes the sender (or receiver) of an asset transfer by observing the timing of messages transiting on the network.
If so, our system would reach full anonymity asymptotically (by adding more empty transfers at some well-chosen moments).
The design of the heuristics to choose the moments to issue empty transfers is left to future work.


\section{Conclusion} \label{sec:conclusion}
This article considered the problem of asynchronous Byzantine-fault-tolerant asset transfer, with the additional constraint of satisfying the properties of quasi-anonymity (\ie no leak of information on the transfers' amounts and receivers), lightness (\ie succinct cryptography and light storage cost), and consensus-freedom (\ie no total order of transfers).
These properties are important for achieving good performance, confidentiality, and user privacy in an asset transfer system.
In this context, this article introduced Quasi-Anonymous Asset Transfer (QAAT), and presented a consensus-free and light QAAT implementation, along with its formal proofs.
To our knowledge, our asset transfer system is not only the first to fulfill the 3 properties, but also the first one to have a $O(\lambda |T|/n)$ storage cost, where $T$ is the set of all transfers of the system.
In addition, the article presented a new distributed abstraction called Agreement Proofs, which captures the notion of distributed agreement in a transferable short-sized proof.

Presently, our asset transfer solution still lacks some capabilities compared to more mature blockchain systems (\eg Sybil resistance or smart contracts) or mainstream payment networks (\eg overdrawn accounts or fraud detection), but we believe that it demonstrates that systems with low verification, storage, and network costs can still guarantee strong privacy features.
Furthermore, we conjecture that our system's scalability can be further enhanced, and in particular, that it can be made permissionless without sacrificing consensus-freedom, by leveraging techniques such as the ones introduced in~\cite{KPPT23}.


\newpage

\newpage

\appendix

\section{Underlying Building Blocks: Full Definitions} \label{sec:techniques-long}

This appendix section presents the full definitions of the cryptographic schemes used in our QAAT system, namely Commitments, Universal Accumulators (UA), and Zero-Knowledge Proofs (ZKP).
More precisely, this section provides the operations and properties of these schemes.

These schemes all require an implicit \textit{setup} operation that takes as input the security parameter $\lambda$ and outputs the public parameters of the scheme, which are known to all processes, including the adversary.
For instance, in a digital signature scheme, public parameters correspond to the set of all the system's public keys.
All the cryptographic schemes that follow implicitly use some public parameters (although these parameters are not exposed explicitly in the specifications and algorithms).
The domains of definition for the input values and output proofs of our schemes are respectively called $\mathbb{D}$ and $\mathbb{F}$ (see \Cref{tab:crypto-in-out-sets}).
Each scheme features a proving algorithm that aims to convince a verifying algorithm of some claim through a proof.
In the following, the proving algorithm is called the prover, while the verifying algorithm is called the verifier.
Formally, both the prover and the verifier are poly-time probabilistic algorithms.
For simplicity, the schemes presented in this section do not have explicit termination properties, as their operations only consist of local mathematical computations.

We further denote by $\epsilon(\lambda)$ a positive number between $0$ and $1$ that can be made arbitrarily small by increasing $\lambda$.

\begin{table*}[hb]
    \centering
    \begin{tabular}{|c|c|}
    \hline
    \textbf{Set} & \textbf{Meaning} \\
    \hhline{|=|=|}   
    \hline
    $\mathbb{D}$ (for ``data'') & Input set (values) of the proving operations of the cryptographic schemes \\
    \hline
    $\mathbb{F}$ (for ``finite'') & Output set (proofs) of the proving operations of the cryptographic schemes \\
    \hline
    \end{tabular}
    \vspace{-.2em}
    \caption{Input and output sets of the cryptographic schemes of \Cref{sec:techniques-long}.} \label{tab:crypto-in-out-sets}
\end{table*}



\subsection{Commitment schemes} \label{sec:commitment}
A commitment scheme is a cryptographic primitive that enables a prover to succinctly commit to a value $v$ in the form a small digest $c$ (the commitment) and compute a proof $o$ (the opening) that the commitment ``opens'' to $v$.
Commitment schemes must be binding, \ie the prover cannot find another value that matches the digest.
Most schemes also have a hiding property, meaning that the digest does not reveal any information about the committed value.
Some schemes allow digests to be combined to perform operations (\eg addition or multiplication) on committed values, such as homomorphic schemes.
There are many variants of commitments to handle different types of data, such as polynomials, vectors, or even functions.
In our system, we use a simple \textit{commitment} scheme (\textit{C}) to a bounded integer.
%

\mparagraph{Commitment operations.}
\begin{itemize}
    
    \item $\comCommit(v)/\langle c,o \rangle$ takes a value $v \in \mathbb{D}$ and outputs the corresponding commitment $c$ and opening $o$.
    
    \item $\comVerify(c,v,o)/r$ outputs $r=\ttrue$ if $o$ is a valid opening for the commitment $c$ and value $v \in \mathbb{D}$, and $r=\ffalse$ otherwise.
\end{itemize}

\mparagraph{Commitment properties.}

\begin{itemize}
    \item \textbf{C-Correctness.} $\Forall v \in \mathbb{D}$, if $\comCommit(v)/\aangle{c,o}$ then $\comVerify$$(c,v,o)/\ttrue$.

    Informally, a commitment created from a value opens to this value (using an opening).
    
    \item \textbf{C-Binding.}
    $\Forall c, o, o' \in \mathbb{F}, \Forall v, v' \in \mathbb{D}: \Pr\big(\comVerify$$(c,v,o)/\ttrue \land \comVerify(c,v',o')/\ttrue \land v' \neq v\big) < \epsilon(\lambda).$
    Informally, a commitment opens to only one value (w.h.p.).
    
    \item \textbf{C-Hiding.}
    $\Forall c_1,c_2 \in \mathbb{F}, \Forall v \in \mathbb{D}, \comCommit(v)/\aangle{c,\star}: |\Pr(c=c_1) - \Pr(c=c_2)| \!<\! \epsilon(\lambda)$.

    Informally, a commitment does not leak any information on its committed value (w.h.p).
\end{itemize}

\subsection{Universal Accumulators (UA)} \label{sec:accumulator} 
An \textit{accumulator} scheme (notion introduced in~\cite{BM93}) produces a short commitment to a set of elements $E$.
A \textit{universal accumulator} (\textit{UA}) is a special kind of accumulator that can prove both the inclusion or non-inclusion of an element in the set by using \textit{membership} or \textit{non-membership proofs}, respectively.
RSA accumulators~\cite{BBF19} are a notable implementation of this scheme.
We assume that the empty accumulator of each process is created during the system's setup phase (see \Cref{sec:aat-setup}).


\mparagraph{UA operations.} We consider an accumulator $A$ and its associated accumulated set $E$.

\begin{itemize}
    \item $\accIsEmpty(A)/b$: Takes $A$ and outputs $b=\ttrue$ if $E=\varnothing$, and $b=\ffalse$ otherwise.
    
    \item $\accAdd(A,v)/A'$: Takes $A$ and a value $v$ and outputs the updated accumulator $A'$ containing $v$.


    \item $\accProveMem(A,E,v)/r$: Takes $A$, $E$, and a value $v$ and outputs a membership proof $r=\mp$ of value $v$ in accumulator $A$ if $v \in E$, and $r=\abort$ otherwise.
    
    \item $\accProveNonMem(A,E,v)/r$: Takes $A$, $E$, and a value $v$ and outputs a non-membership proof $r=\nmp$ of value $v$ in $A$ if $v \notin E$, and $r=\abort$ otherwise.

    \item $\accVerifyMem(A,v,\mp)/r$: Outputs $r=\ttrue$ if \mp is a correct membership proof of value $v$ for $A$, and $r=\ffalse$ otherwise.
    
    \item $\accVerifyNonMem(A,v,\nmp)/r$: Outputs $r=\ttrue$ if \nmp is a correct non-membership proof of value $v$ for $A$, and $r=\ffalse$ otherwise.
\end{itemize}

\mparagraph{UA properties.} We consider an accumulator $A$ and its associated set $E$.

\begin{itemize}
    
    
    
    \item \textbf{UA-Soundness.} \sloppy $\Forall v \notin E, \accProveMem(A,E,v)/r: \Pr(r \neq \abort) < \epsilon(\lambda)$ and 
    $\Forall v \in E, \accProveNonMem(A,E,v)/r: \Pr(r \neq \abort) < \epsilon(\lambda)$.

    Informally, the probability of computing a membership proof for a non-accumulated element or a non-membership proof for an accumulated element is negligible.

    \item \textbf{UA-Completeness.} \sloppy $\Forall v \in E, \accVerifyMem(A,v,\accProveMem(A,E,v))/r: \Pr(r=\ttrue)>1-\epsilon(\lambda)$ and 
    $\Forall v \notin E, \accVerifyNonMem(A,v,\accProveNonMem(A,E,v))/r\!: \Pr(r=\ttrue)>1-\epsilon(\lambda)$.

    Informally, all honestly accumulated values are verified as true with their corresponding membership proof with a negligible probability of error.
    
    \item \textbf{UA-Undeniability.} $\Forall v \in \mathbb{D}$, $\Forall \mp,\nmp \in \mathbb{F}: \\ \Pr(\accVerifyMem(A,v,\mp) \land \accVerifyNonMem(A,v,\nmp))<\epsilon(\lambda)$.

    Informally, the probability of computing both a membership and non-membership proof for the same element is negligible.
    
    \item \textbf{UA-Indistinguishability.} No information on some accumulated set $E$ is leaked from its associated accumulator $A$, membership proofs $w$ or non-membership proofs $u$.%
    \footnote{For a more formal definition of Indistinguishability, we refer the interested reader to~\cite{DHS15}.}
\end{itemize}

For simplicity and compliance with the traditional definitions given in the cryptography literature~\cite{BBF19}, we do not give properties for the \accAdd operations.
We implicitly assume that they behave correctly, that is, we can only prove the membership of an element that has been added and we can only prove the non-membership for an element that has never been added.

\subsection{Zero-Knowledge Proofs (ZKP)} \label{sec:zkp}
We refer to the last scheme used in this article as \textit{zero-knowledge proofs} (\textit{ZKP}), which correspond to proofs that a prover knows some secret data without divulging any other information on it to the verifier.
Specifically, we use zero-knowledge Succinct Non-Interactive Arguments of Knowledge (zk-SNARKs)~\cite{S20}.
The difference between proof and argument systems comes from the strength of the soundness property.
A proof system must be perfectly sound (\ie withstand a computationally unbounded adversary) whereas an argument system guarantees computational soundness (\ie against a PPT adversary with a very high probability).
The use of arguments is necessary as it was proven by Fortnow~\cite{F89} that no perfect soundness zero-knowledge proof systems exist for NP-complete problems, while there exist perfect zero-knowledge arguments systems for NP-complete problems (\eg zk-SNARKs).

\mparagraph{ZKP predicate.}
\sloppy
Each object of a ZKP scheme is set up with a specific predicate \Pzk, called a \textit{ZKP predicate}, taking as parameters a \pubdata (known both by the prover and verifier) and a \secdata (known only by the prover), and returning \ttrue or \ffalse.
The prover $p_i$ passes to the \zkpprove operation the \pubdata and \secdata parameters of \Pzk, and the proof $\zkp_i$ is correctly generated only if $\Pzk(\pubdata,\secdata)=\ttrue$.
Much like the \Pa predicate (\Cref{sec:agreement-proof}), the \Pzk predicate is typically passed to the implicit setup operation of the ZKP scheme.

\mparagraph{ZKP operations.}
We consider a ZKP object $\mathit{Obj}_\mathit{ZK}$ set up with a ZKP predicate \Pzk.
\begin{itemize}
    \item $\mathit{Obj}_\mathit{ZK}.\zkpprove(\pubdata,\secdata)/r$: Returns a result $r$, which can be either a zero-knowledge proof $r=\zkp$ if the input parameters \secdata and \pubdata satisfy \Pzk, or $r=\abort$ otherwise.

    \item $\mathit{Obj}_\mathit{ZK}.\zkpverify(\zkp,\pubdata)/r$: Returns the validity $r \in \{\ttrue,\ffalse\}$ of a zero-knowledge proof \zkp with respect to the public data \pubdata and the ZKP predicate.
\end{itemize}


\mparagraph{ZKP properties.}
We consider a ZKP object $\mathit{Obj}_\mathit{ZK}$ set up with a ZKP predicate \Pzk.

\begin{itemize}
    \item \textbf{ZKP-Knowledge-Soundness.}
    If $\mathit{Obj}_\mathit{ZK}.\zkpverify(\zkp,\pubdata)$ is true, then the prover knows some \secdata such that $\mathit{Obj}_\mathit{ZK}.\zkpprove(\pubdata,\secdata)/\zkp$ and $\Pzk(\pubdata,\secdata)$ is true w.h.p.

    


    \item \textbf{ZKP-Completeness.}
    For any pair $\langle \pubdata,\secdata \rangle$ such that $\Pzk(\pubdata,\secdata)$ is true, we must have $\mathit{Obj}_\mathit{ZK}.\zkpverify(\mathit{Obj}_\mathit{ZK}.\zkpprove(\pubdata,\secdata),\pubdata)$ is true w.h.p.

    \item \textbf{ZKP-Zero-Knowledge.} No information on some \secdata is leaked by its associated \pubdata and zero-knowledge proof \zkp.\footnote{For a more formal definition of Zero-Knowledge, we refer the interested reader to~\cite{G001}.}

\end{itemize}

\section{Proofs of our QAAT System~(Algorithms~\ref{alg:aat-init-vars} to~\ref{alg:aat})} \label{sec:proofs-sys}

In this appendix section, we provide full derivations on the proof of correctness (\Cref{sec:proof-at}) and quasi-anonymity (\Cref{sec:proof-qaat}) of our QAAT system (\Crefrange{alg:aat-init-vars}{alg:aat}).
Throughout the section, we rely on the full property definitions of the commitment, UA, and ZKP schemes given in \Cref{sec:techniques-long}.

\subsection{Proof of correctness} \label{sec:proof-at}

\subsubsection{Preliminaries} \label{sec:proof-prelim}
We first introduced a number of preliminary results and definitions that we will use to prove the AT-Sequentiality of \Cref{alg:aat} in \Cref{sec:proof:at-seq}. 
In the following, $\ap_i^\sn$ denotes an AP (Agreement Proof) that is valid at a sequence number $\sn$ for a process $p_i$ (correct or faulty).

\begin{definition}[Valid UA at a sequence number $\sn$ for a process $p_i$]\label{def:valid:UA}
We say that a Universal Accumulator (UA) $A$ is valid at a sequence number $\sn$ for a process $p_i$ (correct of faulty) if there exists a valid Agreement Proof (AP) $\ap_i^\sn$ that is valid at $\sn$ for $p_i$ and verifies $\AccountUpdate.\apverify(\ap_i^\sn, \aangle{A,\star}, \sn, i)$. 
As a shortcut, we might interchangeably say that $A$ was \emph{issued} by $p_i$ at sequence number $\sn$.
When this holds, we note $A$'s sequence number and AP issuer as superscript and subscript, respectively, \ie $A=A_i^\sn$.
\end{definition}

\begin{lemma}[Unicity of valid UAs at a sequence number $\sn$ for a process $p_i$]\label{lemma:unicity:UAs}
    There is at most one UA that is valid at a sequence number $\sn\geq 0$ for a process $p_i$ (correct of faulty), \ie if $A$ and $A'$ are both valid UAs at $\sn$ for $p_i$ according to \Cref{def:valid:UA}, then $A=A'$.
\end{lemma}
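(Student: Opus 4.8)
The plan is to reduce the statement directly to the \textbf{AP-Agreement} property of the underlying agreement-proof scheme, with essentially no extra work beyond unfolding the definitions. First I would instantiate \Cref{def:valid:UA} for both hypothesized valid UAs. Since $A$ is valid at $\sn$ for $p_i$, there exist a valid AP $\ap_i^\sn$ and a balance commitment $c$ (the witness for the ``$\star$'' in the definition) such that $\AccountUpdate.\apverify(\ap_i^\sn, \langle A, c\rangle, \sn, i) = \ttrue$; symmetrically, the validity of $A'$ yields a valid AP ${\ap'}_i^{\sn}$ and a commitment $c'$ with $\AccountUpdate.\apverify({\ap'}_i^{\sn}, \langle A', c'\rangle, \sn, i) = \ttrue$. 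The point of this first step is to make the existential ``$\star$'' explicit, so that both verifications are stated for fully specified payloads $\langle A, c\rangle$ and $\langle A', c'\rangle$; recall that a ``valid AP'' for a payload is precisely one for which \apverify returns true at any correct process, so both verifications above indeed hold.

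Next I would invoke AP-Agreement. The two objects $\ap_i^\sn$ and ${\ap'}_i^{\sn}$ are valid APs at the \emph{same} sequence number $\sn$ and from the \emph{same} prover $p_i$, and each verifies true against its respective payload. AP-Agreement then forces the two payloads to coincide, $\langle A, c\rangle = \langle A', c'\rangle$, and projecting onto the first component gives $A = A'$, which is exactly the claim (as a byproduct $c = c'$, though this is not needed). Crucially, AP-Agreement places no restriction on whether $p_i$ is correct or faulty, which is why the lemma covers faulty provers without any additional argument, and no cryptographic or probabilistic reasoning beyond AP-Agreement is required.

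The only genuine subtlety, and the step I would treat most carefully, is the boundary case $\sn = 0$, i.e.\ the genesis accumulators. Unlike AP-Validity and AP-Knowledge-Soundness, AP-Agreement carries no hypothesis of the form $\sn > 0$, so it applies verbatim at $\sn = 0$; I would simply note that the genesis AP produced by the setup procedure is itself a valid AP at $\sn = 0$, so \Cref{def:valid:UA} is meaningful there and the reduction goes through unchanged. Thus the lemma is essentially a one-line corollary of AP-Agreement once the existential commitment witnesses are extracted.
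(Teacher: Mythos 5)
Your proposal is correct and takes exactly the route the paper does: the paper's own proof is a one-liner stating that the lemma ``trivially follows from \Cref{def:valid:UA} and the AP-Agreement property,'' which is precisely your unfolding of the existential payload $\langle A,\star\rangle$ followed by the application of AP-Agreement (which indeed carries no $\sn>0$ restriction, so the genesis case needs no separate treatment). Your write-up merely makes explicit the steps the paper leaves implicit.
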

\begin{proof}
    This trivially follows from \Cref{def:valid:UA} and the AP-Agreement property of Agreement Proofs.
\end{proof}

\begin{lemma}[Sequence of preceding UAs of a valid UA]\label{lemma:preceding:UAs}
\label{rem:seq:previous:UAs}
Consider a valid UA $A_i^\sn$ at sequence number $\sn\geq 0$ issued by a process $p_i$ (correct or faulty). The following holds.
\begin{itemize}
    \item $A_i^\sn$ can be associated with a unique \textit{sequence of preceding UAs}, denoted $\precedingUAs(A_i^\sn)=(A_i^0,A_i^1,..,A_i^\sn)$, so that each $A_i^k$ is a valid UA issued by $p_i$ at sequence number $k$, with $0 \leq k \leq \sn$.
    
    \item The sequences of UAs produced for UAs issued by the same process $p_i$ are prefix-ordered, \ie if $A_i^{\sn_1}$ and $A_i^{\sn_2}$ are two valid UAs issued by $p_i$ at sequence numbers $\sn_1$ and $\sn_2$ respectively such that $\sn_1\leq\sn_2$, then $\precedingUAs(A_i^{\sn_1})$ is a prefix of $\precedingUAs(A_i^{\sn_2})$.
\end{itemize}
\end{lemma}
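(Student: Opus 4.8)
# Proof Plan for Lemma (Sequence of preceding UAs of a valid UA)

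The plan is to prove both claims by induction on the sequence number $\sn$, building the sequence $\precedingUAs(A_i^\sn)$ by recursively ``peeling off'' one sequence number at a time using the structure forced by the predicate $\Pa$. The key observation is that by \Cref{def:valid:UA}, a valid UA $A_i^\sn$ at $\sn$ comes with a valid AP $\ap_i^\sn$ satisfying $\AccountUpdate.\apverify(\ap_i^\sn, \aangle{A_i^\sn, \star}, \sn, i)$. For $\sn > 0$, the AP-Knowledge-Soundness property of Agreement Proofs guarantees that $p_i$ knows some \data with $\Pa(\aangle{A_i^\sn, \star}, \data, \sn) = \ttrue$. Unpacking this \data as $\langle \ap_\prover, \zkp_\prover, \olddata_\prover \rangle$ and reading \Pa (\cref{line:pa-olddata-unwrap,line:pa-check-ap}), the final assertion on \cref{line:pa-check-ap} certifies that $\ap_\prover$ is a valid AP for $\aangle{A_\prover, \balcom_\prover}$ at sequence number $\sn-1$ from $p_i$. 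By \Cref{def:valid:UA}, this exhibits a valid UA $A_\prover = A_i^{\sn-1}$ at $\sn-1$ for $p_i$. This is exactly the recursive step needed to extend the sequence downward.

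For the first claim, I would argue as follows. The base case $\sn = 0$ is immediate: $\precedingUAs(A_i^0) = (A_i^0)$, and $A_i^0$ is valid at $0$ by hypothesis (the genesis AP, discussed in \Cref{sec:agreement-proof}, guarantees existence and validity at sequence number $0$). For the inductive step, given a valid UA $A_i^\sn$ at $\sn > 0$, the argument above produces a valid UA $A_i^{\sn-1}$ at $\sn-1$; by the induction hypothesis it has a unique sequence $\precedingUAs(A_i^{\sn-1}) = (A_i^0, \dots, A_i^{\sn-1})$, and we append $A_i^\sn$. \emph{Uniqueness} of the resulting sequence is the crucial point and follows from \Cref{lemma:unicity:UAs}: at each sequence number $k$ with $0 \le k \le \sn$, there is at most one valid UA for $p_i$, so every entry $A_i^k$ of the sequence is uniquely determined regardless of which valid AP witnessed the descent. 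This also guarantees the recursion terminates cleanly at $0$ and that the intermediate UAs extracted from \Pa coincide with the canonical valid UAs at each level.

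For the second (prefix-ordering) claim, suppose $A_i^{\sn_1}$ and $A_i^{\sn_2}$ are valid UAs issued by $p_i$ with $\sn_1 \le \sn_2$. Both $\precedingUAs(A_i^{\sn_1}) = (A_i^0, \dots, A_i^{\sn_1})$ and $\precedingUAs(A_i^{\sn_2}) = (A_i^0, \dots, A_i^{\sn_2})$ are built by the same downward recursion, so it suffices to show that for each $k \le \sn_1$ the $k$-th entries agree. But each such entry is, by construction, a valid UA at sequence number $k$ for $p_i$, and by \Cref{lemma:unicity:UAs} there is exactly one such UA. Hence the two sequences agree on their first $\sn_1 + 1$ entries, i.e. $\precedingUAs(A_i^{\sn_1})$ is a prefix of $\precedingUAs(A_i^{\sn_2})$.

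I expect the main obstacle to be rigorously justifying that the $A_i^{\sn-1}$ extracted through \Pa (via AP-Knowledge-Soundness applied to the \data known by $p_i$) is genuinely a \emph{valid} UA in the sense of \Cref{def:valid:UA}, rather than merely some accumulator appearing inside an unverified witness. The subtlety is that AP-Knowledge-Soundness only asserts $p_i$ \emph{knows} a satisfying \data; one must then invoke the \emph{validity} of the embedded $\ap_\prover$ (which \cref{line:pa-check-ap} asserts verifies for \emph{any} correct process, matching the definition of a valid AP in \Cref{sec:agreement-proof}) to promote it to a bona fide valid UA. Handling the faulty-prover case carefully here — ensuring the knowledge extractor yields a \data whose internal AP is verifiable, not just syntactically present — is where the argument must be stated precisely; everything else reduces to the already-established \Cref{lemma:unicity:UAs}.
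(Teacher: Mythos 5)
Your proposal is correct and follows essentially the same route as the paper's proof: downward induction on the sequence number, using \Cref{def:valid:UA}, AP-Knowledge-Soundness, and the assertion at \cref{line:pa-check-ap} of $\Pa$ to extract a valid UA at $\sn-1$, with both uniqueness and prefix-ordering reduced to \Cref{lemma:unicity:UAs}. Your closing remark about promoting the extracted $\ap_\prover$ from a mere witness component to a bona fide valid AP is exactly the step the paper performs (tersely) when it invokes \cref{line:pa-check-ap} to conclude that $A_\prover$ fulfills \Cref{def:valid:UA}, so no gap remains.
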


\begin{proof}
  Let us consider a valid UA $A_i^\sn$ at sequence number $\sn\geq 0$ \ft issued by a process $p_i$ (correct or faulty).
  \begin{itemize}
      \item If $\sn=0$, we pick $\precedingUAs(A_i^0)=(A_i^0)$.
      \item If $\sn>0$, by \Cref{def:valid:UA}, $A_i^\sn$ is associated with some AP $\ap_i^\sn$ issued by $p_i$ that is valid at sequence number $\sn$.
      By AP-Knowledge-Soundness, $p_i$ knows some $\data=\aangle{\ap_i',\star,\star}$ s.t. $\Pa(\star,\aangle{\ap_i',\star,\star},\sn)$
      is true. \Cref{line:pa-check-ap} of the code of Predicate $\Pa$ (\Cref{alg:pred}) implies that $\ap_i'$ is a valid AP at sequence number $\sn-1$ from $p_i$ for some payload $\aangle{A_\prover,\star}$. $A_\prover$ fulfills \Cref{def:valid:UA}, and is therefore a valid UA at sequence number $\sn-1$ from $p_i$.
      By induction, we obtain that, for each $\sn'$ s.t. $0 \leq \sn' \leq \sn$ there is a valid UA $A_i^{\sn'}$ at $\sn'$ from $p_i$. We denote this sequence as $\precedingUAs({A_i'}^\sn)=(A_i^0,A_i^1,..,A_i^\sn)$.
  \end{itemize}
  By \Cref{lemma:unicity:UAs}, $\precedingUAs({A_i'}^\sn)$ is unique.
  We say that the sequence of valid UAs $(A_i^{\sn'})_{0 \leq \sn' \leq \sn}$ is the \textit{sequence of preceding UAs} of $A_i^\sn$. Prefix ordering similarly follows from \Cref{lemma:unicity:UAs}.
\end{proof}

\begin{definition}[Valid transfer $\tx$ at a sequence number $\sn$ for a process $p_i$]\label{def:valid:transfer}
    We say that that a transfer $\tx=\aangle{\snd, v, \rcv, \sn_\snd}$ is valid at a sequence number $\sn>0$ for a process $p_i$ (correct of faulty) if there exist a non-membership proof $u$ and two valid UAs $A_i^{\sn}$ and $A_i^{\sn-1}$ issued by $p_i$ at sequence numbers $\sn$ and $\sn-1$ respectively, such that:
    \begin{equation}\label{eq:tx:validity}
        \accVerifyNonMem(A_i^{\sn-1},\tx,\nmp) \land \accAdd(A_i^{\sn-1},\tx) = A_i^{\sn}.
    \end{equation}
    When this holds, we note $\tx$'s sequence number and UA issuer\footnote{We note that the UA issuer of transfer $\tx$ is not always the sender of $\tx$.} as superscript and subscript, respectively, \ie $\tx=\tx_i^\sn$.
\end{definition}

When the context is clear, for simplicity, we might abbreviate \Cref{eq:tx:validity} using a set notation into $A_i^{\sn}=A_i^{\sn-1}\uplus\{\tx\}$, where $\uplus$ denotes the disjoint set union.

\begin{lemma}[Transfer associated with a valid UA]\label{lemma:trf:valid:UA}
    Consider a valid UA $A_i^\sn$ at sequence number $\sn>0$ issued by a process $p_i$ (correct or faulty)\footnote{The sequence number $\sn=0$ is excluded as the genesis AP $\ap_i^0$ does not have a corresponding transfer.}.
    The existence of $A_i^\sn$ implies that there exists a valid transfer $\tx_i^\sn$ issued by $p_i$ at sequence number $\sn$ (\Cref{def:valid:transfer}). Furthermore this transfer is unique for a given $p_i$ and $\sn$. We say that $A_i^\sn$ is associated with $\tx_i^\sn$.
\end{lemma}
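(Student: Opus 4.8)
The plan is to establish existence by peeling back the two layers of knowledge soundness that the validity of $A_i^{\sn}$ forces, and to establish uniqueness by a contradiction argument resting on the correct behaviour and the undeniability of the accumulator scheme. For existence, I would first apply \Cref{def:valid:UA} to extract from the valid UA $A_i^{\sn}$ a valid AP $\ap_i^{\sn}$ from $p_i$ at sequence number $\sn$, so that $\AccountUpdate.\apverify(\ap_i^{\sn},\aangle{A_i^{\sn},\balcom'},\sn,i)$ holds for some balance commitment $\balcom'$. Since $\sn>0$, AP-Knowledge-Soundness then hands me some $\data=\aangle{\ap_\prover,\zkp_\prover,\olddata_\prover}$ with $\olddata_\prover=\aangle{A_\prover,\balcom_\prover}$ such that $\Pa(\aangle{A_i^{\sn},\balcom'},\data,\sn)=\ttrue$.

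Reading off the body of $\Pa$ (\Cref{alg:pred}), the check on \cref{line:pa-check-ap} tells me that $\ap_\prover$ is a valid AP at sequence number $\sn-1$ from $p_i$ for the payload $\aangle{A_\prover,\balcom_\prover}$; by \Cref{def:valid:UA} this makes $A_\prover$ a valid UA at $\sn-1$ for $p_i$, which \Cref{lemma:unicity:UAs} pins down as $A_\prover=A_i^{\sn-1}$. Meanwhile \cref{line:pa-check-zkp} supplies a passing verification $\TransferValidity.\zkpverify(\zkp_\prover,\pubdata_\prover)=\ttrue$ for the tuple $\pubdata_\prover=\aangle{A_i^{\sn-1},\balcom_\prover,A_i^{\sn},\balcom',\sn}$ assembled on \cref{line:pa-pubdata-wrap}. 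Invoking ZKP-Knowledge-Soundness yields a $\secdata$ with $\Pzk(\pubdata_\prover,\secdata)=\ttrue$; unpacking $\pubdata_\prover$ as on \cref{line:pzk-pubdata-unwrap} identifies the predicate's $(A_\prover,A'_\prover,\sn_\prover)$ with $(A_i^{\sn-1},A_i^{\sn},\sn)$, and extracting $\tx,\nmp_\prover$ from $\secdata$ as on \cref{line:pzk-secdata-unwrap}, the assertion on \cref{line:pzk-check-nonmem-trf} reads $\accVerifyNonMem(A_i^{\sn-1},\tx,\nmp_\prover)\land\accAdd(A_i^{\sn-1},\tx)=A_i^{\sn}$, which is precisely \Cref{eq:tx:validity}. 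As $A_i^{\sn}$ and $A_i^{\sn-1}$ are both valid UAs for $p_i$, \Cref{def:valid:transfer} is met and $\tx=:\tx_i^{\sn}$ is a valid transfer issued by $p_i$ at $\sn$.

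For uniqueness, I would assume two valid transfers $\tx,\tx'$ at $\sn$ for $p_i$, witnessed by non-membership proofs $\nmp,\nmp'$. By \Cref{def:valid:transfer} each is associated with valid UAs at $\sn-1$ and $\sn$, which \Cref{lemma:unicity:UAs} collapses to the common $A_i^{\sn-1}$ and $A_i^{\sn}$; hence $\accAdd(A_i^{\sn-1},\tx)=A_i^{\sn}=\accAdd(A_i^{\sn-1},\tx')$. Assuming for contradiction that $\tx\neq\tx'$, and writing $E$ for the set accumulated in $A_i^{\sn-1}$, the correct behaviour of the accumulator (\Cref{sec:accumulator}) applied to $\accAdd(A_i^{\sn-1},\tx')=A_i^{\sn}$ lets me prove membership of $\tx'$ in $A_i^{\sn}$, so UA-Completeness gives $\accVerifyMem(A_i^{\sn},\tx',\cdot)=\ttrue$ for some proof (w.h.p.). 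Conversely, $A_i^{\sn}=\accAdd(A_i^{\sn-1},\tx)$ accumulates exactly $E\uplus\{\tx\}$; since $\accVerifyNonMem(A_i^{\sn-1},\tx',\nmp')$ holds, UA-Soundness gives $\tx'\notin E$, and as $\tx'\neq\tx$ the element $\tx'$ is not accumulated in $A_i^{\sn}$, so correct behaviour and UA-Completeness also yield $\accVerifyNonMem(A_i^{\sn},\tx',\cdot)=\ttrue$. Exhibiting both a membership and a non-membership proof of $\tx'$ for the same accumulator $A_i^{\sn}$ contradicts UA-Undeniability (except with probability $\epsilon(\lambda)$), forcing $\tx=\tx'$ w.h.p.

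I expect the \emph{uniqueness} step to be the main obstacle: the proof cannot rely on $\accAdd$ being injective on its own, and must instead reconcile the two accumulated sets that $A_i^{\sn}$ seems to certify ($E\uplus\{\tx\}$ versus $E\uplus\{\tx'\}$) purely through the soundness, completeness, and undeniability guarantees of the accumulator. The delicate points will be ensuring that every proof invoked is a genuinely verifiable one (so that UA-Undeniability applies), that the disjoint-union reading of $A_i^{\sn}$ is legitimate because $\tx\notin E$ is guaranteed by its own non-membership witness, and that all conclusions are carried with the understanding that each cryptographic guarantee holds only up to the negligible error $\epsilon(\lambda)$.
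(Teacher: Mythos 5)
Your proof is correct, and its existence half follows exactly the paper's own route: \Cref{def:valid:UA} to extract a valid AP, AP-Knowledge-Soundness to obtain \data satisfying \Pa, \cref{line:pa-check-ap} together with \Cref{lemma:unicity:UAs} to pin the preceding accumulator down to $A_i^{\sn-1}$, \cref{line:pa-check-zkp} plus ZKP-Knowledge-Soundness to extract \secdata, and finally \cref{line:pzk-check-nonmem-trf} to recover the pair satisfying \Cref{eq:tx:validity}. Where you genuinely differ is uniqueness: the paper dispatches it in a single sentence---because $A_i^{\sn}$ and $A_i^{\sn-1}$ are unique, the constraint $\accAdd(A_i^{\sn-1},\tx)=A_i^{\sn}$ imposed by \cref{line:pzk-check-nonmem-trf} can hold for only one \tx---thereby leaning implicitly on the assumed correct behaviour (effectively, injectivity in the added element) of \accAdd. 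You instead make this explicit with a contradiction argument: two distinct valid transfers $\tx\neq\tx'$ would let you exhibit, for the single accumulator $A_i^{\sn}$, both a verifying membership proof and a verifying non-membership proof of $\tx'$, violating UA-Undeniability. Your version buys rigor (it reduces the glossed-over injectivity claim to the stated UA properties, carrying the $\epsilon(\lambda)$ error terms along), at the cost of invoking the informal ``accumulated set'' semantics of \accAdd, which the paper only assumes implicitly anyway. One small quibble: your step ``$\accVerifyNonMem(A_i^{\sn-1},\tx',\nmp')$ holds, hence $\tx'\notin E$'' is not what UA-Soundness gives (that property only bounds the success probability of the \emph{prove} operations on wrong inputs); the correct derivation is via UA-Completeness plus UA-Undeniability: if $\tx'\in E$, Completeness would yield a verifying membership proof coexisting with $\nmp'$, which Undeniability forbids. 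This mislabeling is cosmetic and does not affect the validity of the argument.
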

\begin{proof}
    Consider a valid UA $A_i^\sn$ at sequence number $\sn>0$ issued by a process $p_i$ (correct or faulty).
    By \Cref{def:valid:UA}, there exists a valid AP $\ap_i^\sn$ at sequence number \sn for $p_i$ so that $\AccountUpdate.\apverify(\ap_i^\sn, \aangle{A_i^\sn,\star}, \sn, i)$.
    By AP-Knowledge-Soundness, and by definition of the $\Pa$ predicate for \AccountUpdate objects (\Cref{alg:pred}), the prover $p_i$ must know some $\data=\aangle{\star,\zkp_i,\olddata}$ satisfying $\Pa(\aangle{A_i^\sn,\star},\data,\sn)$.
    Due to \cref{line:pa-check-zkp} of the code of Predicate $\Pa$ (\Cref{alg:pred}), this implies that $\TransferValidity.\zkpverify(\zkp_i,\olddata\oplus\aangle{A_i^\sn,\star,\sn})$ is true, where $\olddata = \aangle{A_i^{\sn-1},\star}$.
    As a result, by ZKP-Knowledge-Soundness and by construction of the $\Pzk$ predicate of the ZKP object \TransferValidity (\Cref{alg:pred}), the prover $p_i$ must know some \secdata containing a transfer $\tx=\aangle{\snd,v,\rcv,\sn'}$ variable (fifth parameter at \cref{line:pzk-secdata-unwrap} of \Cref{alg:pred}) such that $\Pzk(\aangle{A_i^{\sn-1},\star} \oplus \aangle{A_i^\sn,\star,\sn},\secdata)$ is true (if $i=\snd$, then $\sn=\sn'$ due to the check at \cref{line:pzk-check-pvr-is-snd} of \Cref{alg:pred}, otherwise if $i=\rcv$, $\sn$ and $\sn'$ can be different). Furthermore, due to \cref{line:pa-check-ap}, and by \Cref{lemma:unicity:UAs}, $A_i^{\sn-1}$ is the valid UA at sequence number $\sn-1$ for process $p_i$.
    This observation, together with \cref{line:pzk-check-nonmem-trf} of \Cref{alg:pred} implies that $\tx$ is a valid transfer issued by $p_i$ at sequence number $\sn$. Because $A_i^\sn$ and $A_i^{\sn-1}$ are unique for $p_i$ at their respective sequence number, \cref{line:pzk-check-nonmem-trf} further yields that $\tx$ is the sole transfer that is valid for $p_i$ and $\sn$.
\end{proof}

\begin{corollary}[Sequence of preceding transfers of a valid UA]\label{lemma:seq:preceding:transfers}
    Consider a valid UA $A_i^\sn$ at sequence number $\sn\geq 0$ issued by a process $p_i$ (correct or faulty).
    $A_i^\sn$ can be associated with a unique \textit{sequence of preceding transfers}, denoted $\precedingTrans(A_i^\sn)=(\tx_i^1,..,\tx_i^\sn)$, so that each $\tx_i^k$ is a valid transfer issued by $p_i$ at sequence number $k$, with $1 \leq k \leq \sn$.
\end{corollary}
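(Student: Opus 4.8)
The plan is to obtain this corollary as a direct composition of the two preceding results, since the recursive structure has already been packaged into \Cref{lemma:preceding:UAs}. First I would apply \Cref{lemma:preceding:UAs} to the valid UA $A_i^\sn$ to extract its unique sequence of preceding UAs $\precedingUAs(A_i^\sn)=(A_i^0,A_i^1,\ldots,A_i^\sn)$, where each $A_i^k$ is a valid UA issued by $p_i$ at sequence number $k$. This provides a canonical ``backbone'' of accumulators indexed from $0$ to $\sn$ from which the transfers will be read off.

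Next, for each index $k$ with $1 \leq k \leq \sn$, I would invoke \Cref{lemma:trf:valid:UA} on the valid UA $A_i^k$. Since $k > 0$, the hypothesis of that lemma is met, so it yields a valid transfer $\tx_i^k$ issued by $p_i$ at sequence number $k$, together with the guarantee that this transfer is the unique one compatible with the pair $(A_i^{k-1},A_i^k)$ in the sense of \Cref{def:valid:transfer}. Collecting these transfers across all $k$ defines $\precedingTrans(A_i^\sn)=(\tx_i^1,\ldots,\tx_i^\sn)$. The boundary case $\sn=0$ is handled trivially by letting the sequence be empty, which is consistent with the fact (noted immediately after \Cref{def:valid:transfer} and in the footnote of \Cref{lemma:trf:valid:UA}) that a transfer requires $\sn>0$ and that the genesis UA $A_i^0$ has no associated transfer.

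For uniqueness of the whole sequence, I would simply combine the two uniqueness statements already available: \Cref{lemma:preceding:UAs} ensures the backbone $(A_i^0,\ldots,A_i^\sn)$ is itself unique for $p_i$ at sequence number $\sn$, and \Cref{lemma:trf:valid:UA} ensures that each $\tx_i^k$ is the unique valid transfer for $p_i$ at $k$. Hence any two candidate sequences of preceding transfers must coincide term by term, giving uniqueness of $\precedingTrans(A_i^\sn)$.

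I do not expect a genuine obstacle, as the statement is essentially a bookkeeping corollary of the machinery built in the preceding lemmas. The only point that warrants slight care is keeping the indexing consistent with \Cref{def:valid:transfer}: because transfers are only defined at sequence numbers strictly greater than $0$, I must make sure the constructed sequence ranges over $1 \leq k \leq \sn$ and that the $\sn=0$ case is recorded as the empty sequence rather than mistakenly asserting a transfer at sequence number $0$.
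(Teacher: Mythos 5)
Your proposal is correct and matches the paper's own (one-line) proof exactly: the paper derives this corollary directly from \Cref{lemma:preceding:UAs,lemma:trf:valid:UA}, which is precisely the composition you spell out, including the uniqueness argument and the empty-sequence convention for $\sn=0$.
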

\begin{proof}
    The corollary follows from \Cref{lemma:preceding:UAs,lemma:trf:valid:UA}.
\end{proof}

\begin{lemma}[Prefix-ordering of of preceding transfers]
    The sequences of transfers produced for UAs issued by the same process $p_i$ are prefix-ordered, \ie if $A_i^{\sn_1}$ and $A_i^{\sn_2}$ are two valid UAs issued by $p_i$ at sequence numbers $\sn_1$ and $\sn_2$ respectively such that $\sn_1\leq\sn_2$, then $\precedingTrans(A_i^{\sn_1})$ is a prefix of $\precedingTrans(A_i^{\sn_2})$.
\end{lemma}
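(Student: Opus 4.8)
The plan is to derive this statement as an immediate consequence of two facts already established: the prefix-ordering of the \emph{preceding UAs} (\Cref{lemma:preceding:UAs}) and the uniqueness of the transfer associated with each valid UA (\Cref{lemma:trf:valid:UA}, together with its corollary \Cref{lemma:seq:preceding:transfers}). The key observation is that $\precedingTrans$ is built entry-by-entry from $\precedingUAs$: the $k$-th transfer $\tx_i^k$ of $\precedingTrans(A_i^\sn)$ is precisely the unique valid transfer associated with the $k$-th UA $A_i^k$ of $\precedingUAs(A_i^\sn)$. Hence, once the UA sequences agree on a prefix, the transfer sequences must agree on the corresponding prefix as well.

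Concretely, first I would fix $p_i$ and two valid UAs $A_i^{\sn_1}$, $A_i^{\sn_2}$ with $\sn_1 \le \sn_2$. Applying \Cref{lemma:preceding:UAs}, I obtain that $\precedingUAs(A_i^{\sn_1}) = (A_i^0, \dots, A_i^{\sn_1})$ is a prefix of $\precedingUAs(A_i^{\sn_2}) = (A_i^0, \dots, A_i^{\sn_2})$; in particular, for every $k$ with $0 \le k \le \sn_1$ the $k$-th entry of both sequences is the \emph{same} UA $A_i^k$ (this is where \Cref{lemma:unicity:UAs} is implicitly used, guaranteeing that ``the valid UA at sequence number $k$ for $p_i$'' is unambiguous). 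Second, for each $1 \le k \le \sn_1$ I would invoke \Cref{lemma:trf:valid:UA} to obtain that $A_i^k$ determines a unique valid transfer $\tx_i^k$ for $p_i$ at $k$. By \Cref{lemma:seq:preceding:transfers}, this very $\tx_i^k$ is the $k$-th component of both $\precedingTrans(A_i^{\sn_1})$ and $\precedingTrans(A_i^{\sn_2})$. Thus $\precedingTrans(A_i^{\sn_1}) = (\tx_i^1, \dots, \tx_i^{\sn_1})$ coincides with the length-$\sn_1$ prefix of $\precedingTrans(A_i^{\sn_2}) = (\tx_i^1, \dots, \tx_i^{\sn_2})$, which is exactly the claim.

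I do not expect a genuine obstacle here: the statement is essentially a corollary. The only point deserving care is to justify that the symbol $\tx_i^k$ really denotes one and the same object in both sequences, rather than two \emph{a priori} distinct transfers that merely share an index. This is secured by combining the uniqueness of the valid UA at each $(i,k)$ (\Cref{lemma:unicity:UAs}) with the uniqueness of the transfer associated with a given valid UA (\Cref{lemma:trf:valid:UA}): together they make $\tx_i^k$ a function of $(i,k)$ alone, independent of the larger UA $A_i^{\sn_1}$ or $A_i^{\sn_2}$ from which it was recovered. With that well-definedness in hand, prefix-ordering of the transfer sequences is inherited directly from prefix-ordering of the UA sequences.
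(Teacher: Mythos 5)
Your proposal is correct and follows essentially the same route as the paper, whose proof is a one-liner citing exactly the key fact you identify: the uniqueness of a valid transfer for a given process and sequence number (\Cref{lemma:trf:valid:UA}), which makes $\tx_i^k$ a function of $(i,k)$ alone. Your additional scaffolding via \Cref{lemma:preceding:UAs} and \Cref{lemma:unicity:UAs} just spells out what the paper leaves implicit.
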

\begin{proof}
    This follows from the unicity of a valid transfer for a given process and sequence number, as stated in \Cref{lemma:trf:valid:UA}.
\end{proof}

For simplicity, in the following, we equate valid transfers with their corresponding \transfer invocation.
More precisely, if $\tx_i^\sn$ is a valid transfer at sequence number $\sn$ for a process $p_i$, we denote the \transfer invocation corresponding to a $\tx_i^\sn$ by $\transfer_\snd^{\sn'}(v,\rcv)$, where $\snd$ and $\sn'$ are respectively the id of the sender and the sequence number of the transfer contained in $\tx$.
In this case, we say that $\transfer_\snd^{\sn'}(v,\rcv)$ is the transfer invocation for prover $p_i$ at sequence number $\sn$.

\begin{definition}[Sending UAs, receiving UAs, and null UAs]
\label{def:sending:receiving:null:UAs}
We distinguish two sorts of valid UAs that are produced in our system: \textit{sending UAs}, and \textit{receiving UAs}.
\begin{itemize}
    \item A valid $A_i^\sn$ is a sending UA if 
    the issuer $p_i$ of $A_i^\sn$ is also the sender of the corresponding transfer $\transfer_i^\sn(\star,j)$ (where $i$ and $j$ can be different).
    If the issuer $p_i$ is correct, the sending UA has been produced at \cref{line:aat-process-add-trf} of \Cref{alg:aat}, during the $\prove()$ call at \cref{line:aat-trf-process}.

    \item A valid $A_i^\sn$ is a receiving UA if
    the issuer $p_i$ of $A_i^\sn$ is also the receiver of the corresponding transfer $\transfer_j^{\sn'}(\star,i)$ (where $i$ and $j$ can be different, and $\sn$ and $\sn'$ can be different).
    If the issuer $p_i$ is correct, the receiving UA is produced at \cref{line:aat-process-add-trf} of \Cref{alg:aat}, during the $\prove()$ call at \cref{line:aat-rcv-trf-process}.
\end{itemize}
Recall that a transfer invocation $\transfer_i^\sn(v,i)$ from a process $p_i$ to itself is allowed in our algorithm, which entails that the corresponding UA $A_i^\sn$ is both a sending UA and a receiving UA.
In this case, $A_i^\sn$ is said to be a \textit{null} UA, and $\transfer_i^\sn(v,i)$ is said to be a \textit{null} \transfer invocation. 
\end{definition}

\newcommand{\nonnullswitch}{non-null }
\renewcommand{\nonnullswitch}{}

\begin{lemma}[Matching \nonnullswitch{}receiving UA to a \nonnullswitch{}sending UA]\label{lemma:matching:receiving:UA}
Any valid \nonnullswitch{}receiving UA $A_i^\sn$ for some transfer $\tx_i=\aangle{j, \star, i, \sn_{\tx_i}}$ from a receiver $p_i$ (correct or faulty) can be matched to a valid \nonnullswitch{}sending UA $A_j^{\sn_{\tx_i}}$ for the same transfer $\tx_i$ from the sender $p_j\neq p_i$ (correct or faulty).
\end{lemma}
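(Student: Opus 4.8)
The plan is to unfold the two nested knowledge-soundness guarantees attached to the receiving UA and then read the matching sending UA directly off the receiver branch of $\Pzk$. First I would apply \Cref{def:valid:UA} to the receiving UA $A_i^\sn$: it comes with a valid AP $\ap_i^\sn$ satisfying $\AccountUpdate.\apverify(\ap_i^\sn,\aangle{A_i^\sn,\star},\sn,i)=\ttrue$. By AP-Knowledge-Soundness, $p_i$ knows some $\data=\aangle{\star,\zkp_i,\olddata}$ with $\Pa(\aangle{A_i^\sn,\star},\data,\sn)=\ttrue$, and the assertion on \cref{line:pa-check-zkp} forces $\TransferValidity.\zkpverify(\zkp_i,\olddata\oplus\aangle{A_i^\sn,\star,\sn})=\ttrue$. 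Applying ZKP-Knowledge-Soundness then yields a \secdata with $\Pzk$ true on this public input. Since $A_i^\sn$ is a receiving UA, its associated transfer $\tx_i=\aangle{j,\star,i,\sn_{\tx_i}}$ has $\rcv=i=\prover$, and because $p_j\neq p_i$ we also have $\snd=j\neq\prover$; hence the evaluation of $\Pzk$ takes the $\prover=\rcv$ branch (\crefrange{line:pzk-valid-sending-AP-exists-start}{line:pzk-valid-sending-AP-exists-end}).

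From that branch I extract two facts about the \secdata components $A'_\snd,\balcom_\snd,\ap_\snd,\mp_\snd$: the membership check $\accVerifyMem(A'_\snd,\tx_i,\mp_\snd)=\ttrue$ (\cref{line:pzk-valid-sending-AP-exists-start}) and $\AccountUpdate.\apverify(\ap_\snd,\aangle{A'_\snd,\balcom_\snd},\sn_{\tx_i},j)=\ttrue$ (\cref{line:pzk-valid-sending-AP-exists-end}), where $\sn_{\tx_i}$ is the fourth field of $\tx_i$ as unpacked in $\Pzk$. Because $\apverify$ is a deterministic public function, its truth inside the predicate coincides with the validity notion of \Cref{sec:agreement-proof}; thus $\ap_\snd$ is a valid AP for $\aangle{A'_\snd,\balcom_\snd}$ at $\sn_{\tx_i}$ from $p_j$, and by \Cref{def:valid:UA} together with the uniqueness of \Cref{lemma:unicity:UAs}, $A'_\snd$ is the unique valid UA $A_j^{\sn_{\tx_i}}$ issued by $p_j$ at $\sn_{\tx_i}$. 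This pins down $A_j^{\sn_{\tx_i}}=A'_\snd$ as the candidate matching UA.

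It remains to show that $A_j^{\sn_{\tx_i}}$ is in fact a \emph{sending} UA whose associated transfer is $\tx_i$. Here I would use \Cref{lemma:trf:valid:UA,lemma:seq:preceding:transfers}: the accumulated set of $A_j^{\sn_{\tx_i}}$ is exactly $\{\tx_j^1,\dots,\tx_j^{\sn_{\tx_i}}\}$ (the empty base case being guaranteed by \cref{line:pa-check-first-trf}). The membership check above, combined with UA-Soundness (one can only prove membership of an accumulated element), places $\tx_i$ in this set, so $\tx_i=\tx_j^k$ for some $1\leq k\leq\sn_{\tx_i}$. Now $\tx_j^k$ is the transfer associated with $A_j^k$, and since its sender field is $j$ — the very issuer of $A_j^k$ — the corresponding evaluation of $\Pzk$ has $\prover=\snd=j\neq i=\rcv$, so the guard on \cref{line:pzk-sending-transfer-same-sn} fires and asserts $\sn_\snd=\sn_\prover$, i.e. $\sn_{\tx_i}=k$. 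Hence $\tx_i=\tx_j^{\sn_{\tx_i}}$ is exactly the transfer associated with $A_j^{\sn_{\tx_i}}$, whose issuer $p_j$ is its sender; by \Cref{def:sending:receiving:null:UAs} this makes $A_j^{\sn_{\tx_i}}$ a sending UA for $\tx_i$, completing the match.

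The main obstacle I anticipate is this last step: the membership proof certifies only that $\tx_i$ sits \emph{somewhere} in $p_j$'s accumulated history, and upgrading this to ``$\tx_i$ is the transfer associated with $A_j^{\sn_{\tx_i}}$ at the matching position'' requires simultaneously (i) the disjoint-union decomposition of the accumulator into its preceding transfers, (ii) the coincidence of the sender field with the issuer that forces the $\prover=\snd$ branch, and (iii) the sequence-number binding of \cref{line:pzk-sending-transfer-same-sn} to conclude $k=\sn_{\tx_i}$. A secondary point needing care is the passage from ``$\apverify$ evaluates to true inside $\Pzk$'' to ``valid AP'' in the sense of the definition, which relies on $\apverify$ being deterministic given the public parameters.
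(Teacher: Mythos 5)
Your proposal is correct and follows essentially the same route as the paper's own proof: unfold AP-Knowledge-Soundness and ZKP-Knowledge-Soundness of the receiving UA to land in the $\prover=\rcv$ branch of $\Pzk$, extract from \crefrange{line:pzk-valid-sending-AP-exists-start}{line:pzk-valid-sending-AP-exists-end} a valid UA $A_j^{\sn_{\tx_i}}$ containing $\tx_i$, decompose its accumulated set into the preceding transfers $\tx_j^1,\dots,\tx_j^{\sn_{\tx_i}}$ using the empty base case of \cref{line:pa-check-first-trf}, and then use the $\prover=\snd$ sequence-number check to force $\tx_i=\tx_j^{\sn_{\tx_i}}$. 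The only divergence is that the paper additionally dispatches the degenerate null-UA case (a transfer from $p_i$ to itself), which you exclude by reading $p_j\neq p_i$ as a hypothesis --- a defensible reading of the statement; incidentally, your citation of \cref{line:pzk-sending-transfer-same-sn} for the sequence-number binding is more precise than the paper's reference to \cref{line:pzk-check-pvr-is-snd}.
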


\begin{proof}
Consider a valid \nonnullswitch{}receiving UA $A_i^\sn$ (\Cref{def:sending:receiving:null:UAs}) at sequence number $\sn$ for process $p_i$ corresponding to some receiving transfer $\tx_i=\aangle{j, \star, i, \sn_{\tx_i}}$ whose sender is $p_j$ and receiver is $p_i$. Two cases arise depending on whether $A_i^\sn$ is a null or non-null UA (cf. \Cref{def:sending:receiving:null:UAs}).
\begin{itemize}
    
    \item If $A_i^\sn$ is a null UA, it is both a sending UA and a receiving UA, and corresponds to a transfer $\tx_i=\aangle{i, \star, i, \sn}$ from $p_i$ to itself, thus fulfilling the lemma. 
    
    \item If $A_i^\sn$ is a non-null receiving UA, this $A_i^\sn$ is associated to a valid AP $\ap_i^{\sn}$ (\Cref{def:valid:UA}). \Cref{line:pzk-valid-sending-AP-exists-start,line:pzk-valid-sending-AP-exists-end} of \Pzk (\Cref{alg:pred}) imply the existence of a valid UA $A_j^{\sn_{\tx_i}}$ for $p_j$ at sequence number $\sn_{\tx_i}$, with $j\neq i$ (since $A_i^\sn$ is non-null), and $\tx_i\in A_j^{\sn_{\tx_i}}$.
    
    Consider the sequence $\precedingUAs(A_j^{\sn_{\tx_i}})=(A_j^{k})_{0\leq k \leq \sn_{\tx_i}}$ of UAs preceding $A_j^{\sn_{\tx_i}}$ (\Cref{rem:seq:previous:UAs}), and the sequence $\precedingTrans(A_j^{\sn_{\tx_i}})=(\tx_j^k)_{0<k\leq \sn_{\tx_i}}$ of transfers preceding $A_j^{\sn_{\tx_i}}$ (\Cref{lemma:seq:preceding:transfers}). By construction of the two sequences, each transfer $\tx_j^k$ is valid for $p_j$ at sequence number $k$, which implies
    \begin{equation}
        \forall k\in [1..\sn_{\tx_i}]: A_j^{k}=A_j^{k-1}\uplus\{\tx_j^k\}.
    \end{equation}
    \Cref{line:pa-check-first-trf} of \Pa (\Cref{alg:pred}) implies that $A_j^0$ is empty. This observation and the above equation yield that $A_j^{\sn_{\tx_i}}$ contains exactly the transfers present in the sequence $\precedingTrans(A_j^{\sn_{\tx_i}})$. 
    
    Since $\tx_i\in A_j^{\sn_{\tx_i}}$, there exists some $k_0\in [1..\sn_{\tx_i}]$ such that $\tx_i=\tx_j^{k_0}$. As the sender of $\tx_i$ is $p_j$, \cref{line:pzk-check-pvr-is-snd} of \Pzk (\Cref{alg:pred}) implies that $k_0=\sn_{\tx_i}$, and that $\tx_i=\tx_j^{\sn_{\tx_i}}$ by the unicity of a valid transfer for a given process and sequence number, as stated in \Cref{lemma:trf:valid:UA}.\qedhere
\end{itemize}
\end{proof}

\mparagraph{Notations.}
For the remaining proofs, we introduce the following notations to manipulate the different notions we have discussed.
All notations are defined with respect to a particular global history $H$.
\begin{itemize}
    \item $\transferFunc(A_i^\sn) \defeq \transfer_\snd^{\sn'}(v,\rcv)$ is the transfer invocation corresponding to a valid UA $A_i^\sn$ when $\sn>0$ (see \Cref{lemma:trf:valid:UA}). 
    $A_i^\sn$ might be a sending UA (in which case $i=\snd$ and $\sn=\sn'$) or a receiving UA (in which case $i=\rcv$).
    
    \item $\sendingUA(\transfer_i^{\sn}(v,j)) \defeq A_i^\sn$ is the sending UA issued by $p_i$ corresponding to a valid transfer invocation $\transfer_i^{\sn}(v,j)$ performed by $p_i$ towards $p_j$. By definition of a valid transfer and \Cref{lemma:matching:receiving:UA}, this sending UA always exists, and by AP-Agreement, it is unique.
    \TA{Is AP-Agreement still relevant here?}
    As a shortcut, we further say that $\sn$ (the sequence number of the UA $A_i^\sn$) is the sequence number of the transfer invocation $\transfer_i^{\sn}(v,j)$.

    \newcommand{\transferTest}{\ensuremath{\mathsf{transfer}}}
    \newcommand{\transferTestTwo}{\ensuremath{\mathsf{transfer}}\xspace}
    \newcommand{\sendingUAtest}{\mathit{sendingAPtest}}
    
    \item $\sendingUA(A_i^{\sn}) \defeq A_j^{\sn'}$ is by extension the sending UA issued by $p_j$ corresponding to a valid receiving UA $A_i^{\sn}$ issued by $p_i$. $\sendingUA(A_i^{\sn})$ is a shortcut for $\sendingUA(\transferFunc(A_i^{\sn}))$.
    In case $A_i^{\sn}$ is a null UA (\Cref{def:sending:receiving:null:UAs}), we have $\sendingUA(A_i^{\sn})=A_i^{\sn}$.
    
    \item $\receivingUA(\transfer_i^{\sn}(v,j)) \defeq A_j^{\sn'}$ is, when it exists, the receiving UA issued by $p_j$ corresponding to a valid transfer invocation $\transfer_i^{\sn}(v,j)$ performed by $p_i$ towards $p_j$. When both the sender $p_i$ and the receiver $p_j$ are correct, this receiving UA always exists because of the network's reliability, and because the internal operation $\prove(..)$ of \Cref{alg:aat} (\crefrange{line:aat-prove-start}{line:aat-prove-end}) does not contain any blocking operation.
    
    When $\transfer_i^{\sn}(v,j)$ has no corresponding receiving UA (this can happen when either the sender $p_i$ or the receiver $p_j$ are Byzantine), by convention we define $\receivingUA(\transfer_i^{\sn}(v,j)) \defeq \bot$. 
    \item If a correct process $p_i$ performs a $\balancee_i()$ invocation while its $\sn_i$ variable has value \sn, we denote it by $\balancee_i^\sn()$.%
    \footnote{Recall that we do not consider $\balancee()$ invocations from Byzantine processes.}
\end{itemize}
For simplicity, we denote by $\op_i^\sn()$ any invocation $\balancee_i^\sn()$ or $\transfer_i^\sn(\star,\star)$.





\subsubsection{AT-Sequentiality of Algorithm~\ref{alg:aat}}
\label{sec:proof:at-seq}
Let us consider any execution of \Cref{alg:aat}, captured as a global history $H=(L_1,...,L_n)$. 
\ft{To prove AT-Sequentiality (\Cref{lemma:at-sequentiality}), we must construct a mock history $\mock{H}$ of $H$ that preserves the local histories of correct processes and replaces the local histories $L_j$ of Byzantine processes $p_j$ by mock local histories $\mock{L}_j$ so that $\mock{H}$ can be AT-sequenced. We construct $\mock{H}$ and the corresponding AT-sequence $\sequence_i$ incrementally by introducing several intermediary definitions and lemmas.}

In the following, $C$ is the set of correct processes, and $B$ the set of Byzantine processes.
$C\cup B=\{p_1,..,p_n\}$.
We construct the mock local histories $\{\mock{L}_j\}_{p_j\in B}$ of Byzantine processes as follows.

\newcommand{\UASet}{\Sigma}

\begin{definition}
We define the set $\UASet$ of UAs defined recursively by the following rules.
\begin{itemize}
    \item $\UASet$ contains all the UAs issued by correct processes at \cref{line:aat-process-add-trf} of \Cref{alg:aat}. This includes both sending and receiving UAs.
    \item If some receiving UA $A$ belongs to $\UASet$, then $\sendingUA(A)$ also belongs to $\UASet$,
    $$\big\{\sendingUA(A)\mid A \text{ is a valid receiving UA}\wedge A\in\UASet\big\} \subseteq \UASet.$$
    \item If some UA $A$ belongs to $\UASet$, then all UAs by the same issuer that precedes $A$ also belong to $A$,
    \begin{equation}\label{eq:precedingUAs:in:APSet}
    \bigcup_{A\in\UASet}\supportSet(\precedingUAs(A))\subseteq \UASet.    
    \end{equation}
\end{itemize}
$\UASet$ can be seen as the causal transitive closure of the UAs issued by correct processes: it contains all the UAs that can be traced back to some UA issued by a correct process, either through a sending/receiving relationship as captured by \Cref{lemma:matching:receiving:UA}, or through local precedence as captured by \Cref{rem:seq:previous:UAs}.
\end{definition}

\begin{lemma}[Validity of UAs in $\UASet$]\label{remark:validity:UAs:APSet}
    All UAs contained in $\UASet$ are valid.
\end{lemma}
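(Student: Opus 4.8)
The plan is to prove the lemma by structural (well-founded) induction on the recursive definition of $\UASet$. The base corresponds to the ``seed'' UAs introduced by the first generation rule (those issued by correct processes at \cref{line:aat-process-add-trf} of \Cref{alg:aat}), while the two remaining rules are closure operations. I would show that each closure rule preserves validity, so that validity, established for the seeds, propagates to every element of $\UASet$.

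The two inductive steps are immediate from the preliminary results. For the precedence-closure rule, if $A=A_i^\sn$ is a valid UA already in $\UASet$, then \Cref{lemma:preceding:UAs} guarantees that every member of $\precedingUAs(A_i^\sn)=(A_i^0,\dots,A_i^\sn)$ is itself a valid UA issued by $p_i$; hence the preceding UAs added by this rule are valid. For the sending/receiving-closure rule, if $A$ is a valid receiving UA in $\UASet$, then \Cref{lemma:matching:receiving:UA} yields a valid sending UA $\sendingUA(A)$; hence the UA added by this rule is valid as well. Each inductive step thus reduces to citing the appropriate earlier lemma.

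The main obstacle is the base case: I must show that every UA $A'_i$ produced by a correct process at \cref{line:aat-process-add-trf} is valid in the sense of \Cref{def:valid:UA}, i.e., that the matching call $\AccountUpdate.\approve(\aangle{A'_i,\balcom'_i},\data)$ at \cref{line:aat-process-create-ap} returns a valid AP rather than \abort. To this end I would argue that the $\data$ assembled by the correct process makes $\Pa(\aangle{A'_i,\balcom'_i},\data,\sn_i+1)$ evaluate to \ttrue: the proof $\zkp$ built at \cref{line:aat-process-create-zkp} passes the check at \cref{line:pa-check-zkp} because the process constructs $\pubdata$ and $\secdata$ satisfying $\Pzk$ (by ZKP-Completeness); the empty-accumulator check at \cref{line:pa-check-first-trf} holds whenever this is $p_i$'s first transfer, since $A_i$ was initialized to the empty accumulator; and the preceding-AP check at \cref{line:pa-check-ap} holds because $\ap_i$ is precisely the AP the process stored at its previous step. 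Once $\Pa$ is established, AP-Termination guarantees that $\approve$ returns a valid AP for $\aangle{A'_i,\balcom'_i}$ at the current sequence number, which by \Cref{def:valid:UA} makes $A'_i$ a valid UA.

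The delicate point inside this base case is the preceding-AP requirement, since it refers to an AP produced one step earlier by the same correct process. I would discharge it by an auxiliary induction on the process's sequence number $\sn_i$: the base is the genesis AP installed during setup (\Cref{alg:aat-init-vars}), which is a valid AP for the empty accumulator at $\sn_i=0$, and the step reuses the validity already established at $\sn_i-1$. Combining this inner induction with the outer structural induction over $\UASet$ then yields that every UA in $\UASet$ is valid.
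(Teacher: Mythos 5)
Your proposal is correct and takes essentially the same approach as the paper: its proof likewise rests on correct-process UAs being valid ``by construction'' and on applying \Cref{lemma:matching:receiving:UA} and \Cref{rem:seq:previous:UAs} for the two closure rules of $\UASet$. The only difference is one of detail—you make the structural induction explicit and unpack the base case (ZKP-Completeness, AP-Termination, and the inner induction on sequence numbers down to the genesis AP), all of which the paper compresses into the phrase ``valid by construction.''
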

\begin{proof}
    The remark holds because all UAs issued by correct processes are valid by construction, and results from the application of \Cref{lemma:matching:receiving:UA} and \Cref{rem:seq:previous:UAs} when defining $\sendingUA(\cdot)$ and $\precedingUAs(\cdot)$ respectively.
\end{proof}

\newcommand{\transactionSet}[1]{\Theta_{#1}}%

\begin{definition}[Mock history $\mock{H}$]\label{def:mock:H}
For a Byzantine process $p_j\in B$, we construct the set $\transactionSet{j}$ of transfer invocations whose sender is $p_j$, and for which both a sending and a receiving UA can be found in $\UASet$. More formally, we have
$$
    \transactionSet{j}\defeq\left\{ \transfer_j^{\sn}(\star,\star) \;\left|\;
    \begin{array}{c}
    \exists A_1,A_2\in\UASet: A_1 \text{ is a sending UA}\;\wedge\; 
    A_2 \text{ is a receiving UA}\;\wedge \\
    \;\transferFunc(A_1)=\transferFunc(A_2)=\transfer_j^{\sn}(\star,\star) \end{array}
    \right.\right\}.
$$
\newcommand{\SNSet}{\mathit{SN}}%
Let us note $\SNSet_j$ the set of the sequence numbers of the transfer invocations present in $\transactionSet{j}$. $\SNSet_j$ is a set of strictly positive integers (since $p_j$'s genesis UA $A_j^0$ created at \cref{line:aat-init} of \Cref{alg:aat-init-vars} does not have a corresponding transfer invocation), $\SNSet_j\subseteq \mathbb{N}^*$.
Note that, by AP-Agreement, each sequence number in $\SNSet_j$ corresponds to a single transfer invocation in $\transactionSet{j}$.
As a result, we can define the mock local history $\mock{L}_j$ of $p_j$ as the sequence of transfer invocations present in $\transactionSet{j}$ ordered by their sequence numbers,
$$
\mock{L}_j\defeq \big( \transfer_j^{\sn}(\star,\star)\in \transactionSet{j} \big)_{\sn\in \SNSet_j}.
$$
Combining the mock local histories $(\mock{L}_j)_{j\in B}$ of Byzantine processes with the local histories $(L_i)_{i\in C}$ of correct processes yields a mock history $\mock{H}$.
\end{definition}

\newcommand{\perceivedSet}{\mathcal{S}}

\begin{definition}[Set $\perceivedSet_i$ of operation perceived by a correct process $p_i$]
    Consider a correct process $p_i$, and the set $\perceivedSet_i$ containing all invocations made by $p_i$ (to the operations $\transfer$ and $\balancee$), and all transfer invocations by other processes (correct of faulty) present in $\mock{H}$.
    \begin{equation}\label{eq:perceivedSet_i}
    \perceivedSet_i=\supportSet(L_i)\cup\bigcup_{j\in C\setminus\{i\}}\transferSet(L_j)
    \cup\bigcup_{j\in B}\transferSet(\mock{L}_j),
    \end{equation}
    where $C$ and $B$ are the sets of correct and Byzantine processes as defined above, respectively, and $\transferSet(\cdot)$ is the notation introduced in \Cref{sec:amt-spec} to denote the set of transfer invocations contained in a sequence.
\end{definition}

    To produce a sequence $\sequence_i$ from the elements of $\perceivedSet_i$, we construct a partial order $\POATSeq$ on $\perceivedSet_i$. We do so incrementally, by introducing two binary relations, $\POATSeqOne$ and $\POATSeqTwo$. 
    \begin{itemize}
        \item $\POATSeqOne$ totally orders the invocation of $p_i$ and ensures that incoming transfers received by a process $p_j$ are ordered before the subsequent balance invocations (if $j=i$) and outgoing transfers issued by $p_j$. 
        \item $\POATSeqTwo$ focuses specifically on $p_i$ to guarantee that incoming transfers received by $p_i$ are totally ordered with respect to $p_i$'s local invocation to $\balancee(\cdot)$.
    \end{itemize}
    Distinguishing between $\POATSeqOne$ and $\POATSeqTwo$ as intermediary steps towards $\POATSeq$ will in turn make it easier to prove that the $\POATSeq$ is acyclic (a necessary condition to show that it is a partial order).

\begin{definition}[Binary relations $\POATSeqOne$, $\POATSeqTwo$, and $\POATSeq$ on $\perceivedSet_i$]
    We define $\POATSeqOne$ on $\perceivedSet_i$ as follows.
    \begin{itemize}
        \item {\bf Inclusion of $\bm{p_i}$'s local order.} $\POATSeqOne$ respects the local process order $\rightarrow_i$ of operations invoked by $p_i$, \ie $\rightarrow_i\subseteq\POATSeqOne$.

        \newcommand{\transferToJ}{\transTo{j}}
        \newcommand{\transferFromJ}{\transFrom{j}}
    
        \item {\bf Ordering of incoming transfers.} 
        \begin{itemize}
            \item For any process $p_j$, $\transferToJ=\transfer_{\star}^{\star}(\star,j)\in \perceivedSet_i$ a \nonnullswitch{}transfer invocation whose receiver is $p_j$, and $\transferFromJ=\transfer_{j}^{\star}(\star,\star)\in \perceivedSet_i$ a \nonnullswitch{}transfer invocation by $p_j$ such that $\transferToJ\neq\transferFromJ$, if $\receivingUA(\transferToJ)\in \precedingUAs(\sendingUA(\transferFromJ))$, then $\transferToJ\POATSeqOne \transferFromJ$.
            
            \item For \nonnullswitch{}transfer invocation $\transTo{i}=\transfer_{\star}^{\star}(\star,i)\in \perceivedSet_i$ whose receiver is $p_i$, and any invocation $b_i=\balancee_i()/v$ performed by $p_i$, if the execution of \cref{line:aat-process-add-trf} that generated $\receivingUA(\transTo{i})$ occurred before that of \cref{line:aat-blc} corresponding to $b_i$ , then $\transTo{i}\POATSeqOne b_i$.     
        \end{itemize}
    \end{itemize}
    We define $\POATSeqTwo$ on $\perceivedSet_i$ as follows.
    \begin{itemize}
           \item {\bf Ordering of $\bm{p_i}$'s balance.} For any invocation $b_i=\balancee_i()/v$ performed by $p_i$, and any \nonnullswitch{}transfer invocation $\transTo{i}=\transfer_{\star}^{\star}(\star,i)\in \perceivedSet_i$ whose receiver is $p_i$, if the execution of \cref{line:aat-blc} corresponding to $b_i$ occurred before that of \cref{line:aat-process-add-trf} that generated $\receivingUA(\transTo{i})$, then $b_i\POATSeqTwo \transTo{i}$.
    \end{itemize}
    $\POATSeq$ is defined as the transitive closure of the union of $\POATSeqOne$ and $\POATSeqTwo$, \ie
    $
    \POATSeq = \big(\POATSeqOne\cup\POATSeqTwo\big)^{+}
    $.
\end{definition}

\begin{lemma}\label{subremark:POATSeqOne:acyclic}
$\POATSeqOne$ is acyclic.
\end{lemma}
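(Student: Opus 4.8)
The plan is to show that every edge of $\POATSeqOne$ is consistent with an underlying acyclic \emph{causal} order on events, so that a cycle in $\POATSeqOne$ would force a cycle in that causal order. I would take the events to be the valid UAs collected in $\UASet$ together with the $\balancee$ invocations of the (correct) process $p_i$, and introduce a causal relation $\leadsto$ generated by three kinds of elementary steps: \emph{(i)} \emph{local} steps $A_j^k\leadsto A_j^{k'}$ whenever $k<k'$ (the earlier UA lies in the $\precedingUAs$ of the later one, \Cref{rem:seq:previous:UAs}); \emph{(ii)} \emph{send-before-receive} steps $\sendingUA(\tau)\leadsto\receivingUA(\tau)$ for each transfer $\tau$, justified by the matching of \Cref{lemma:matching:receiving:UA}; and \emph{(iii)} the real-time order among $p_i$'s own events (its successive UAs and its balance reads), which is well defined precisely because $p_i$ is correct and sequential.

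The key lemma to establish first is that $\leadsto$ is acyclic. I would prove it by assigning to each event its \emph{earliest realization time} $\tau$ in the execution---the real time at which a valid AP/UA for it first appears, and the execution time for $p_i$'s balance reads---and checking that every elementary step strictly increases $\tau$. A local step increases $\tau$ because producing a valid $A_j^{k'}$ requires, through \cref{line:pa-check-ap} of $\Pa$, a valid preceding AP at $k'-1$, hence inductively (by AP-Knowledge-Soundness) that $A_j^k$ is already realized. A send-before-receive step increases $\tau$ because \cref{line:pzk-valid-sending-AP-exists-start,line:pzk-valid-sending-AP-exists-end} of $\Pzk$ force any producer of a valid receiving UA to know a valid sending AP for the same transfer, so by AP- and ZKP-Knowledge-Soundness the sending UA must already be realized, a PPT adversary being unable to reference a proof before it exists except with probability at most $\epsilon(\lambda)$. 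Real-time steps increase $\tau$ by definition. Hence $\tau$ is strictly monotone along $\leadsto$, ruling out cycles.

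Next I would anchor each operation of $\perceivedSet_i$ to a single event $\mathrm{ev}(\cdot)$: a transfer invocation maps to its sending UA $\sendingUA(\cdot)$, and a balance invocation $\balancee_i()$ maps to itself. A short case analysis over the three families of $\POATSeqOne$-edges then shows $o\POATSeqOne o'\Rightarrow\mathrm{ev}(o)\leadsto^{+}\mathrm{ev}(o')$. For an edge $\transTo{j}\POATSeqOne\transFrom{j}$ one routes $\sendingUA(\transTo{j})\leadsto\receivingUA(\transTo{j})\leadsto\sendingUA(\transFrom{j})$, the last step being local and \emph{strict} because $\transTo{j}\neq\transFrom{j}$ together with the uniqueness of the transfer attached to a UA (\Cref{lemma:trf:valid:UA}) forbids the two UAs from coinciding; for an edge into a balance $b_i$ one uses a send-before-receive step followed by a real-time step; and for the $\rightarrow_i$ edges one uses local or real-time steps, all consistent with $p_i$'s sequential execution. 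A cycle in $\POATSeqOne$ would thus give a closed $\leadsto$-walk of positive length through the images $\mathrm{ev}(\cdot)$, i.e. a cycle in $\leadsto$, contradicting its acyclicity; therefore $\POATSeqOne$ is acyclic.

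I expect the main obstacle to be the acyclicity of $\leadsto$ itself---specifically, turning ``a valid receiving UA cannot be produced before the matching sending UA'' into a rigorous statement, since this is where the argument must rely on the knowledge-soundness of Agreement Proofs and ZKPs and on the adversary being PPT, rather than on any real-time bookkeeping (which is unavailable for Byzantine issuers). A secondary subtlety is that each transfer naturally sits at two events (a sending and a receiving UA); I sidestep this by always anchoring at the sending UA and paying for the discrepancy with an explicit send-before-receive step, which keeps $\mathrm{ev}(\cdot)$ single-valued while still covering the edges that are entered on the receiving side.
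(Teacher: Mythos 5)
Your proposal is, at its core, the same argument as the paper's: the paper also reduces acyclicity to monotonicity with respect to physical time, assigning to each transfer invocation the creation time of its sending UA and to each $\balancee_i()$ invocation its execution time, and it justifies strict increase along every $\POATSeqOne$ edge exactly as you do, by invoking AP-Knowledge-Soundness and ZKP-Knowledge-Soundness to argue that a prover must know the data referenced by a proof \emph{before} that proof is generated. Your intermediate causal relation $\leadsto$ and the anchoring map $\mathrm{ev}(\cdot)$ are extra scaffolding around the same idea, not a different route.

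There is, however, one concrete defect in that scaffolding. The system permits \emph{null} transfers $\transfer_j^{\sn}(v,j)$ from a process to itself, and for such a transfer $\tau$ the paper's conventions give $\sendingUA(\tau)=\receivingUA(\tau)$ (\Cref{def:sending:receiving:null:UAs}). Your rule \emph{(ii)} then yields a self-loop, so $\leadsto$ as literally defined is \emph{not} acyclic, and your monotonicity argument genuinely does not cover that step: for a null UA, the $\Pzk$ predicate takes the branch of \cref{line:pzk-check-snd-is-rcv} rather than \cref{line:pzk-valid-sending-AP-exists-start,line:pzk-valid-sending-AP-exists-end}, so no ``earlier'' sending AP is being referenced and the realization time cannot strictly increase. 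The repair is mechanical: restrict rule \emph{(ii)} to transfers with $\snd\neq\rcv$, and in your routing drop the send-before-receive step whenever its two endpoints coincide; the strictness of the remaining local step is still secured by your $\transTo{j}\neq\transFrom{j}$ argument via \Cref{lemma:trf:valid:UA}. The paper avoids the issue altogether by timestamping operations at their sending UAs only and never introducing the degenerate edge.
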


\begin{proof}
\newcommand{\timestamp}{\tau}
The distributed system is implicitly embedded in some physical time, which we assume is Newtonian and linear.\footnote{
    Note that the proof also holds in the context of special relativity, even though there may not be an absolute global clock ordering the events of the system.
    This is because any observer in the system still sees the events in a linear (total) order that may, however, differ from one observer to the next.
    For this proof, it suffices to take any observer (\eg $p_i$) as a frame of reference to construct the timestamps.
    But because of the causality principle, two causally-linked events are necessarily seen in the same order, no matter the chosen frame of reference.
}
We associate each element $x$ of $\perceivedSet_i$ with a timestamp $\timestamp(x)$ from this physical time as follows:
\begin{itemize}
   \item if $x=\balancee_i()/v$ is a balance invocation by $p_i$, $\timestamp(x)$ is the time point at which \cref{line:aat-blc} of \Cref{alg:aat} is executed;
   
   \item if $x=\transfer_{\star}^{\star}(\star,\star)$ is a transfer invocation, $\timestamp(x)$ is the time point at which $\sendingUA(x)$ was created.
\end{itemize}
Consider $x,y\in \perceivedSet_i$ such that $x\POATSeqOne y$.
At least one of the following cases holds.
\begin{itemize}
   \item {\it Case 1 (Inclusion of $p_i$'s local order):} $x$ and $y$ are both invoked by $p_i$ with $x\rightarrow_i y$. Because $p_i$ is sequential, by definition of $\rightarrow_i$ we have $\timestamp(x)<\timestamp(y)$.
   
   \item {\it Case 2 (Ordering of incoming transfers):} 
   \begin{itemize}
       \item $x=\transfer_{\star}^{\star}(\star,j)\in \perceivedSet_i$ is a \nonnullswitch{}transfer invocation whose receiver is some process $p_j$, and $y=\transfer_{j}^{\star}(\star,\star)\in \perceivedSet_i$ a \nonnullswitch{}transfer invocation by $p_j$, such that $\receivingUA(x)\in \precedingUAs(\sendingUA(y))$. This last inclusion means that $\sendingUA(y)$ can be linked recursively to $\receivingUA(x)$ using proofs issues by $p_j$, the predicates $\Pa$ and $\Pzk$, and AP-Knowledge-Soundness and ZKP-Knowledge-Soundness. AP-Knowledge-Soundness and ZKP-Knowledge-Soundness both imply that $p_j$ must know the secret data required to issue a proof {\it before} the proof is generated. As a result $\receivingUA(x)\in \precedingUAs(\sendingUA(y))$ implies that $\timestamp(x)<\timestamp(y)$.
       \item $x=\transfer_{\star}^{\star}(\star,i)\in \perceivedSet_i$ is a \nonnullswitch{}transfer invocation whose receiver is $p_i$, $y=\balancee_i()/v$ is a $\balancee$ invocation performed by $p_i$, such that the execution of \cref{line:aat-process-add-trf} that generated $\receivingUA(x)$ occurred before that of \cref{line:aat-blc} corresponding to $y$. The same reasoning as above yields that $\timestamp(x)<\timestamp(y)$. 
   \end{itemize}
\end{itemize}

We conclude that $\POATSeqOne$ respects the timestamps assigned by $\timestamp(\cdot)$. Assuming physical time is acyclic, this implies that $\POATSeqOne$ is also acyclic.
\end{proof}
    
\begin{lemma}\label{subremark:POATSeqTwo:acyclic}
$\POATSeqTwo$ is acyclic.
\end{lemma}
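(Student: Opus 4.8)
The plan is to exploit the fact that $\POATSeqTwo$, unlike $\POATSeqOne$, has a purely \emph{bipartite} shape, so that acyclicity follows by pure bookkeeping rather than by any appeal to physical time. First I would observe that, directly from its definition, every pair ordered by $\POATSeqTwo$ is of the form $b_i \POATSeqTwo \transTo{i}$, where $b_i = \balancee_i()/v$ is a balance invocation performed by $p_i$ and $\transTo{i} = \transfer_{\star}^{\star}(\star,i)$ is a transfer invocation whose receiver is $p_i$. Writing $X$ for the set of $\balancee$ invocations occurring in $\perceivedSet_i$ and $Y$ for the set of $\transfer$ invocations occurring in $\perceivedSet_i$, this observation says precisely that $\POATSeqTwo \subseteq X \times Y$: every edge leaves a balance invocation and enters a transfer invocation, and there are no edges in the reverse direction nor within either set.

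The only point that genuinely needs a word of justification is that $X$ and $Y$ are disjoint. This holds because each element of $\perceivedSet_i$ is a single operation invocation, and one invocation of the $\balancee$ operation is never simultaneously an invocation of the $\transfer$ operation; hence $X \cap Y = \varnothing$.

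Finally I would conclude by contradiction. Suppose $\POATSeqTwo$ contained a cycle $z_1 \POATSeqTwo z_2 \POATSeqTwo \cdots \POATSeqTwo z_k \POATSeqTwo z_1$ with $k \geq 1$. Then $z_1$ is the source of the edge $z_1 \POATSeqTwo z_2$, so $z_1 \in X$; and $z_1$ is the target of the edge $z_k \POATSeqTwo z_1$, so $z_1 \in Y$. This forces $z_1 \in X \cap Y = \varnothing$, a contradiction; the degenerate case $k = 1$ of a self-loop $z_1 \POATSeqTwo z_1$ is ruled out identically. Hence $\POATSeqTwo$ is acyclic. I expect essentially no obstacle in this argument: in contrast to the timestamp reasoning needed for $\POATSeqOne$, the claim here is immediate once the one-directional, bipartite structure of the relation and the disjointness of the balance and transfer invocation sets are made explicit.
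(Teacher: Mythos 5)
Your proof is correct and takes essentially the same approach as the paper, which also argues that $\POATSeqTwo$ is a bipartite relation directed from \balancee invocations to \transfer invocations and hence cycle-free; you have merely spelled out the bookkeeping (disjointness of the two sides and the source/target contradiction) that the paper leaves implicit.
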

\begin{proof}
$\POATSeqTwo$ is a simple bipartite binary relation between \balancee invocations on one side, and \transfer invocations on the other, hence it cannot contain cycles, and is therefore acyclic.
\end{proof}

\begin{lemma}\label{subremark:one:cup:two:acyclic}
$\POATSeqOne\cup\POATSeqTwo$ is acyclic.
\end{lemma}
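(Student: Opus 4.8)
The plan is to reuse the physical-timestamp argument of \Cref{subremark:POATSeqOne:acyclic}, but with a single modification that also tames the $\POATSeqTwo$ edges. Recall that the timestamp $\tau(\cdot)$ used there dates every transfer by the creation time of its \emph{sending} UA and every $\balancee$ invocation by the execution time of \cref{line:aat-blc}, and that it makes $\POATSeqOne$ strictly increasing. This timestamp does \emph{not} work directly for the union: a $\POATSeqTwo$ edge $b_i\POATSeqTwo\transTo{i}$ only asserts that $b_i$'s balance line precedes the creation of $\receivingUA(\transTo{i})$, and nothing prevents the \emph{sending} UA of $\transTo{i}$ (hence $\tau(\transTo{i})$) from being created long before $b_i$. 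So $\tau$ may \emph{decrease} along a $\POATSeqTwo$ edge.

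First I would define a modified timestamp $\theta(\cdot)$ on $\perceivedSet_i$ that agrees with $\tau$ everywhere except on transfers that $p_i$ has actually received and processed: for such an incoming transfer $\transTo{i}$, set $\theta(\transTo{i})$ to the execution time of the \cref{line:aat-process-add-trf} at $p_i$ that generated $\receivingUA(\transTo{i})$ (this is exactly the local event that the incoming-transfer and balance rules compare against), and keep $\theta(x)=\tau(x)$ for every $\balancee$ invocation and every other transfer. The goal is then to show that every edge of $\POATSeqOne\cup\POATSeqTwo$ strictly increases $\theta$; since physical time is a linear (hence acyclic) order, this immediately rules out any cycle.

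Then I would check the edges against $\theta$. For a $\POATSeqTwo$ edge $b_i\POATSeqTwo\transTo{i}$, and for the second incoming-transfer rule $\transTo{i}\POATSeqOne b_i$, the defining conditions compare $p_i$'s balance line with the very event now used by $\theta(\transTo{i})$, so strict monotonicity is immediate. For the local-order edges $\rightarrow_i$, monotonicity holds because $p_i$ is sequential and \prove contains no blocking step, so $\rightarrow_i$ coincides with the physical order of the dated events at $p_i$. The only delicate family is the first incoming-transfer rule $\transTo{j}\POATSeqOne\transFrom{j}$: here I would invoke the conclusion already established in \Cref{subremark:POATSeqOne:acyclic} that $\tau(\transTo{j})<\tau(\transFrom{j})$ (itself a consequence of AP- and ZKP-Knowledge-Soundness, which force the sending UA of $\transTo{j}$ to precede that of $\transFrom{j}$), and observe that passing from $\tau$ to $\theta$ can only \emph{raise} the value on the target side when $\transFrom{j}$ is received by $p_i$ (a receiving UA is always created after the corresponding sending UA), while the source side is affected only in the case $j=i$, where the rule's condition $\receivingUA(\transTo{i})\in\precedingUAs(\sendingUA(\transFrom{i}))$ directly places the receiving UA of $\transTo{i}$ before the sending UA of $\transFrom{i}$ at $p_i$. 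In every sub-case $\theta$ strictly increases.

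The main obstacle is precisely this re-timestamping and its effect on the first incoming-transfer rule: I must confirm that replacing an incoming transfer's sending-UA date by its (later) receiving-UA date never reverses an already-correct $\POATSeqOne$ edge. The subtle point is that an incoming transfer may sit on either side of a rule $\transTo{j}\POATSeqOne\transFrom{j}$, so I have to exclude the configuration in which both endpoints are re-dated unfavorably. This cannot occur: an incoming transfer of $p_i$ is the \emph{source} $\transTo{j}$ of such a rule only when $j=i$ (and then its target $\transFrom{i}$, being a transfer \emph{sent} by $p_i$, keeps its $\tau$-date), whereas it is the \emph{target} $\transFrom{j}$ only when $j\neq i$ (and raising the target's date preserves the strict inequality inherited from \Cref{subremark:POATSeqOne:acyclic}). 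Once strict $\theta$-monotonicity is verified for all edge types, $\POATSeqOne\cup\POATSeqTwo$ admits no cycle, and $\POATSeq=(\POATSeqOne\cup\POATSeqTwo)^{+}$ is a bona fide partial order.
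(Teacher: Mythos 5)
Your proposal is correct in substance, but it takes a genuinely different route from the paper's. The paper argues by contradiction on a \emph{shortest} cycle: since \Cref{subremark:POATSeqOne:acyclic,subremark:POATSeqTwo:acyclic} rule out monochromatic cycles, a minimal cycle must contain a pattern $b_i \POATSeqTwo \transTo{i} \POATSeqOne y$; the paper then shows that $y$ is necessarily an invocation of $p_i$ and that $b_i$ precedes $y$ in $p_i$'s process order, so the two edges can be shortcut into the single edge $b_i \POATSeqOne y$, contradicting minimality of the cycle. You instead construct a potential function: a re-timestamping $\theta(\cdot)$ under which \emph{every} edge of $\POATSeqOne\cup\POATSeqTwo$ is strictly increasing, which kills all cycles at once. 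Your route is more self-contained (it re-proves rather than reuses \Cref{subremark:POATSeqOne:acyclic}, invoking only the knowledge-soundness chain), and it buys slightly more than acyclicity: it embeds $\POATSeq$ into physical time, which is also essentially what \Cref{subremark:poset:finite:prede} needs. The paper's route avoids introducing a second timestamp and keeps the two acyclicity lemmas as black-box modules.

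There is one point you must shore up: transfers from $p_i$ to itself (the \emph{null} transfers of \Cref{def:sending:receiving:null:UAs}). Such a transfer is simultaneously an invocation in $L_i$ (hence a source of local-order edges) and a transfer ``received and processed by $p_i$'' (hence re-dated by your rule). If its receiving UA were created only when the self-addressed \transferm message is processed --- which is what your parenthetical ``a receiving UA is always created \emph{after} the corresponding sending UA'' suggests you have in mind --- then re-dating could push its $\theta$-value past later invocations of $p_i$, and the local-order edges leaving it would \emph{decrease} $\theta$, breaking the monotonicity claim on which your whole argument rests. What saves the proof is \Cref{def:sending:receiving:null:UAs}: for a null transfer the receiving UA \emph{is} the sending UA (a single null UA, created at \cref{line:aat-process-add-trf} during the invocation itself; the later \prove attempt on reception of the self-message cannot yield a fresh valid UA, since the non-membership condition checked at \cref{line:pzk-check-nonmem-trf} now fails). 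Hence your re-dating is a no-op on all of $p_i$'s own invocations, every element of $L_i$ is dated inside its invocation interval, and local-order monotonicity holds. Make this case explicit, and weaken ``after'' to ``at or after''; with that, the argument is complete.
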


\begin{proof}
\newcommand{\Cmin}{\mathcal{C}_{\mathsf{min}}}
The proof is by contradiction. Assume $\POATSeqOne\cup\POATSeqTwo$ contains at least one cycle. Consider the shortest such cycle $\Cmin$. Because $\POATSeqOne$ and $\POATSeqTwo$ are both acyclic (\Cref{subremark:POATSeqOne:acyclic,subremark:POATSeqTwo:acyclic}), $\Cmin$ must involve both $\POATSeqOne$ and $\POATSeqTwo$.

Consider one of the $\POATSeqTwo$ ordering in $\Cmin$. By definition of $\POATSeqTwo$, this ordering is of the form $b_i\POATSeqTwo \transTo{i}$, where $b_i$ is a balance invocation by $p_i$ and any $\transTo{i}$ is a \nonnullswitch{}transfer invocation whose receiver is $p_i$. Consider the successor $y$ of $\transTo{i}$ in $\Cmin$. By construction of $\POATSeqTwo$, $\transTo{i}\not \POATSeqTwo y$, which implies $\transTo{i} \POATSeqOne y$.

By definition of $\POATSeqOne$, $\transTo{i} \POATSeqOne y$ leads to two cases.
\begin{itemize}
    \item {\em Case 1:} $\transTo{i}\rightarrow_i y$. In this case, $y\in L_i$ by definition of $\rightarrow_i$, $p_i$'s local process order.
    \item {\em Case 2:}  $\transTo{i}\not\rightarrow_i y$. In this case, $\transTo{i} \POATSeqOne y$ resulted from the rule \emph{Ordering of incoming transfers}. $y$ is therefore either a $\balancee$ or $\transfer$ operation invoked by $p_i$, \ie we also have $y \in L_i$.
\end{itemize} 

We therefore have $b_i\POATSeqTwo \transTo{i} \POATSeqOne y$, with $b_i, y \in L_i$.
The definitions of $\POATSeqTwo$ and $\POATSeqOne$ imply that the invocation of $b_i$ by $p_i$ precedes in time the generation of $\receivingUA(\transTo{i})$ also by $p_i$, which itself precedes the invocation of $y$ still by $p_i$ (either directly in the case of $\POATSeqTwo$, and if $y$ is a $\balancee$ invocation, or using the same argument on cryptographic proofs as in \Cref{subremark:POATSeqOne:acyclic} if $y$ is a $\transfer$).
By definition of the process order $\rightarrow_i$, this leads to $b_i\rightarrow_i y$, and hence $b_i\POATSeqOne y$.
We can, therefore, remove $\transTo{i}$ from $\Cmin$, and replace $b_i\POATSeqTwo \transTo{i} \POATSeqOne y$ by $b_i\POATSeqOne y$, thus creating a cycle  $\Cmin'$ in $\POATSeqOne\cup\POATSeqTwo$ with one less element than $\Cmin$.
This is a contradiction as $\Cmin$ was assumed to be the shortest.
\end{proof}


\begin{corollary}
$\POATSeq$ is a partial order on $\perceivedSet_i$.
\end{corollary}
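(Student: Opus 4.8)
The plan is to establish the two defining properties of a strict partial order—transitivity and irreflexivity—since all the relations at play here ($\rightarrow_i$, $\POATSeqOne$, $\POATSeqTwo$, and $\POATSeq$) are strict. Transitivity comes for free: $\POATSeq$ is defined as the transitive closure $(\POATSeqOne\cup\POATSeqTwo)^{+}$, and a transitive closure is transitive by construction. So the only genuine content is irreflexivity, \ie that no $x\in\perceivedSet_i$ satisfies $x\POATSeq x$.

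For irreflexivity I would argue by contradiction. Suppose $x\POATSeq x$ for some $x\in\perceivedSet_i$. Unfolding the definition of the transitive closure, there must exist a finite chain $x=z_0\,(\POATSeqOne\cup\POATSeqTwo)\,z_1\,(\POATSeqOne\cup\POATSeqTwo)\,\cdots\,(\POATSeqOne\cup\POATSeqTwo)\,z_k=x$ with $k\geq 1$. Such a chain is precisely a cycle in $\POATSeqOne\cup\POATSeqTwo$, which directly contradicts \Cref{subremark:one:cup:two:acyclic}. Hence $\POATSeq$ is irreflexive.

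Finally, irreflexivity together with transitivity yields asymmetry (and hence antisymmetry): if both $x\POATSeq y$ and $y\POATSeq x$ held, transitivity would give $x\POATSeq x$, contradicting irreflexivity. A transitive, irreflexive relation is exactly a strict partial order, which concludes the corollary.

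As for difficulty, there is essentially no obstacle at this stage: all the hard work has already been discharged in \Cref{subremark:one:cup:two:acyclic}, whose proof is what reconciles the local-order constraints captured by $\POATSeqOne$ with the balance-ordering constraints captured by $\POATSeqTwo$. The only point worth flagging is the convention that ``partial order'' is meant in the strict sense, so that acyclicity of the underlying relation is exactly what guarantees irreflexivity of its transitive closure. With that convention in place, the corollary is simply a repackaging of the three preceding acyclicity lemmas.
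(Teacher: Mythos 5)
Your proposal is correct and follows exactly the paper's own argument: transitivity is immediate from the definition of $\POATSeq$ as a transitive closure, and acyclicity (equivalently, irreflexivity) follows from \Cref{subremark:one:cup:two:acyclic}, which is precisely the key lemma the paper invokes. The only difference is that you spell out the unfolding of the transitive closure into a cycle and the derivation of asymmetry, details the paper leaves implicit in its two-sentence proof.
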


\begin{proof}
$\POATSeq$ is transitive by construction. By \Cref{subremark:one:cup:two:acyclic}, it is the transitive closure of an acyclic relation and is, therefore, also acyclic. 
\end{proof}

\begin{lemma}\label{subremark:poset:finite:prede}
Each element of the poset ($\perceivedSet_i$,$\POATSeq$) only has a finite number of predecessors.
\end{lemma}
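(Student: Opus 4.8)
The plan is to reuse the physical timestamp $\tau(\cdot)$ introduced in the proof of \Cref{subremark:POATSeqOne:acyclic}, which assigns to each $\balancee$ invocation the time its \cref{line:aat-blc} fires, and to each transfer invocation the time its sending UA is created. The backbone of the argument is a \emph{local finiteness} fact: for every finite physical time $T$, only finitely many elements of $\perceivedSet_i$ have timestamp below $T$. This holds because $p_i$ is sequential and performs only finitely many $\balancee$ invocations before $T$, while only finitely many sending UAs are created system-wide before $T$ (the execution takes finitely many steps in any finite time interval, and each transfer of $\perceivedSet_i$ maps injectively to its sending UA by \Cref{lemma:trf:valid:UA}). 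It therefore suffices to bound, for each $x\in\perceivedSet_i$, the timestamps of its $\POATSeq$-predecessors.

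The subtlety is that $\POATSeq$ does \emph{not} respect $\tau(\cdot)$ globally: a $\POATSeqTwo$ edge $b_i\POATSeqTwo\transTo{i}$ may run backward in time, since the sending UA of $\transTo{i}$ can predate $b_i$. The observation that tames this is that a $\POATSeqTwo$ edge can never be the \emph{last} edge of a chain reaching a balance (balances are never $\POATSeqTwo$-targets), and that a $\POATSeqTwo$ edge immediately followed by a $\POATSeqOne$ edge is well-behaved: if $b_i\POATSeqTwo\transTo{i}\POATSeqOne w$, then \cref{line:aat-blc} of $b_i$ precedes the creation of $\receivingUA(\transTo{i})$, which in turn (by the ``Ordering of incoming transfers'' rule and the knowledge-soundness argument of \Cref{subremark:POATSeqOne:acyclic}) precedes $\tau(w)$; hence $\tau(b_i)<\tau(w)$. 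I would package this into an auxiliary lemma proved by induction on chain length: \emph{whenever a $\POATSeq$-chain from $y$ to $z$ ends with a $\POATSeqOne$ edge (in particular whenever $z$ is a balance invocation), then $\tau(y)<\tau(z)$}. The induction peels off the last edge, and when the second-to-last edge is a $\POATSeqTwo$, it uses the collapse above to jump from the balance at its source directly to the $\POATSeqOne$-successor.

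With this lemma in hand, I would finish by a case analysis on $x$. If $x$ is a $\balancee$ invocation, every chain reaching it ends in a $\POATSeqOne$ edge, so all predecessors have timestamp $<\tau(x)$ and local finiteness concludes. If $x$ is a transfer invocation, I split its predecessors according to whether some witnessing chain ends with a $\POATSeqOne$ or a $\POATSeqTwo$ edge. The first family again has timestamps $<\tau(x)$ and is finite. For the second, the final edge has the form $b_i\POATSeqTwo x$ with $x=\transTo{i}$ received by $p_i$; the eligible $b_i$ are $\balancee$ invocations of $p_i$ whose \cref{line:aat-blc} fires before the single physical event creating $\receivingUA(x)$, of which there are finitely many because $p_i$ is sequential. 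Each such $b_i$ is a balance invocation, so by the previous case it has a finite predecessor set, and the predecessors of $x$ reached through the second family lie in the finite union of these finite sets.

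I expect the main obstacle to be exactly this $\POATSeqTwo$ handling: because a single $\POATSeqTwo$ step is time-reversing, one cannot simply assert that $\POATSeq$ preserves $\tau(\cdot)$ and invoke local finiteness directly. The whole weight of the proof rests on showing that such time-reversals are confined to a \emph{final} $\POATSeqTwo$ edge into a transfer, and that the finitely many balance invocations $b_i$ that can sit at the source of that edge fully account for the extra predecessors.
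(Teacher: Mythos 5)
Your proof is correct and rests on exactly the ingredients that the paper's own (three-line) proof invokes: physical timestamps, the sequentiality of processes together with their having a starting point, and the implicit assumption that processes take finitely many steps per unit of time. The paper leaves all details to the reader, so your main addition---the chain lemma confining time-reversing $\POATSeqTwo$ edges to a final position (proved by the same ``collapse'' used in \Cref{subremark:one:cup:two:acyclic}), and the case analysis bounding the finitely many balance invocations that can feed such a final edge---is a faithful, more rigorous elaboration of the same argument rather than a genuinely different route.
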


\begin{proof}
This follows from the sequential nature of processes (and in particular that all processes have a starting point in their execution), the link between $\POATSeqOne$ and $\POATSeqTwo$ and physical time (\Cref{subremark:POATSeqOne:acyclic,subremark:POATSeqTwo:acyclic}), and the (implicit) assumption that processes only take a finite number of local steps per unit of time.
\end{proof}

\begin{definition}[Sequence $\sequence_i$]\label{def:sequence:i}
By \Cref{subremark:poset:finite:prede} the poset ($\perceivedSet_i$,$\POATSeq$) can be sorted topologically into a sequence. We define $\sequence_i$ as one of the topological sorts of  ($\perceivedSet_i$,$\POATSeq$). We note $\rightarrow_{\sequence_i}$ the total order induced by $\sequence_i$ on its elements.
\end{definition}

In the following, we will use the function $\total(\cdot,\cdot)$ introduced in \Cref{sec:amt-spec} to compute the account balance of a process resulting from the transfer it sends and receives. Given a process $p_j$ and a set of operations invocations $O$, $\total(j,O)$ is defined as
$$
\definitionTotal{j}
$$

\begin{lemma}\label{lemma:balance}
$\Forall b_i = \balancee_i()/v \in \sequence_i: v = \total(i,\{o \in \sequence_i \mid o \rightarrow_{\sequence_i} b_i\}).$
\end{lemma}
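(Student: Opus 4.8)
The plan is to exploit the fact that the value $v$ returned by an invocation $b_i=\balancee_i()/v$ is exactly the content of $p_i$'s local variable $\bal_i$ at the instant \cref{line:aat-blc} is executed, and that $\bal_i$ obeys a transparent invariant. First I would show, by induction over the successive executions of the internal operation \prove by $p_i$ (\crefrange{line:aat-prove-start}{line:aat-prove-end}), that at every point of $p_i$'s run
\[
\bal_i = \init_i + \sum_{\substack{\tx\ \text{processed}\\ i = \tx.\rcv}} \tx.v \;-\; \sum_{\substack{\tx\ \text{processed}\\ i = \tx.\snd}} \tx.v ,
\]
where the sums range over the transfers $p_i$ has processed so far (\crefrange{line:aat-process-update-bal-start}{line:aat-process-update-bal-end} add $\tx.v$ as a credit when $i=\tx.\rcv$ and subtract it as a debit when $i=\tx.\snd$; a self-transfer is both and cancels, matching \cref{line:pzk-check-snd-is-rcv} where $\bal_i$ is left unchanged). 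Since, by construction of $\transferFunc(\cdot)$ and the unicity of a valid transfer (\Cref{lemma:trf:valid:UA}), each $\tx.v$ is the amount carried by the corresponding \transfer invocation in $\perceivedSet_i$, comparing the display with the definition of $\total(i,\cdot)$ reduces the lemma to a single claim: the transfers involving $p_i$ that lie in the prefix $\{o\in\sequence_i \mid o \rightarrow_{\sequence_i} b_i\}$ are \emph{exactly} those $p_i$ processed strictly before $b_i$ in physical time.

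For the debit side this is immediate. A committed transfer sent by $p_i$ belongs to $\supportSet(L_i)$ and is processed synchronously inside $p_i$'s own \transfer call (\cref{line:aat-trf-process}), whereas an aborted \transfer (which returns at \cref{line:aat-trf-check-bal} without invoking \prove) changes neither $\bal_i$ nor, by the convention that a bare $\transfer_i(v,\star)$ denotes a committed transfer, $\total(i,\cdot)$; moreover $\perceivedSet_i$ draws $p_i$-sender invocations only from $\supportSet(L_i)$, so no spurious debit can appear. Because $\rightarrow_i \subseteq \POATSeqOne \subseteq \rightarrow_{\sequence_i}$ and $\rightarrow_i$ is total on $\supportSet(L_i)$, the restriction of $\rightarrow_{\sequence_i}$ to $p_i$'s own invocations coincides with $\rightarrow_i$; hence a committed debit precedes $b_i$ in $\sequence_i$ iff $p_i$ issued (and thus processed) it before $b_i$.

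The credit side is the crux. Given a credit $\transTo{i}=\transfer_\star(\star,i)\in\perceivedSet_i$, I would first argue $\receivingUA(\transTo{i})\neq\bot$: if the sender is correct this is the reliability/non-blocking guarantee recorded in the definition of $\receivingUA(\cdot)$, and if the sender is Byzantine then $\transTo{i}\in\transferSet(\mock{L}_j)$ for its sender $p_j$, which by \Cref{def:mock:H} (via $\transactionSet{j}$) forces a matching receiving UA in $\UASet$, necessarily the one issued by the correct receiver $p_i$. As $\receivingUA(\transTo{i})$ and $b_i$ are both events of the sequential process $p_i$, physical time compares them, so exactly one of $\transTo{i}\POATSeqOne b_i$ (receiving UA created before $b_i$) or $b_i\POATSeqTwo\transTo{i}$ (created after) holds; the former puts $\transTo{i}$ in the prefix, the latter strictly after $b_i$. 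Conversely, any credit $p_i$ actually processes yields a receiving UA in $\UASet$, so \Cref{lemma:matching:receiving:UA} supplies a matching sending UA and the invocation is recovered from $\transferSet(L_j)$ for a correct sender (via AP-Validity) or from $\transferSet(\mock{L}_j)$ for a Byzantine sender; thus it lies in $\perceivedSet_i$ and, its receiving UA preceding $b_i$, in the prefix. The two directions together establish the claim for credits.

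The main obstacle is exactly this credit characterization, and within it the degenerate case $\receivingUA(\transTo{i})=\bot$: were a credit allowed to sit in $\perceivedSet_i$ with no receiving UA, the topological sort could place it arbitrarily relative to $b_i$ and break the equality, so the heart of the argument is showing such credits never belong to $\perceivedSet_i$. With the claim in hand, both sides of the lemma rewrite as $\init_i$ plus the amounts of credits processed before $b_i$ minus the amounts of debits processed before $b_i$, so they coincide and $v=\total\big(i,\{o\in\sequence_i \mid o\rightarrow_{\sequence_i}b_i\}\big)$.
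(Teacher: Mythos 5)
Your proof is correct and takes essentially the same route as the paper's: the paper likewise derives the algorithmic invariant $v=\total(i,I_i\cup R_i)$, where $I_i$ and $R_i$ are the debits issued and the credits processed by $p_i$ before $b_i$, and then identifies these two sets with the intersections of the prefix $\{o\in\sequence_i\mid o\rightarrow_{\sequence_i}b_i\}$ with $p_i$'s outgoing and incoming transfers, using the inclusion $\rightarrow_i\,\subseteq\,\rightarrow_{\sequence_i}$ for the debit side and the total ordering of $b_i$ against incoming transfers induced by $\POATSeqOne$ and $\POATSeqTwo$ for the credit side. If anything, you are more explicit than the paper on the one delicate point --- that every credit in the ground set has a non-$\bot$ receiving UA generated by $p_i$ (correct sender: reliability and non-blocking of the processing; Byzantine sender: the mock-history construction), so no credit can be left unordered relative to $b_i$ --- a fact the paper's proof relies on but states only implicitly.
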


\begin{proof}
Consider $b_i= \balancee_i()/v\in \sequence_i$ a balance operation performed by a correct process $p_i$ that returns a value $v$. 
%
Define the sets $I_i$ and $R_i$ as follows:
\begin{itemize}
    \item $I_i$ is the set of \nonnullswitch{}$\transfer$s \emph{invoked} locally by $p_i$ before $p_i$ executed $b_i$.
    By construction of \Cref{alg:aat}, $p_i$ generated a sending UA at \cref{line:aat-process-add-trf} for each of these transfers.
    
    \item $R_i$ is the set of \nonnullswitch{}$\transfer$s \emph{received} by $p_i$ whose receiving UA was generated by $p_i$ (at \cref{line:aat-process-add-trf}) before invoking $b_i$.
\end{itemize} 

Because $p_i$ is correct, it executes \Cref{alg:aat}, which ensures by construction (in particular due to \cref{line:aat-process-update-bal-start,line:aat-process-update-vars,line:aat-process-update-bal,line:aat-blc}) that
\begin{equation}\label{eq:v:total}
    v=\total(i,I_i \cup R_i),
\end{equation}
taking into account that the effects of null transfers on $i$ cancel out in the definition of $\total$.
%
\newcommand{\AllTransfersToI}{T_{\vartriangleright i}}%
\newcommand{\AllTransfersFromI}{T_{i\vartriangleright}}%

Consider now $\AllTransfersFromI$ the subset of \nonnullswitch{}transfers present in $\perceivedSet_i$ that $p_i$ issued, and $\AllTransfersToI$ the subset of \nonnullswitch{}transfers present in $\perceivedSet_i$ that $p_i$ received, \ie
\begin{align}
    \AllTransfersFromI\defeq&\{\transfer_i(\star,k) \in \perceivedSet_i \mid i\neq k\}, \\
    \AllTransfersToI\defeq&\{\transfer_k(\star,i) \in \perceivedSet_i\mid i\neq k\}.
\end{align}
By definition of $\total(\cdot,\cdot)$, we have similarly
\begin{equation}\label{eq:total:i:sequence:i}
    \total\big(i,\{o \in \sequence_i \mid o \rightarrow_{\sequence_i} b_i\}\big)=\total\big(i,\{o \in \sequence_i \mid o \rightarrow_{\sequence_i} b_i\}\cap (\AllTransfersFromI\cup\AllTransfersToI)\big).
\end{equation}

Because $p_i$'s local process order is included in $\POATSeq$, it is also included in $\rightarrow_{\sequence_i}$.
As a result, the \nonnullswitch{}transfers invoked by $p_i$ before invoking $b_i$ are exactly those \nonnullswitch{}transfers that precedes $b_i$ in $\rightarrow_{\sequence_i}$,
$$\{o \in \sequence_i \mid o \rightarrow_{\sequence_i} b_i\}\cap \AllTransfersFromI=I_i.$$

Furthermore, due to the rule \emph{Ordering of incoming transfers} of $\POATSeqOne$ and the definition of $\POATSeqTwo$, $b_i$ is totally ordered w.r.t. to the \nonnullswitch{}transfers received by $p_i$ in $\POATSeq$ and therefore in $\rightarrow_{\sequence_i}$.
As a result, we have
$$\{o \in \sequence_i \mid o \rightarrow_{\sequence_i} b_i\}\cap \AllTransfersToI=R_i.$$
These last two equalities together with \Cref{eq:total:i:sequence:i,eq:v:total} yield the lemma.\footnote{
    Let us recall that, given 3 sets $A$, $B$, and $C$, we always have $(A \cap B) \cup (A \cap C) = A \cap (B \cup C)$.
}
\end{proof}

\begin{lemma}\label{lemma:transfer}
$\Forall j\in[0..n], \Forall \transFrom{j} = \transfer_j(v,\star) \in \sequence_i: v \leq \total\big(j,\{o \in \sequence_i \mid o \rightarrow_{\sequence_i} \transFrom{j}\}\big).$
\end{lemma}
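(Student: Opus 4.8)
The plan is to reduce the claim to the committed balance that $p_j$ holds immediately before issuing $\transFrom{j}$, and then to reconcile that balance with $\total(j,\cdot)$ as evaluated along $\sequence_i$. Write $\transFrom{j}=\transfer_j(v,\star)$, and let $A_j^{\sn}=\sendingUA(\transFrom{j})$ be its unique sending UA, with sequence number $\sn$, together with the chains $\precedingUAs(A_j^{\sn})=(A_j^0,\dots,A_j^{\sn})$ and $\precedingTrans(A_j^{\sn})=(\tx_j^1,\dots,\tx_j^{\sn})$ supplied by \Cref{lemma:preceding:UAs,lemma:seq:preceding:transfers}. Since $\total(j,\cdot)$ depends only on transfers that involve $p_j$, it suffices to lower-bound $\total(j,P)$, where $P$ denotes the set of transfers in $\sequence_i$ that involve $p_j$ (as sender or receiver) and precede $\transFrom{j}$ in $\rightarrow_{\sequence_i}$.

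First I would set up a balance invariant along $p_j$'s UA chain. Each $A_j^k$ carries, through its associated AP, a committed balance; write $\beta_j^k$ for the value this commitment opens to, which is well defined with high probability by C-Binding, with $\beta_j^0=\init_j$ fixed at setup. Gluing consecutive commitments via \cref{line:pa-check-ap} of $\Pa$ and AP-Agreement, and reading off the balance updates enforced by $\Pzk$ (\cref{line:pzk-check-snd-is-rcv,line:pzk-check-pvr-is-snd,line:pkz-end-check-balance}) together with C-Binding on \cref{line:pzk-check-com-bal}, a straightforward induction on $k$ yields $\beta_j^k=\total\big(j,\{\transferFunc(A_j^1),\dots,\transferFunc(A_j^k)\}\big)$, i.e. $\beta_j^k$ equals $\init_j$ plus the amounts of $p_j$'s incoming transfers minus those of its outgoing transfers among its first $k$ processed transfers. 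Moreover, whenever step $k$ involves $p_j$ sending, \cref{line:pzk-sending-transfer-same-sn} supplies $\beta_j^{k-1}\ge v_k$ (the amount sent), whence $\beta_j^{k}\ge 0$.

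The main obstacle is that $\sequence_i$ is forced to respect $p_j$'s process order only when $j=i$: for $j\neq i$ the topological sort may place an outgoing transfer of $p_j$ carrying a \emph{larger} sequence number before $\transFrom{j}$, and such a transfer subtracts from $\total(j,P)$, a priori jeopardizing the bound. I would resolve this by passing to the latest such transfer. Let $\tx_j^{M}$ be the outgoing transfer of $p_j$ with the largest processing step $M$ among $\{\transFrom{j}\}$ and the outgoing transfers of $p_j$ lying in $P$; then $M\ge\sn$ and $\beta_j^{M}\ge 0$ by the previous paragraph. The pivotal combinatorial step is to show that every incoming transfer of $p_j$ processed at a step $\le M$ belongs to $P$: its receiving UA lies in $\precedingUAs(A_j^{M})=\precedingUAs(\sendingUA(\tx_j^M))$, so the rule \emph{Ordering of incoming transfers} of $\POATSeqOne$ orders it before $\tx_j^{M}$, and hence before $\transFrom{j}$ (directly when $M=\sn$, since then $\tx_j^M=\transFrom{j}$; otherwise by transitivity of the total order $\rightarrow_{\sequence_i}$, as $\tx_j^{M}\rightarrow_{\sequence_i}\transFrom{j}$ by the choice of $M$). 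Its membership in $\perceivedSet_i$ follows from the closure properties of $\UASet$ and \Cref{lemma:matching:receiving:UA}.

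Putting the pieces together, all incoming transfers counted in $\beta_j^{M}$ lie in $P$ (so they contribute, with non-negative sign, to $\total(j,P)$), every outgoing transfer in $P$ has step $\le M$, and the outgoing transfers counted in $\beta_j^{M}$ include $\transFrom{j}$ itself (amount $v$), which is absent from $P$. Since all transfer amounts are non-negative, these inclusions give $\total(j,P)\ge \beta_j^{M}+v\ge v$, which is exactly the statement once we recall that $\total(j,\{o\in\sequence_i\mid o\rightarrow_{\sequence_i}\transFrom{j}\})=\total(j,P)$. The only remaining bookkeeping concerns self-transfers, which contribute zero to both $\total$ and $\beta$ yet still trigger the $\beta_j^{k-1}\ge v_k$ check; I expect the argument above to absorb them with minor care, the crux being the maximal-step selection of $\tx_j^M$ that neutralizes out-of-order outgoing transfers.
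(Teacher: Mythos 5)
Your proposal is correct and follows essentially the same route as the paper's proof: the paper likewise selects the outgoing transfer of $p_j$ with the maximal sequence number among those preceding (or equal to) $\transFrom{j}$ in $\sequence_i$ (its $\maxSn$/$\maxAP$ is your $M$/$\tx_j^M$), derives non-negativity of the chained committed balance from the funds check and balance updates in $\Pzk$ along $\precedingUAs(\maxAP)$, and uses the \emph{Ordering of incoming transfers} rule of $\POATSeqOne$ to show that every incoming transfer accumulated in that chain already precedes $\transFrom{j}$ in $\sequence_i$. Your explicit invariant $\beta_j^k=\total(j,T_k)$ and the assembled inequality $\total(j,P)\geq\beta_j^M+v$ are just a more explicit rendering of the paper's subremarks (including its set-inclusion arguments and the null-UA bookkeeping), not a different argument.
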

\begin{proof}
Consider a process $p_j$ (correct or faulty) and $\transFrom{j} = \transfer_j(v,\star)\in \sequence_i$ a $\transfer$ invocation from $j$ that is present $\sequence_i$ of amount $v$ to some process.


\newcommand{\sentBeforeInSeqI}{I_{\transFrom{j}}^{\sequence_i}}
\newcommand{\receivedBeforeInSeqI}{{R_{\transFrom{j}}^{\sequence_i}}}

Let us define the sets $\sentBeforeInSeqI$ and $\receivedBeforeInSeqI$ as follows.
\begin{itemize}
    \item $\sentBeforeInSeqI$ is the set of transfers whose sender is $p_j$ that appear before $\transFrom{j}$ in sequence $\sequence_i$ and includes $\transFrom{j}$,
    \begin{equation*}
        \sentBeforeInSeqI = \big\{\transfer^\star_j(\star,\star)\in\sequence_i\mid \transfer^\star_j(\star,\star)\rightarrow_{\sequence_i}\transFrom{j} \vee \transfer^\star_j(\star,\star)=\transFrom{j}\big\}.
    \end{equation*}
    
    \item $\receivedBeforeInSeqI$ is the set of transfers whose receiver is $p_j$ that appear before $\transFrom{j}$ in sequence $\sequence_i$.
    \begin{equation}\label{eq:receivedBeforeInSeqI}
        \receivedBeforeInSeqI = \big\{\transfer^\star_\star(j,\star)\in\sequence_i\mid \transfer^\star_\star(j,\star)\rightarrow_{\sequence_i}\transFrom{j}\big\}.
    \end{equation}
\end{itemize}

\begin{subremark}\label{subrem:total:B4InSeqI}
$\total(j,\sentBeforeInSeqI\cup\receivedBeforeInSeqI)= \total(j,\{o \in \sequence_i \mid o \rightarrow_{\sequence_i} \transFrom{j}\}) - v.$
\end{subremark}

\begin{proof}
This directly follows from the definition of $\total(\cdot,\cdot)$, $\sentBeforeInSeqI$ and $\receivedBeforeInSeqI$.
\end{proof}

\noindent
\newcommand{\SNsentBeforeInSeqI}{\mathit{SN}_{\transFrom{j}}^{\sequence_i}}
Consider $\SNsentBeforeInSeqI$ the set of the sequence numbers of the transfers present in $\sentBeforeInSeqI$, \ie
\begin{equation}\label{eq:SNsentBeforeInSeqI}
    \SNsentBeforeInSeqI=\{\sn\in\mathbb{N}^* \mid \transfer^\sn_j(\star,\star)\in \sentBeforeInSeqI\}.
\end{equation}
Because $\transFrom{j}\in\sentBeforeInSeqI$, $\SNsentBeforeInSeqI$ contains the sequence number of $\transFrom{j}$.
However, because $\POATSeq$ does not impose strong constraints on the order of the transfer invocations issued by a process other than $p_i$, $\sentBeforeInSeqI$ might also contain transfer invocations that have a higher sequence number than $\transFrom{j}$.
Furthermore, note that, by construction, $\SNsentBeforeInSeqI$ only contains sequence numbers of transfers whose sender is $p_j$.

\newcommand{\maxSn}{\sn^{\mathsf{max}}_{j\vartriangleright}}
\newcommand{\maxAP}{A^{\mathsf{max}}_{j\vartriangleright}}
Define $\maxSn$ as the highest sequence number contained in $\SNsentBeforeInSeqI$,
\begin{equation}\label{eq:maxSn}
    \maxSn=\mathsf{max}\big(\SNsentBeforeInSeqI\big),
\end{equation}
and $\maxAP$ as the UA issued by $j$ with the sequence number $\maxSn$, \ie $\maxAP=A_j^{\maxSn}$.
(By AP-Agreement, this UA is unique.)

\newcommand{\UABeforeMax}{\Sigma^{\mathsf{max}}_{j\vartriangleright}}
Consider $\UABeforeMax$ the set of UAs issued by $p_j$ that precede $\maxAP$ according to the predicates $\Pa$ and $\Pzk$ (\Cref{rem:seq:previous:UAs}),
\begin{equation}\label{eq:APBeforeMax}
    \UABeforeMax = \supportSet\big(\precedingUAs(\maxAP)\big).
\end{equation}
\newcommand{\sentBeforeInAP}{I_{\transFrom{j}}^{\mathsf{max}}}
\newcommand{\receivedBeforeInAP}{{R_{\transFrom{j}}^{\mathsf{max}}}}
Based on $\UABeforeMax$, let us define the sets $\sentBeforeInAP$ and $\receivedBeforeInAP$ as follows.
\begin{itemize}
    \item $\sentBeforeInAP$ is the set of transfers whose sender is $p_j$ and whose sending UA appear in $\UABeforeMax$,
    \begin{equation}\label{eq:sentBeforeInAP}
        \sentBeforeInAP = \big\{\transfer^\star_j(\star,\star) \mid \sendingUA(\transfer^\star_j(\star,\star))\in \UABeforeMax \big\}.
    \end{equation}
    \item $\receivedBeforeInAP$ is the set of transfers whose receiver is $p_j$ and whose receiving UA appear in $\UABeforeMax$, excluding the transfer invocation associated with $\maxAP$ in case it is a null UA (\ie both a sending and receiving UA),
    \begin{equation} \label{eq:receivedBeforeInAP}
        \receivedBeforeInAP = \left\{\transfer^\star_\star(j,\star) \left| 
        \begin{array}{l}\receivingUA(\transfer^\star_\star(j,\star))\in \UABeforeMax \wedge\\
        \transfer^\star_\star(j,\star)\neq\transferFunc(\maxAP)
        \end{array}%
        \right.\right\}.
    \end{equation}
\end{itemize}

\newcommand{\vmax}{v_{\mathsf{max}}}
Let $\vmax$ denote the amount of the transfer invocation $\transferFunc(\maxAP)$, \ie $\transferFunc(\maxAP)=\transfer_j^{\maxSn}(\vmax,\star)$.
(The  notation $\transferFunc(\maxAP)$  was defined at the end of \Cref{sec:proof-prelim})

\begin{subremark}\label{subrem:totalB4AP:eq:minus:vmax}
$\total\left(j,\sentBeforeInAP \cup \receivedBeforeInAP\right) = \total\left(j,\left(\sentBeforeInAP \setminus \left\{\transferFunc(\maxAP)\right\}\right) \cup \receivedBeforeInAP\right) - \vmax.$
\end{subremark}

\begin{proof}
Similarly to \Cref{subrem:total:B4InSeqI}, the remark follows from the fact that $\transferFunc(\maxAP)\in\sentBeforeInAP$ and $\transferFunc(\maxAP)\not\in\receivedBeforeInAP$, the definition of $\vmax$, and that of $\total(\cdot,\cdot)$.
\end{proof}

\begin{subremark}\label{subrem:totalB4AP:geq:vmax}
$\total\left(j,\left(\sentBeforeInAP\setminus\left\{\transferFunc(\maxAP)\right\}\right)\cup\receivedBeforeInAP\right)\geq\vmax.$
\end{subremark}

\begin{proof}
This follows from a recursion on the sequence of (valid) UAs $\precedingUAs(\maxAP)$ that precede $\maxAP$ (\Cref{rem:seq:previous:UAs}), as well as the checks and updates contained in the $\Pzk$ predicate (\Cref{alg:pred}). This includes the check at \cref{line:pzk-sending-transfer-same-sn} that a sending process must have enough funds, and the updates performed on the sender's balance at \cref{line:pzk-check-snd-is-rcv} (for null UAs), \cref{line:pzk-check-pvr-is-snd} (for non-null sending UAs), and \cref{line:pkz-end-check-balance} (for non-null receiving UAs).
\end{proof}

\begin{subremark}\label{subremark:totalAP:geq:0}
    $\total\left(j,\sentBeforeInAP\cup\receivedBeforeInAP\right)\geq 0.$
\end{subremark}
\begin{proof}
    This is a direct consequence of \Cref{subrem:totalB4AP:eq:minus:vmax,subrem:totalB4AP:geq:vmax}.
\end{proof}

\begin{subremark}\label{eq:sentBeforeInSeqI:subseteq:sentBeforeInAP}
    $\sentBeforeInSeqI\subseteq\sentBeforeInAP.$
\end{subremark}
\begin{proof}
Consider $\transFrom{j}'=\transfer^{\sn'}_j(\star,\star)\in \sentBeforeInSeqI$, a transfer invocation whose sender is $p_j$ at sequence number $\sn'$ that belongs to $\sentBeforeInSeqI$. By definition of $\SNsentBeforeInSeqI$ and $\maxSn$ (\Cref{eq:SNsentBeforeInSeqI,eq:maxSn}), we have
\begin{equation*}
    \sn'\leq\maxSn.
\end{equation*}

By AP-Agreement, this inequality yields $\sendingUA(\transFrom{j}')\in\supportSet(\precedingUAs(\maxAP))$, and by definition of $\UABeforeMax$ and $\sentBeforeInAP$ (\Cref{eq:APBeforeMax,eq:sentBeforeInAP}), that $\transFrom{j}'\in\sentBeforeInAP$, proving the remark.
\end{proof}

\begin{subremark} \label{eq:receivedBeforeInAP:subseteq:receivedBeforeInSeqI}
$\receivedBeforeInAP\subseteq\receivedBeforeInSeqI.$
\end{subremark}

\begin{proof}
Consider $\transTo{j}'=\transfer^\star_\star(j,\star)\in \receivedBeforeInAP$, a transfer invocation whose receiver is $p_j$ that belongs to $\receivedBeforeInAP$. By definition of $\UABeforeMax$ and $\receivedBeforeInAP$ (\Cref{eq:APBeforeMax,eq:receivedBeforeInAP}), 
\begin{equation}\label{eq:condition:for:POATSeqOne}
    \receivingUA(\transTo{j}') \in \supportSet\big(\precedingUAs(\maxAP)\big) \;\wedge\; \transTo{j}' \neq \transferFunc(\maxAP).  
\end{equation}

Furthermore, because $\transferFunc(\maxAP) \in \perceivedSet_i$, we have $\transTo{j}' \in \perceivedSet_i$ (either because the sender of $\transTo{j}$ is correct, or by construction of $\UASet$ if it is faulty, see \Cref{eq:perceivedSet_i,eq:precedingUAs:in:APSet}).
Remember that $\maxAP$ is associated with sequence number $\maxSn\in\SNsentBeforeInSeqI$ (\Cref{eq:maxSn}). Because $\SNsentBeforeInSeqI$ only contains the sequence numbers of transfers whose sender is $p_j$, $\transferFunc(\maxAP)$ must be some transfer $\transfer^{\maxSn}_j(\star,\star)$ sent from $p_j$.
As a result, $\transTo{j}' \in \perceivedSet_i$ together with \Cref{eq:condition:for:POATSeqOne} mean that the condition for the rule \emph{Ordering of incoming transfers} in the definition of $\POATSeqOne$ is met, yielding $                \transTo{j}'\POATSeqOne\transferFunc(\maxAP)$, and therefore $                \transTo{j}'\POATSeq\transferFunc(\maxAP)$, and 
\begin{equation} \label{eq:trans':seq_i:maxAP}
    \transTo{j}' \rightarrow_{\sequence_i} \transferFunc(\maxAP).
\end{equation}

Because $\transferFunc(\maxAP)\in \receivedBeforeInSeqI$, by definition of $\receivedBeforeInSeqI$ (\Cref{eq:receivedBeforeInSeqI}), $\transferFunc(\maxAP)\rightarrow_{\sequence_i}\transFrom{j}$, which with \Cref{eq:trans':seq_i:maxAP} leads to $\transTo{j}'\rightarrow_{\sequence_i}\transFrom{j}$, and therefore $\transTo{j}'\in \receivedBeforeInSeqI$.
\end{proof}

\begin{subremark}\label{subremark:totalSeqI:geq:totalAP}
$\total(j,\sentBeforeInSeqI\cup\receivedBeforeInSeqI)\geq \total\left(j,\sentBeforeInAP\cup\receivedBeforeInAP\right)$
\end{subremark}

\begin{proof}
This follows from \Cref{eq:sentBeforeInSeqI:subseteq:sentBeforeInAP,eq:receivedBeforeInAP:subseteq:receivedBeforeInSeqI} and the definition of $\total(\cdot,\cdot)$.
\end{proof}

The lemma follows from \Cref{subrem:total:B4InSeqI,subremark:totalAP:geq:0,subremark:totalSeqI:geq:totalAP}.
\end{proof}

\atsequentiality*
\begin{proof}
The lemma is proved by using $\mock{H}$ provided by \Cref{def:mock:H}, the sequence $\sequence_i$ constructed in \Cref{def:sequence:i} for each correct process $p_i$, and considering \Cref{lemma:transfer,lemma:balance} that show that $\sequence_i$ is an AT-Sequence.
\end{proof}

\subsubsection{AT-Termination of Algorithm~\ref{alg:aat}}
\attermination*

\begin{proof}
The \balancee operation of \Cref{alg:aat} terminates trivially.
Furthermore, the only blocking instruction of the \transfer operation of \Cref{alg:aat} is the \approve operation at \cref{line:aat-process-create-ap}, in the \prove internal operation (called at \cref{line:aat-trf-process}).
This \approve operation terminates by AP-Termination, so the \transfer operation also terminates.
\end{proof}

\subsection{Proof of quasi-anonymity} \label{sec:proof-qaat}

\qaatrcvanonymity*

\begin{proof}
To determine if the adversary \Adv can obtain any information on the receiver $p_j$ of any $\transfer_i(\star,j)$ invocation given a global history $H$, we analyze the content of all message types produced by \Cref{alg:aat}.
Our algorithm features 2 message types with the following content:
\begin{enumerate}
    
    \item Accumulator update sent by a (sender or receiver) process $p_i$ publicly to the network in the \approve operation at \cref{line:aat-process-create-ap}: value $v=\aangle{A_i',\balcom_i'}$ and data $\data=\aangle{\ap_i, \zkp, \sn_i+1, \aangle{A_i, A'_i, \balcom_i, \balcom'_i}}$;
    
    \item Message from a sender $p_\snd$ to a receiver $p_\rcv$: $\aangle{\tx, \ap_\snd, A_\snd, \mp_\snd, \balcom_\snd}$.
\end{enumerate}

Note that sender-receiver messages (item 2) are exchanged 
using the \rasend operation, which ensures receiver anonymity and confidentiality (see Section~\ref{sec:model}).
Therefore, only accumulator updates (item 1) need to be considered.
C-Hiding guarantees that the commitments $\balcom_i$ and $\balcom'_i$ do not leak any data on the corresponding committed values.
Similarly, UA-Indistinguishability guarantees that the accumulators $A_i$ and $A'_i$ do not leak any data on their accumulated sets, and ZKP-Zero-Knowledge ensures that $\zkp$ does not leak any data on its secret input $\secdata$.
Finally, the sequence number $\sn_i+1$ does not reveal whether $p_i$ is a sender or receiver, only the number of transfers (debits or credits) in $A_i$.
Thus, \Adv does not learn any information from accumulator updates.

We conclude that without any information on the content of transfers already included or newly added to some history $H$, \Adv cannot deduce the recipient of a transfer more accurately than random, \ie with a probability of less than $\min((1/n)+\epsilon(\lambda),1)$, where $\epsilon$ is the statistical negligible function and $\lambda$ is the security parameter of the system.
\end{proof}

\qaatconfidentiality*

\begin{proof}
The proof follows the same reasoning as that of \Cref{lem:qaat-rcv-anonymity}.
\end{proof}

\section{Implementations of the Schemes of Section~\ref{sec:techniques} / Appendix~\ref{sec:techniques-long}} \label{sec:impl}

\subsection{Preliminaries}

\subsubsection{Groups of unknown order}
Our choice of UA~\cite{BBF19} implementation requires the use of groups of unknown order (as defined in~\cite{DK02}) in which the Strong RSA assumption, the Low Order assumption and the Adaptive Root assumption holds. 
Concrete examples of groups of unknown order are RSA groups, ideal class groups and hyperelliptic Jacobians. However, only class groups and hyperelliptic Jacobians are efficient in the trustless setup model. 
In particular,~\cite{DGS22} provides algorithms to trustlessly setup genus-$3$ hyperelliptic Jacobians of unknown order with $3392$-bits field elements for $128$ bits of security, which is close to the $3072$-bits elements of RSA groups generated using a trusted setup for the same bit-security. They also provide a new compression algorithm for class groups, reaching $128$-bits of security with $5088$-bits elements.

\subsection{Accumulator}
\label{sec:acc-impl}

For our accumulator (specified in \Cref{sec:accumulator}), we opt for the trustless extension of universal RSA accumulators in groups of unknown order from~\cite{BBF19}. While RSA accumulators typically require the accumulated elements to be prime numbers, one can use a hash function with prime domain to accumulate arbitrary elements. 
The scheme features constant size public parameters, accumulator digest and proofs of (non-)membership. Moreover, (non-)membership proofs can be verified in constant time. The scheme can thus be described as succinct.
Note that when adding a new element $e$ to an accumulator $A$, the membership proof of $e$ is the value of $A$ before $e$ was added. Therefore, as a process only needs the membership proof of a transfer when updating the accumulator and creating the ZKP of its account update, membership proof generation is done in constant time in~\Cref{alg:aat}. 

\subsubsection{Hashing to primes}
An RSA accumulator can only contain prime numbers.
To overcome this limitation, one must use a hash-to-prime function $H_{\Primes}:\{0,1\}^* \rightarrow \Primes(\lambda)$, where $\Primes(\lambda)$ is the set of primes smaller than $2^\lambda$. $H_{\Primes}$ must be deterministic and collision-resistant to prevent an element from being mapped to multiple primes or multiple elements from being hashed into the same prime.
A common method of hashing to prime numbers is to combine a pseudo-random function (\eg hash function) with a probabilistic primality test. The resulting function consists of successively hashing the concatenation of the data to hash with an incremented counter until finding the smallest nonce that yields a prime number.

\subsubsection{Experimentations}
All our tests are executed sequentially on a core i9-11950H processor. Our implementation is coded in C++ using the GNU MP Bignum library.
We instantiated $H_{\Primes}$ using SHA3 and a probabilistic Miller-Rabin primality test with $80$ rounds, resulting in an error probability of $4^{-80}$.
The output of SHA3 is truncated to $264$ bits, which is sufficient to obtain $128$ bits of security as the prime counting function estimates that there are at least $2^{256}$ primes in $[0,2^{264}]$. We measure an average time of $9$ ms to hash a single element to a $264$ bits prime.
We also implement a $2048$-bit RSA accumulator. \Cref{impl:acc} shows the performance of proof generation and verification. We observe that while (non-)membership proof generation from "scratch" scales linearly with the size of the accumulated set, membership proof generation during element addition is computed in constant time as element accumulation consists of a single exponentiation of the RSA digest by the newly added element. Similarly, we observe that (non-)membership proof verification is constant time, as it also consists of a single group exponentiation. 



\def\arrvline{\hfill\kern\arraycolsep\vline\kern-\arraycolsep\hfilneg}
\begin{table*}
    \centering
    \begin{threeparttable}
        \begin{tabular}{|c|c|c|c|c|c|}
        \hline
            \multirow{2}{*}{$|S|$} & \multirow{2}{*}{Addition\tnote{1}} & \multicolumn{2}{c|}{Membership proof} & \multicolumn{2}{c|}{Non-membership proof} \\ \hhline{|~|~|-|-|-|-|}
                                 &                           & Generation\tnote{2} & Verification & Generation & Verification \\ \hhline{|=#=|=|=|=|=|}
            \multicolumn{1}{|c||}{$100$} & $0.337$ ms & $31.462$ ms & $0.351$ ms & $31.835$ ms & $0.687$ ms \\ \hhline{|-|-|-|-|-|-|}
            \multicolumn{1}{|c||}{$500$} & $0.368$ ms & $155.366$ ms & $0.345$ ms & $151.720$ ms & $0.687$ ms \\ \hhline{|-|-|-|-|-|-|}
            \multicolumn{1}{|c||}{$1000$} & $0.342$ ms & $299.615$ ms & $0.339$ ms & $298.825$ ms & $0.675$ ms \\ \hhline{|-|-|-|-|-|-|}
            \multicolumn{1}{|c||}{$1500$} & $0.337$ ms & $447.879$ ms & $0.338$ ms & $444.978$ ms & $0.663$ ms \\ \hhline{|-|-|-|-|-|-|}
            \multicolumn{1}{|c||}{$2000$} & $0.360$ ms & $607.562$ ms & $0.345$ ms & $605.435$ ms & $0.668$ ms \\ \hhline{|-|-|-|-|-|-|}
        \end{tabular}
        \begin{tablenotes}
            \item[1] {\footnotesize This also outputs the membership proof of the new element.}
            \item[2] {\footnotesize Unused, as we rely on the proof generated during addition.}
        \end{tablenotes}
    \end{threeparttable}
    \caption{Performance of accumulator proof generation and verification for various set sizes $|S|$.}
    \label{impl:acc}
\end{table*}

\subsection{Commitment scheme} \label{sec:com-impl}
In our system, each process creates a hiding commitment of a single value (the balance of its account). This commitment can be based on any constant-size commitment scheme with trustless setup. For example, one could use Pedersen commitments in groups of unknown order (or elliptic curves~\cite{FG17}), which are unconditionally hiding and computationally binding. Conversely, other schemes such as ElGamal commitments~\cite{E85} could be used to obtain commitments that are unconditionally binding but computationally hiding. Note that it is impossible for a commitment scheme to be both unconditionally binding and unconditionally hiding.

\subsection{Transparent zk-SNARK with time-optimal prover} \label{sec:zkp-impl}
For our zk-SNARK scheme, we choose Spartan~\cite{S20}, which is both transparent (trustless setup) and prover time-optimal.

\section{A consensus-free quorum-based Agreement Proof implementation} \label{sec:ap-implem}
In this section, we provide an implementation of the agreement proof scheme of \Cref{sec:agreement-proof} that does not rely on consensus, uses succinct cryptography, and has a storage of $O(\lambda+n)$ bits per process and an overall communication of $O(\lambda n)$ bits.
Our AP implementation leverages threshold digital signatures (see \Cref{sec:thresh-sigs}) to create constant-size (\ie $O(\lambda)$-bit) agreement proofs $\ap$ (\ie quorums of signatures) and to verify them without the set of all public keys.






\subsection{Threshold signatures} \label{sec:thresh-sigs}
\ar{Threshold signatures are a family of aggregated digital signatures that are produced in a system of $n$ processes, by having at least $k$ out of $n$ processes (the threshold) produce individual signatures for the same message, which we call \textit{intermediary signatures}.
Once these intermediary signatures have been produced, they are aggregated (typically by a coordinating process) into a fixed-size aggregated signature, called the threshold signature $\sigma$.
We call a threshold signature with a threshold of $k$ signers out of $n$ processes a \textit{$k$-threshold signature}.
Unlike multi-signature schemes, which keep track of the identities of their signers, classical threshold signature schemes use a protocol called distributed key generation (DKG) to distribute the secret key of a signature scheme across processes. Processes then interact to reconstruct a single signature that can be verified against a single system-wide public key.
The computation of the individual shares of the secret key and the derivation of the system-wide public key are typically performed during the setup phase of the system.
Naturally, a $k$-threshold signature for a message $m$ is valid if and only if it aggregates $k$ valid distinct intermediary signatures for $m$.

Various threshold signature schemes have been built on top of asymmetric schemes such as RSA, Schnorr, BLS or ECDSA~\cite{BLS04,SS19,KG20,RRJSS22,DXKR23,WMYC23}.
Furthermore, the verification algorithm of most threshold versions of a classical signature scheme remains unchanged and can thus be efficiently verified independently of the number of co-signers using the system-wide public key.
For example, the asynchronous DKG protocol of~\cite{DXKR23} can be used with DLOG-based threshold cryptosystems such as BLS~\cite{BLS04} or Schnorr~\cite{RRJSS22,BHKMR24} to provide $k$-threshold signatures with high-threshold values ($k\in [t,n-t-1]$).

\subsubsection{Succinctness of threshold signatures} \label{sec:tsig-succinct}

Efficient high-threshold signature implementations (\eg \cite{BLS04,RRJSS22,BHKMR24}) guarantee that, for a fixed $\lambda$ and any number of signers $k$, the size and verification time of a threshold signature is equivalent to that of a single intermediary signature.
Unlike classical signature schemes and multi-signature schemes,  which require knowing the public keys of all the signers of a quorum of signatures, threshold signature schemes only require knowing one global public key for the system for verifying a quorum of signatures.
In other words, given the maximum number of signers $n$ in the system, instead of having quorum signatures of size $O(\lambda n)$ bits with classical signature schemes or of size $O(\lambda+n)$ bits with multi-signature schemes (as the set of signers of a quorum can be encoded in a bit vector), threshold signatures can produce quorum signatures of size $O(\lambda)$ bits (once a quorum has been generated, the individual identities of the signers are not needed for its verification).
Therefore, these implementations are of constant size and constant verification time, and thus succinct.}

\subsubsection{Verifying intermediary signatures without storing all individual public keys} \label{sec:tsig-constant-pki}
As mentioned previously, only a single global public key is needed to verify a threshold signature, instead of all the signers' public keys.
However, the coordinator must verify all intermediary signatures' validity before generating the threshold signature.
\ta{To do that without having to store all the system's public keys, it is possible to use a \emph{vector commitment} (\emph{VC})~\cite{CF11}, a generalization of the classical commitment scheme presented in \Cref{sec:commitment}.
Like its name suggests, a VC scheme commits a vector of values, upon which it is then possible to have a ``partial'' opening of a single value of the vector, without having to open the vector in its entirety. \ar{One of the most widely known examples of vector commitment is Merkle trees~\cite{M87}.}

In our context, this scheme would allow us to ``compress'' the entire vector of public keys of the system into a VC of size $O(\lambda)$.
This VC would be created during the system setup.
In addition to the global VC, each correct process $p_i$ also stores its individual public key, and the partial opening for the $i$\textsuperscript{th} value of the VC, corresponding to $p_i$'s public key.
This way, $p_i$ can prove to another process $p_j$ that it indeed belongs to the system, by sending to $p_j$ its individual public key and partial opening.
With this data, $p_j$ can then verify that the $i$\textsuperscript{th} value of the global VC indeed opens to $p_i$'s public key.

Since each correct process $p_i$ stores the global VC ($O(\lambda)$ bits), its individual public key ($O(\lambda)$ bits), and the partial opening of the $i$\textsuperscript{th} value of the global VC ($O(\lambda)$ bits), this technique does not impact the overall storage cost of our AP algorithm (\Cref{alg:ap}) of $O(\lambda+n)$ bits stored per correct process.

More generally, this technique effectively allows us to have a public key infrastructure (PKI) under a fixed storage cost of $O(\lambda)$ bits per correct process (provided that the set of public keys does not change over time).
For simplicity, all this machinery is hidden in the validation of intermediary signatures of \Cref{alg:ap}, \cref{line:ap-rcv-quoruminit-ver-sig,line:ap-rcv-quorumsig-ver-sig}.
}

\ta{
\subsubsection{Efficiently aggregating intermediary signatures} \label{sec:tsgi-efficient-agg}
For constructing a $k$-threshold signature, a coordinator process has to receive and aggregate $k$ intermediary signatures by $k$ distinct processes.
In particular, it has to make sure that it does not aggregate multiple intermediary signatures from the same process.

For the sake of simplicity, we address these issues by a naive approach in \Cref{alg:ap}: we require the coordinator $p_i$ to save all the intermediary signatures it receives (and implicitly their signer's identity) in the $\sigs_i$ set (\cref{line:ap-rcv-quorumsig-save-sig}).
Once the threshold signature has been produced, $\sigs_i$ is emptied, and all the temporary data (intermediary signatures and sender identities) are deleted.
The $\sigs_i$ set contains at most $n$ intermediary signatures ($O(\lambda)$ bits) and signer identities ($O(\log n)$ bits), which incurs a total of $O(n(\lambda+\log n))$ bits.
However, as this data is stored only temporarily during the execution of the $\approve()$ operation, we consider that it pertains to the \emph{space complexity} of $\approve()$, and not the overall \emph{storage cost} of our system.

Remark that this suboptimal space complexity can be easily reduced to $O(\lambda+n)$ bits using the following method.
Instead of aggregating the intermediary signatures all at once at the end of the execution of $\approve()$ (\cref{line:ap-prove-aggreg}), they can be gradually aggregated as they are received (\cref{line:ap-rcv-quorumsig}) on an aggregated signature of $O(\lambda)$ bits, which becomes a threshold signature once it aggregates at least $k$ intermediary signatures. 
This way, each intermediary signature can be forgotten right after it is received and aggregated.
Furthermore, the coordinator can use a bit vector of $n$ bits to efficiently keep track of the processes that already participated in the aggregated signature (the $i$\textsuperscript{th} bit indicates if the intermediary signature of $p_i$ was already received or not).
}

\subsection{Algorithm}

In this section, we present our AP implementation, which eschews consensus by using threshold signatures for quorums.
It achieves lightness by using succinct cryptographic primitives and ensuring a storage cost of $O(\lambda+n)$ bits per correct process, and a communication of $O(\lambda n)$ bits sent overall.\footnote{
    For these analyses, like in our asset transfer algorithm of \Cref{alg:aat}, we impose that sequence numbers are constant-size.
}

\Cref{alg:ap} describes the code of this implementation for a correct process $p_i$ and a AP predicate \Pa.
It uses 3 local variables: $\sigs_i$ (a set of intermediary signatures used for generating threshold signatures), $v_i$ (the value of the current \approve execution), and $\seqnums_i$ (the array containing the sequence number (initialized with $0$) of each process $p_j$, $j \in [n]$).

The \apverify operation simply verifies that the provided AP $\ap_j$ is a valid $(\lfloor\frac{n+t}{2}\rfloor{+}1)$-threshold signature for the provided value $v$, sequence number $\sn_j$ and process $p_j$ (\cref{line:ap-init}).

The \approve operation first computes its next sequence number $\sn_i$ of $p_i$ (\cref{line:ap-prove-inc-sn}).
Then, it verifies that the provided $v$, \data, and $\sn_i$ satisfy the AP predicate (\cref{line:ap-prove-ver-pred}).
If it does, $p_i$ then saves the provided value $v$ in the local variable $v_i$ (\cref{line:ap-prove-save-v}).
Next, $p_i$ generates the first signature of its payload, called the initialization signature (\cref{line:ap-prove-sign}), and broadcasts it in a \quoruminitm message which requests other processes of the system to sign its payload (\cref{line:ap-prove-bcast}).
Process $p_i$ then waits for a quorum of intermediary signatures from other processes (\cref{line:ap-prove-wait}), and when it is received, $p_i$ aggregates all intermediary signatures into a $(\lfloor\frac{n+t}{2}\rfloor{+}1)$-threshold signature (\cref{line:ap-prove-aggreg}), flushes the temporary data (\cref{line:ap-prove-flush}) and returns the AP and next sequence number (\cref{line:ap-prove-ret}).

When $p_i$ receives a \quoruminitm message from a process $p_j$, it first checks that the provided initialization signature is valid (\cref{line:ap-rcv-quoruminit-ver-sig}) and that the provided $v$, \data, and $\sn_j$ satisfy the AP predicate (\cref{line:ap-rcv-quoruminit-ver-pred}).
Then, $p_i$ waits for the provided $\sn_j$ to be the next one (in FIFO order) to be processed (\cref{line:ap-rcv-quoruminit-wait}).
After that, $p_i$ produces its intermediary signature (\cref{line:ap-rcv-quoruminit-sign}) and sends it to $p_j$ in a \quorumsigm message (\cref{line:ap-rcv-quoruminit-send}).
Finally, $p_i$ updates the sequence number of $p_j$ in $\seqnums_i$ with its new $\sn_j$ (\cref{line:ap-rcv-quoruminit-update-com}).

When $p_i$ receives a \quorumsigm message, it checks if the intermediary signature is valid (\cref{line:ap-rcv-quorumsig-ver-sig}), if it has not already received a previous intermediary signature from the same signer (\cref{line:ap-rcv-quorumsig-ver-signer}), and if it has not already produced an AP for the payload (\cref{line:ap-rcv-quorumsig-ver-ap}).
If these conditions pass, $p_i$ saves the provided intermediary signature (\cref{line:ap-rcv-quorumsig-save-sig}).

\begin{algorithm}[!tb]
\Init{%
$\sigs_i \gets \varnothing$;
$v_i \gets \bot$;
$\seqnums_i[1..n] \gets [0..0]$.
} \label{line:ap-init}
\AlgoSkip

\Operation{$\apverify(\ap_j,v,\sn_j,j)$}{%
    \return $\left\{\begin{array}{ll}
    \ttrue & \text{if $\ap_j$ is a valid $(\lfloor\frac{n+t}{2}\rfloor{+}1)$-threshold signature for $\aangle{v,\sn_j,j}$}, \\
    \ffalse & \text{otherwise.}
    \end{array}\right.$ \label{line:ap-verify}
}
\AlgoSkip

\Operation{$\approve(v,\data)$}{
    $\sn_i \gets \seqnums_i[i]+1$; \label{line:ap-prove-inc-sn} \Comment*{increment own sequence number}
    \lIf{$\Pa(v,\data,\sn_i)=\ffalse$}{\return \abort;} \label{line:ap-prove-ver-pred}
    $v_i \gets v$; \label{line:ap-prove-save-v} \Comment*{save $v$ for condition at \cref{line:ap-rcv-quorumsig}}
    $\sig_i \gets \text{initialization signature of $\aangle{v,\data,\sn_i,i}$ by $p_i$}$; \label{line:ap-prove-sign} \\
    \broadcast $\quoruminitm(v,\data,\sn_i,i,\sig_i)$; \label{line:ap-prove-bcast} \\
    \wait $(\text{$\sigs_i$ has strictly more than $\frac{n+t}{2}$ intermediary signatures for $\aangle{v,\sn_i,i}$})$; \label{line:ap-prove-wait} \\
    $\ap_i$ $\gets$ aggregation of $\sigs_i$ into a ($\lfloor\frac{n+t}{2}\rfloor{+}1$)-threshold signature for $\aangle{v,\sn_i,i}$; \label{line:ap-prove-aggreg} \\
    $\sigs_i \gets \varnothing$;
    $v_i \gets \bot$; \label{line:ap-prove-flush} \Comment*{flush temporary data}
    \return $\langle \ap_i,\sn_i \rangle$. \label{line:ap-prove-ret}
}
\AlgoSkip

\When{$\textup{\quoruminitm}(v,\data,\sn_j,j,\sig_j)$ {\bf is} \receivedd}{
    \lIf{$\sig_j$ is not a valid initialization signature for $\aangle{v,\data,\sn_j,j}$ by $p_j$}{\return;} \label{line:ap-rcv-quoruminit-ver-sig}
    \lIf{$\Pa(v,\data,\sn_j)=\ffalse$}{\return;} \label{line:ap-rcv-quoruminit-ver-pred}
    \wait $\seqnums_i[j] = \sn_j-1$; \label{line:ap-rcv-quoruminit-wait} \\
    $\sig_i \gets \text{intermediary signature for $\langle v,\sn_j,j \rangle$ by $p_i$}$; \label{line:ap-rcv-quoruminit-sign} \\
    \send $\quorumsigm(i,\sig_i)$ to $p_j$; \label{line:ap-rcv-quoruminit-send} \\
    $\seqnums_i[j] \gets \sn_j$. \label{line:ap-rcv-quoruminit-update-com} \Comment*{update $p_j$'s sequence number}
}
\AlgoSkip

\When{$\textup{\quorumsigm}(j,\sig_j)$ {\bf is} \receivedd}{ \label{line:ap-rcv-quorumsig}
    \lIf{$\sig_j$ is not a valid intermediary signature for $\aangle{v_i,\sn_i,i}$ by $p_j$}{\return;} \label{line:ap-rcv-quorumsig-ver-sig}
    \lIf{some intermediary signature by $p_j$ already in $\sigs_i$}{\return;} \label{line:ap-rcv-quorumsig-ver-signer}
    \lIf{some AP $\ap_i$ for $v_i$ already produced by $p_i$}{\return;} \label{line:ap-rcv-quorumsig-ver-ap}
    $\sigs_i \gets \sigs_i \cup \{\sig_j\}$. \label{line:ap-rcv-quorumsig-save-sig}
    
}
\caption{Light and consensus-free quorum-based Agreement Proof algorithm for an AP predicate \Pa, and assuming $n>3t$ (code for $p_i$).}
\label{alg:ap}
\end{algorithm}


\subsection{Proof of Algorithm~\ref{alg:ap}}
\todo{Revise \Cref{alg:ap} proof with new definition/interface of AP object (where the sequence number if passed to the predicate.)}
In the following correctness proofs of the AP scheme, we consider an AP object $\mathit{Obj}_A$ set up with an AP predicate \Pa.

\begin{lemma}[AP-Validity]
If $\ap_i$ is a valid AP for a value $v$ at sequence number $\sn_i$ from a correct process $p_i$, then $p_i$ has executed $\mathit{Obj}_A$$.\approve(v,\star)/\ap_i$ as its $\sn_i$\textsuperscript{th} invocation of $\mathit{Obj}_A$$.\approve(..)$.
\end{lemma}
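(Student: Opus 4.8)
The plan is to unfold the definition of a valid AP in terms of the threshold signature it encodes, trace the existence of that signature back to a genuine \approve invocation by $p_i$, and finally match the sequence number to the invocation index. By the definition of validity together with the \apverify operation (\cref{line:ap-verify}), saying that $\ap_i$ is a valid AP for $v$ at $\sn_i$ from $p_i$ means exactly that $\ap_i$ is a valid $(\lfloor\frac{n+t}{2}\rfloor{+}1)$-threshold signature for the tuple $\aangle{v,\sn_i,i}$. By the semantics of threshold signatures (\Cref{sec:thresh-sigs}), such a signature aggregates $\lfloor\frac{n+t}{2}\rfloor{+}1$ distinct valid intermediary signatures for $\aangle{v,\sn_i,i}$. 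Since at most $t$ processes are Byzantine and $\lfloor\frac{n+t}{2}\rfloor{+}1-t \geq \frac{n-t+1}{2} > 0$ (using $n>3t\geq t$), at least one of these intermediary signatures was produced by a correct process $p_k$.

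Next I would trace this signature back to $p_i$. A correct $p_k$ emits an intermediary signature for $\aangle{v,\sn_i,i}$ only at \cref{line:ap-rcv-quoruminit-sign}, while handling a \quoruminitm$(v,\data,\sn_i,i,\sig_i)$ message, and only after the check at \cref{line:ap-rcv-quoruminit-ver-sig} confirms that $\sig_i$ is a valid initialization signature of $\aangle{v,\data,\sn_i,i}$ by $p_i$. Since $p_i$ is correct, the adversary does not hold $p_i$'s signing key, so by the unforgeability of the signature scheme $\sig_i$ must have been genuinely produced by $p_i$. A correct $p_i$ produces an initialization signature only at \cref{line:ap-prove-sign}, during an invocation of $\mathit{Obj}_A.\approve(v,\data)$ that first passed the predicate test $\Pa(v,\data,\sn_i)=\ttrue$ at \cref{line:ap-prove-ver-pred}. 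This already establishes that $p_i$ invoked \approve with payload value $v$ and (by AP-Termination) that the invocation returns the AP $\ap_i$.

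It then remains to show that this is $p_i$'s $\sn_i$-th invocation. At \cref{line:ap-prove-inc-sn} the invocation sets $\sn_i \gets \seqnums_i[i]+1$, so I would prove the invariant that, for a correct $p_i$, the value of $\seqnums_i[i]$ always equals the number of \approve invocations $p_i$ has already carried past the predicate test at \cref{line:ap-prove-ver-pred}. This proceeds by induction on the invocation index, using that $p_i$ is sequential, that $\seqnums_i[i]$ is initialized to $0$, and that $p_i$'s broadcast at \cref{line:ap-prove-bcast} delivers its own \quoruminitm to itself, whose handler increments $\seqnums_i[i]$ from $\sn_i-1$ to $\sn_i$ at \cref{line:ap-rcv-quoruminit-update-com} after passing the FIFO guard at \cref{line:ap-rcv-quoruminit-wait}. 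The induction then yields that the $k$-th invocation uses $\sn_i=k$, so the invocation recovered above is indeed the $\sn_i$-th.

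The main obstacle is this last invariant: one must argue carefully that $\seqnums_i[i]$ is incremented exactly once per invocation and strictly between two consecutive invocations of $p_i$, i.e., that $p_i$ handles its own \quoruminitm message (in FIFO order) before it reads $\seqnums_i[i]$ again in its next \approve call. This is precisely where the FIFO guard at \cref{line:ap-rcv-quoruminit-wait} and the self-delivery of the broadcast are essential. By contrast, the first two steps are comparatively routine once the threshold-signature counting argument and signature unforgeability are in place.
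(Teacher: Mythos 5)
Your first two steps reproduce the paper's proof almost verbatim: the paper unfolds validity of $\ap_i$ into a valid $(\lfloor\frac{n+t}{2}\rfloor{+}1)$-threshold signature via \cref{line:ap-verify}, notes that $\lfloor\frac{n+t}{2}\rfloor{+}1 > t$ so at least one \emph{correct} process produced an intermediary signature at \cref{line:ap-rcv-quoruminit-sign}, observes that this process first verified $p_i$'s initialization signature at \cref{line:ap-rcv-quoruminit-ver-sig}, and concludes by unforgeability that the correct $p_i$ signed at \cref{line:ap-prove-sign} inside an $\approve(v,\star)$ execution. Notably, the paper's proof \emph{stops there}: it never argues the ``as its $\sn_i$\textsuperscript{th} invocation'' clause at all, so you correctly identified a part of the statement that the paper's own proof silently skips.

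Unfortunately, your third step, which attempts to close exactly that skipped part, rests on an invariant that is false for the algorithm as written. You claim that $p_i$ handles its own \quoruminitm message (thereby incrementing $\seqnums_i[i]$ at \cref{line:ap-rcv-quoruminit-update-com}) before it reads $\seqnums_i[i]$ again in its next \approve call. Nothing guarantees this: the network is asynchronous, self-addressed messages enjoy no priority, and the \wait at \cref{line:ap-prove-wait} only demands strictly more than $\frac{n+t}{2}$ intermediary signatures in $\sigs_i$, which can all come from \emph{other} processes --- whenever $n \geq 3t+3$ we have $n-t-1 > \frac{n+t}{2}$, so $\approve$ can return while $p_i$'s own \quoruminitm is still in transit. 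The next invocation then reads a stale $\seqnums_i[i]$ at \cref{line:ap-prove-inc-sn} and reuses the same sequence number, so ``the $k$-th invocation uses $\sn_i=k$'' fails. (A second, independent failure: an invocation that aborts at \cref{line:ap-prove-ver-pred} never triggers any update of $\seqnums_i[i]$, yet it still counts as an invocation under the specification's numbering, which equates $\sn_i$ with the number of \approve invocations.) The conclusion of the lemma can still be defended, but not by a local invariant at $p_i$: the viable route is a quorum-intersection argument of the kind used for AP-Agreement --- once a valid AP for $\aangle{v,\sn_i,i}$ exists, more than $\frac{n+t}{2}-t$ correct processes have passed the FIFO guard at \cref{line:ap-rcv-quoruminit-wait} and set $\seqnums[i]\geq\sn_i$, and this FIFO discipline at the \emph{signers} is what pins down which of $p_i$'s invocations a valid AP can be associated with, independently of how late $p_i$ processes its own messages. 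As it stands, your proof has a genuine gap in its final step; the paper avoids the gap only by not attempting that step at all.
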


\begin{proof}
Let us assume that $\ap_i$ is a valid AP for value $v$ at sequence number $\sn_i$ from a correct process $p_i$.
By definition of the \apverify operation at \cref{line:ap-verify}, $\ap_i$ is a valid $(\lfloor\frac{n{+}t}{2}\rfloor+1)$-threshold signature for $v$ at $\sn_i$ from $p_i$.
This threshold signature must have been aggregated from at least $\lfloor\frac{n+t}{2}\rfloor+1$ intermediary signatures.
Given the system assumption $n>3t$, we have $\lfloor\frac{n+t}{2}\rfloor+1>\lfloor\frac{4t}{2}\rfloor+1=2t+1 \geq t+1$.
Therefore, at least one correct process must have produced an intermediary signature for $\aangle{v,\sn_i,i}$ at \cref{line:ap-rcv-quoruminit-sign}.
However, to execute this line, this correct process must have verified the initialization signature of $p_i$ at \cref{line:ap-rcv-quoruminit-ver-sig}.
By the unforgeability of signatures, the only way to produce this initialization signature is for $p_i$ to execute \cref{line:ap-prove-sign}, during an $\mathit{Obj}_A$$.\approve(v,\star)/\ap_i$ execution.
\end{proof}

\begin{lemma}[AP-Agreement]
There are no two different valid APs $\ap_i$ and $\ap_i'$ for two different values $v$ and $v'$ at the same sequence number $\sn_i$ and from the same prover $p_i$.
More formally, $\mathit{Obj}_A.\apverify(\ap_i,v,\sn_i,i)=\mathit{Obj}_A.\apverify(\ap_i',v',\sn_i,i)=\ttrue$ implies $v=v'$.
\end{lemma}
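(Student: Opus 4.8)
The plan is to combine a standard Byzantine quorum-intersection argument with the per-prover FIFO discipline that correct signers enforce on sequence numbers. First I would unfold the definition of validity: by \cref{line:ap-verify}, a valid AP $\ap_i$ for $v$ (resp.\ $\ap_i'$ for $v'$) at $\sn_i$ from $p_i$ is a valid $(\lfloor\frac{n+t}{2}\rfloor{+}1)$-threshold signature for $\aangle{v,\sn_i,i}$ (resp.\ $\aangle{v',\sn_i,i}$). Invoking the definition of threshold signatures from \Cref{sec:thresh-sigs} (a $k$-threshold signature is valid iff it aggregates $k$ distinct valid intermediary signatures), each of the two APs must aggregate strictly more than $\frac{n+t}{2}$ distinct valid intermediary signatures. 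This yields two signer sets $Q,Q'\subseteq\{1,\dots,n\}$, each of size $>\frac{n+t}{2}$.

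Next I would apply quorum intersection. Since $|Q|+|Q'|>n+t$ while $Q,Q'\subseteq\{1,\dots,n\}$, we get $|Q\cap Q'|=|Q|+|Q'|-|Q\cup Q'|>(n+t)-n=t$. As at most $t$ processes are Byzantine, the intersection $Q\cap Q'$ must contain at least one correct process $p_k$. This $p_k$ has therefore produced, at \cref{line:ap-rcv-quoruminit-sign}, an intermediary signature for $\aangle{v,\sn_i,i}$ \emph{and} one for $\aangle{v',\sn_i,i}$, both carrying the same prover identity $i$ and the same sequence number $\sn_i$.

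The crux—and the step I expect to be the most delicate—is to show that a correct process emits at most one intermediary signature per (prover, sequence number) pair, which then forces $v=v'$. For this I would track the signer's local variable $\seqnums_k[i]$: a correct $p_k$ signs for sequence number $\sn_i$ only once its blocking \wait at \cref{line:ap-rcv-quoruminit-wait} releases, i.e.\ once $\seqnums_k[i]=\sn_i-1$, and it immediately performs $\seqnums_k[i]\gets\sn_i$ at \cref{line:ap-rcv-quoruminit-update-com}. Because $\seqnums_k[i]$ is monotonically non-decreasing, the guard $\seqnums_k[i]=\sn_i-1$ can hold at most once over the whole execution; hence $p_k$ signs at most one value for prover $p_i$ at sequence number $\sn_i$. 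The care required here is to justify, within the algorithm's execution model, that no two handler instances for the same $(i,\sn_i)$ but distinct payloads can both clear the \wait before the update takes effect—i.e.\ that the test at \cref{line:ap-rcv-quoruminit-wait} and the update at \cref{line:ap-rcv-quoruminit-update-com} behave atomically with respect to one another. Granting this, the single intermediary signature $p_k$ produces fixes a single payload, so $\aangle{v,\sn_i,i}=\aangle{v',\sn_i,i}$ and therefore $v=v'$, contradicting $v\neq v'$ and establishing the lemma.
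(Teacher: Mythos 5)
Your proof is correct and follows essentially the same route as the paper's: unfolding validity into $(\lfloor\frac{n+t}{2}\rfloor{+}1)$-threshold signatures, the quorum-intersection count $|Q\cap Q'|>t$ yielding a common correct signer, and the observation that the FIFO discipline around $\seqnums_k[i]$ (sign only when $\seqnums_k[i]=\sn_i-1$, then set it to $\sn_i$) forces each correct process to sign at most one payload per $(i,\sn_i)$ pair. The atomicity concern you flag is implicitly settled by the model's assumption that processes are sequential, which the paper's proof relies on without comment.
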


\begin{proof}
Let us assume, on the contrary, that there exists two different valid APs $\ap_i$ and $\ap_i'$ for two different values $v$ and $v'$ at the same sequence number $\sn_i$ and from the same process $p_i$ (correct or faulty).
By definition of the \apverify operation at \cref{line:ap-verify}, $\ap_i$ and $\ap_i'$ are valid $(\lfloor\frac{n+t}{2}\rfloor{+}1)$-threshold signatures for $v$ and $v'$ (resp.) at $\sn_i$ from $p_i$.
These threshold signatures must have been aggregated from the intermediary signatures of two sets of processes, $A$ and $B$, which have respectively signed $\aangle{v,\sn_i,i}$ and $\aangle{v',\sn_i,i}$.
We have $|A| \geq \lfloor\frac{n+t}{2}\rfloor+1 \leq |B|$, or equivalently, $|A| > \frac{n+t}{2} < |B|$.\footnote{Recall that $\Forall i \in \mathbb{Z}, r \in \mathbb{R}: (i \geq \lfloor r \rfloor + 1) \iff (i>r)$.\label{footnote:floor-equiv}}
We thus have $|A \cap B| = |A|+|B|-|A \cup B| > 2\frac{n+t}{2}-|A \cup B| \geq 2\frac{n+t}{2}-n=t$.
Therefore, at least one correct process $p_j$ must belong both to $A$ and $B$, and must have signed both $\aangle{v,\sn_i,i}$ and $\aangle{v',\sn_i,i}$ for $v \neq v'$.
But by the fact that correct processes produce intermediary signatures for some $\sn_i$ at \cref{line:ap-rcv-quoruminit-sign} and right after updating the $i$-th sequence number in their $\seqnums$ array at \cref{line:ap-rcv-quoruminit-update-com}, it follows that correct processes produce at most one intermediary signature for a given $\sn_i$ and sender identity $i$, which contradicts the fact that $p_i$ must belong both to $A$ and $B$.
\end{proof}






\begin{lemma}[AP-Knowledge-Soundness]
If $\ap_i$ is a valid AP for value $v$ at sequence number $\sn_i>0$ from a prover $p_i$ (correct or faulty), then $p_i$ knows some \data such that $\Pa(\star,\data,\sn_i)/\ttrue$.
\end{lemma}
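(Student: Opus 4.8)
The plan is to mirror the quorum-intersection argument already used in the AP-Validity and AP-Agreement proofs, but to push it one step further so as to extract the predicate witness \data. First I would unfold the definition of a valid AP: by the \apverify operation (\cref{line:ap-verify}), $\ap_i$ being a valid AP for $v$ at $\sn_i$ from $p_i$ means exactly that $\ap_i$ is a valid $(\lfloor\frac{n+t}{2}\rfloor{+}1)$-threshold signature for the tuple $\aangle{v,\sn_i,i}$. Such a signature can only have been aggregated (\cref{line:ap-prove-aggreg}) from at least $\lfloor\frac{n+t}{2}\rfloor{+}1$ distinct intermediary signatures on $\aangle{v,\sn_i,i}$. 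Using the system assumption $n>3t$, I would reuse the inequality from the AP-Validity proof, namely $\lfloor\frac{n+t}{2}\rfloor+1 > \frac{n+t}{2} > 2t \geq t+1$, so that the set of signers has size strictly greater than $t$ and therefore contains at least one correct process $p_j$.

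Next I would trace what this correct signer $p_j$ must have done. A correct process only emits an intermediary signature for $\aangle{v,\sn_i,i}$ at \cref{line:ap-rcv-quoruminit-sign}, and reaching that line requires having received a $\quoruminitm(v,\data,\sn_i,i,\sig_i)$ message that passed both guard checks: the signature check at \cref{line:ap-rcv-quoruminit-ver-sig} and the predicate check at \cref{line:ap-rcv-quoruminit-ver-pred}. The second guard is precisely the assertion that $\Pa(v,\data,\sn_i)=\ttrue$ for the particular \data carried by that message. This already establishes that \emph{some} \data satisfying the predicate exists and was in circulation; the remaining work is to attribute knowledge of this \data to $p_i$ rather than to $p_j$ or the adversary.

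The attribution step is where the unforgeability of the signature scheme does the heavy lifting, and I expect it to be the main obstacle — or at least the only subtle point, since the counting is routine. The first guard (\cref{line:ap-rcv-quoruminit-ver-sig}) certifies that $\sig_i$ is a valid \emph{initialization} signature by $p_i$ over the full tuple $\aangle{v,\data,\sn_i,i}$, which crucially embeds \data inside the signed message. By unforgeability, no process other than $p_i$ (even a colluding adversary controlling the Byzantine set) can have produced a valid signature of $p_i$ over that tuple; hence $p_i$ itself must have generated $\sig_i$, necessarily at \cref{line:ap-prove-sign}, and in doing so committed to that exact \data. I would therefore conclude that $p_i$ knows a \data with $\Pa(v,\data,\sn_i)=\ttrue$, which is the claim. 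The care required here is purely in phrasing ``knowledge'': the argument is that a valid signature over a message containing \data is itself a certificate that the signer held \data, so no extraction procedure beyond invoking unforgeability is needed.

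\begin{proof}
Assume $\ap_i$ is a valid AP for value $v$ at sequence number $\sn_i>0$ from a prover $p_i$ (correct or faulty). By definition of \apverify (\cref{line:ap-verify}), $\ap_i$ is a valid $(\lfloor\frac{n+t}{2}\rfloor{+}1)$-threshold signature for $\aangle{v,\sn_i,i}$, hence aggregates at least $\lfloor\frac{n+t}{2}\rfloor{+}1$ distinct intermediary signatures on $\aangle{v,\sn_i,i}$. Since $n>3t$, we have $\lfloor\frac{n+t}{2}\rfloor+1 > \frac{n+t}{2} > 2t \geq t+1$, so strictly more than $t$ processes signed $\aangle{v,\sn_i,i}$, and at least one such signer $p_j$ is correct. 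A correct process produces an intermediary signature for $\aangle{v,\sn_i,i}$ only at \cref{line:ap-rcv-quoruminit-sign}, upon receiving a $\quoruminitm(v,\data,\sn_i,i,\sig_i)$ message passing the guards at \cref{line:ap-rcv-quoruminit-ver-sig} and \cref{line:ap-rcv-quoruminit-ver-pred}; the latter guarantees $\Pa(v,\data,\sn_i)=\ttrue$ for the \data in that message. The former guarantees that $\sig_i$ is a valid initialization signature of $p_i$ over the tuple $\aangle{v,\data,\sn_i,i}$, which contains \data. By unforgeability of signatures, only $p_i$ could have produced $\sig_i$, and it can only do so at \cref{line:ap-prove-sign}; thus $p_i$ knows \data. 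Therefore $p_i$ knows some \data such that $\Pa(v,\data,\sn_i)=\ttrue$.
\end{proof}
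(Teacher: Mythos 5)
Your proof is correct and follows essentially the same route as the paper's: unfold \apverify into a $(\lfloor\frac{n+t}{2}\rfloor{+}1)$-threshold signature, count signers under $n>3t$ to find at least one correct signer, and observe that this signer's guard at \cref{line:ap-rcv-quoruminit-ver-pred} certifies $\Pa(v,\data,\sn_i)=\ttrue$ for the \data carried in $p_i$'s \quoruminitm message. Your additional step---invoking unforgeability of the initialization signature over $\aangle{v,\data,\sn_i,i}$ to attribute knowledge of that \data to $p_i$ itself---is a welcome explicit justification of an inference the paper's proof leaves implicit (it simply asserts that the predicate check ``entails that the prover $p_i$ must have known the \data''), mirroring the argument the paper uses in its AP-Validity proof.
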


\begin{proof}
Let us assume that $\ap_i$ is a valid AP for value $v$ at sequence number $\sn_i$ from a process $p_i$ (correct or faulty).
By definition of the \apverify operation at \cref{line:ap-verify}, $\ap_i$ is a valid $(\lfloor\frac{n+t}{2}\rfloor{+}1)$-threshold signature for $v$ at $\sn_i$ from $p_i$.
This threshold signature must have been aggregated from at least $\lfloor\frac{n+t}{2}\rfloor+1$ intermediary signatures.
Given the system assumption $n>3t$, we have $\lfloor\frac{n+t}{2}\rfloor+1>\lfloor\frac{4t}{2}\rfloor+1=2t+1 \geq t+1$.
Therefore, at least one correct process must have produced an intermediary signature for $\aangle{v,\sn_i,i}$ at \cref{line:ap-rcv-quoruminit-sign}.
However, to execute this line, this correct process must have verified the satisfaction of $\Pa$ by the received $\data$ at \cref{line:ap-rcv-quoruminit-ver-pred}, which entails that the prover $p_i$ must have known the $\data$ such that $\Pa(\star,\data,\sn_i)/\ttrue$.
\end{proof}

\TA{Maybe add illustration}

\begin{lemma}[AP-Termination]
Given a correct process $p_i$ that executes $\mathit{Obj}_A.\approve(v,\data)/r$ with value $v$ as its $\sn_i$\textsuperscript{th} invocation of $\mathit{Obj}_A$$.\approve(..)$, and a \data, if $\Pa(\star,\data,\sn_i)=\ttrue$ then $r=\ap_i$, where $\ap_i$ is a valid AP for $v$ at $\sn_i$ from $p_i$. If $\Pa(\star,\data,\sn_i)=\ffalse$, then $r=\abort$.
\end{lemma}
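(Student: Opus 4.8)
The plan is to prove AP-Termination directly from the structure of the \approve operation in \Cref{alg:ap}, handling the two cases ($\Pa=\ffalse$ and $\Pa=\ttrue$) separately. The $\ffalse$ case is immediate: when $p_i$ invokes $\approve(v,\data)$ with sequence number $\sn_i$, it reaches the predicate check at \cref{line:ap-prove-ver-pred}, and if $\Pa(v,\data,\sn_i)=\ffalse$, the operation returns \abort{} on that very line, with no blocking instruction executed beforehand. Thus termination with $r=\abort$ is trivial.

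For the case $\Pa(v,\data,\sn_i)=\ttrue$, the key obstacle is the \wait instruction at \cref{line:ap-prove-wait}, where $p_i$ blocks until $\sigs_i$ holds strictly more than $\frac{n+t}{2}$ intermediary signatures for $\aangle{v,\sn_i,i}$. The plan is to show this quorum is eventually gathered. First I would argue that $p_i$ successfully broadcasts a valid \quoruminitm message (\cref{line:ap-prove-sign,line:ap-prove-bcast}), since $p_i$ is correct and can always produce its own initialization signature. I would then show that every correct process $p_j$ receiving this message eventually sends back an intermediary signature. This requires checking the three gating conditions on the receiver side: the initialization signature verifies (it was honestly produced by correct $p_i$, so by signature correctness it passes \cref{line:ap-rcv-quoruminit-ver-sig}); the predicate $\Pa(v,\data,\sn_i)=\ttrue$ holds by assumption (passing \cref{line:ap-rcv-quoruminit-ver-pred}); and the FIFO \wait at \cref{line:ap-rcv-quoruminit-wait} eventually unblocks.

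The main difficulty lies in justifying that the FIFO \wait at \cref{line:ap-rcv-quoruminit-wait} does not block forever. Here I would invoke an inductive argument on $\sn_i$: since $p_i$ is correct and sequential, it invokes \approve with strictly increasing sequence numbers starting from $1$, and by AP-Validity together with the increment at \cref{line:ap-prove-inc-sn}, $p_i$'s $\sn_i$-th invocation carries sequence number exactly $\sn_i$. By induction, each correct process $p_j$ processes $p_i$'s \quoruminitm messages in FIFO order, updating $\seqnums_i[j]$ through the values $0,1,\dots,\sn_i-1$ (\cref{line:ap-rcv-quoruminit-update-com}), so the guard $\seqnums_i[j]=\sn_i-1$ is eventually satisfied and $p_j$ proceeds to sign and send (\cref{line:ap-rcv-quoruminit-sign,line:ap-rcv-quoruminit-send}). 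Because the network is reliable and there are at least $n-t$ correct processes, $p_i$ eventually receives at least $n-t$ distinct valid intermediary signatures. I would then verify these pass the receiver-side filters at \crefrange{line:ap-rcv-quorumsig-ver-sig}{line:ap-rcv-quorumsig-ver-ap} and are accumulated in $\sigs_i$.

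Finally, to close the quorum argument I would show $n-t > \frac{n+t}{2}$, which is equivalent to $2n-2t > n+t$, i.e. $n > 3t$, exactly the system assumption. Hence the \wait at \cref{line:ap-prove-wait} unblocks, $p_i$ aggregates the signatures into a valid $(\lfloor\frac{n+t}{2}\rfloor{+}1)$-threshold signature $\ap_i$ (\cref{line:ap-prove-aggreg}), and returns it (\cref{line:ap-prove-ret}). By definition of \apverify (\cref{line:ap-verify}), this $\ap_i$ is a valid AP for $v$ at $\sn_i$ from $p_i$, completing the proof. I expect the FIFO-unblocking step to be the most delicate, as it hinges on a careful induction tying together AP-Validity, the sequentiality of correct processes, and the monotone update of the $\seqnums$ array by all correct receivers.
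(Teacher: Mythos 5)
Your proposal is correct and follows essentially the same route as the paper's own proof: the immediate \abort{} case, then for the \ttrue{} case the broadcast of the \quoruminitm message, the receiver-side checks (signature validity, predicate satisfaction), the induction on sequence numbers showing the FIFO \wait{} at \cref{line:ap-rcv-quoruminit-wait} eventually unblocks, and the quorum count $n-t > \frac{n+t}{2}$ derived from $n>3t$ to release the \wait{} at \cref{line:ap-prove-wait}. The only cosmetic difference is your appeal to AP-Validity for the claim that the $\sn_i$\textsuperscript{th} invocation carries sequence number $\sn_i$, which is unnecessary (it follows directly from \cref{line:ap-prove-inc-sn} and the sequentiality of $p_i$, as the paper implicitly assumes).
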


\begin{proof}
Let us assume that a correct process $p_i$ executes $\mathit{Obj}_A.\approve(v,\data)/r$ with value $v$ and a \data as its $\sn_i$\textsuperscript{th} invocation of $\mathit{Obj}_A$$.\approve(..)$.
If $\Pa(\ft{\star},\data,\ft{\sn_i})=\ffalse$, then $p_i$ passes the condition at \cref{line:ap-prove-ver-pred} and return \abort.
If $\Pa(\star,\data,\sn_i)=\ttrue$, $p_i$ continues the execution, produces an initialization signature at \cref{line:ap-prove-sign}, broadcasts a \quoruminitm message at \cref{line:ap-prove-bcast} and wait for a quorum of signatures from the system at \cref{line:ap-prove-wait}.

Upon receiving this \quoruminitm message, each correct process $p_j$ passes the conditions at \cref{line:ap-rcv-quoruminit-ver-sig} (as $p_i$ has correctly generated its initialization signature $\sig_i$) and at \cref{line:ap-rcv-quoruminit-ver-pred} (as $p_i$ has verified the satisfaction of \Pa by \data at \cref{line:ap-prove-ver-pred}), and then wait that the received $\sn_i$ is the next in FIFO order to be processed at \cref{line:ap-rcv-quoruminit-wait}.
As $p_i$ uses its sequence numbers in FIFO order, then $p_i$ has necessarily sent a \quoruminitm message for each $\sn_i' \in [1..\sn_i]$, and $p_j$ will eventually receive all these \quoruminitm messages.
By induction, $p_j$ will pass the \wait statement at \cref{line:ap-rcv-quoruminit-wait} for each $\sn_i' \in [1..\sn_i]$, because if $\sn_i'=1$, the condition \cref{line:ap-rcv-quoruminit-wait} is satisfied as $\seqnums_j[i]$ is initialized at \cref{line:ap-init} to $0$, and if $\sn_i'>1$, then $p_j$ will eventually replace the $\sn_i'-2$ by $\sn_i'-1$ at \cref{line:ap-rcv-quoruminit-update-com}.
\TA{Does it need more justification?}
Therefore, all correct processes, which are at least $n-t$, will eventually pass the \wait statement at \cref{line:ap-rcv-quoruminit-wait}.
Given the system assumption $n>3t$, we have $n-t = \frac{2n-2t}{2} > \frac{n+3t-2t}{2} = \frac{n+t}{2}$.
Therefore, strictly more than $\frac{n+t}{2}$ (or, equivalently, at least $\lfloor \frac{n+t}{2} \rfloor+1$)
correct processes will produce an intermediary signature at \cref{line:ap-rcv-quoruminit-sign} and send it back to $p_i$ at \cref{line:ap-rcv-quoruminit-send}.

Finally, $p_i$ will receive this quorum of intermediary signatures at \cref{line:ap-rcv-quorumsig}, which will unlock the \wait instruction at \cref{line:ap-prove-wait}, and $p_i$ will aggregate all intermediary signatures into a valid AP $\ap_i$, and will return $\aangle{\ap_i,\sn_i}$ at \cref{line:ap-prove-ret}.
\end{proof}



\subsection{Consensus-freedom of Algorithm \ref{alg:ap}} \label{sec:ap-implem-cons-free}
The consensus-freedom of \Cref{alg:ap} follows trivially from the fact that the only communication primitives that it uses are classic \send/\receive operations and a best-effort \broadcast operation, and because it always terminates in the presence of failures and asynchrony.

\subsection{Succinctness of Algorithm \ref{alg:ap}} \label{sec:ap-implem-succinctness}
The succinctness of \Cref{alg:ap} comes from the succinctness of threshold signatures: as shown in \Cref{sec:thresh-sigs}, the best implementations of threshold signatures guarantee that the size and verification time of threshold signatures (and therefore also the agreement proofs $\ap_i$ produced by \Cref{alg:ap}) are equivalent to that of a single intermediary signature.
Therefore, \Cref{alg:ap} is succinct.

\subsection{Storage cost of Algorithm~\ref{alg:ap}} \label{sec:ap-implem-storage}
The storage of \Cref{alg:ap} is of $O(\lambda+n)$ bits stored by each correct process.
It comes from its local variables: the $\sigs_i$ set and $v_i$ values are emptied at the end of each \approve execution (\cref{line:ap-prove-flush})\footnote{
    \ta{See \Cref{sec:tsgi-efficient-agg} for a discussion on why we consider the storage of these temporary variables to be part of the space complexity of the $\approve()$ operation (and not the system's storage cost), and how this space complexity could be easily reduced to $O(\lambda+n)$ bits.}
}, and $\seqnums_i$ is an array of size $n$ containing constant-size sequence numbers (by definition).
\ta{Lastly, the threshold signature scheme requires the storage of $O(\lambda)$ bits (see \Cref{sec:tsig-constant-pki}).}
Therefore, each correct process only stores $O(\lambda+n)$ bits in \Cref{alg:ap}.

\subsection{Communication cost of Algorithm~\ref{alg:ap}} \label{sec:ap-implem-comm}
The communication cost of \Cref{alg:ap} is $O(n\lambda)$ bits sent by correct processes overall during an execution of \approve.
It follows from the message exchanges entailed by an $\approve(v,\data)$ invocation by a correct prover $p_i$.
Recall that, for simplicity, we assume that the size of the $v$ and \data parameters of $\approve(..)$ are of constant size.

In a first step, $p_i$ broadcasts a $\quoruminitm(v,\data,\sn_i,i,\sig_i)$ message at \cref{line:ap-prove-bcast}, where $v$, \data, and $\sn_i$ are constant-size, $i$ has $O(\log n)$ bits, and $\sig_i$ has $O(\lambda)$ bits.
As this message is broadcast to all $n$ processes, the overall communication cost of this first step is $O(n(\lambda+\log n))$ bits.

In a second step, every correct process of the system $p_j$ receives the $\quoruminitm$ message, passes the conditions at \cref{line:ap-rcv-quoruminit-ver-sig,line:ap-rcv-quoruminit-ver-pred}, and sends a $\quorumsigm(j,\sig_j)$ message to $p_i$ at \cref{line:ap-rcv-quoruminit-send}, where $j$ has $O(\log n)$ bits, and $\sig_j$ has $O(\lambda)$ bits.
This amounts to $O(n(\lambda+\log n))$ bits sent overall by correct processes during this second step.

Therefore, the total number of bits sent by correct processes during the execution of \Cref{alg:ap} is $O(n(\lambda+\log n))$.
However, as we assume that $\lambda = \Omega(\log n)$ (see \Cref{sec:model}, \P~\textit{Security parameter $\lambda$}), the previous asymptote simplifies to $O(n\lambda)$ bits sent overall.

\section{Trustless system setup} \label{sec:aat-setup}
Like most distributed or cryptographic systems, our system requires a setup phase to compute the initial public and private parameters enabling processes to participate in the system.
We assume the initial knowledge of a genesis data structure to specify the process identifiers and the initial amounts of their respective accounts.
Our system is set up in a trustless manner, \ie no trusted party is involved in the distributed computation of the system parameters nor holds a trapdoor or secret that could otherwise be used to compromise the system's safety.
We refer to the distributed setup operation of the system as $\mathsf{system\_setup}()$.
Note that the generation of the genesis data is not part of the $\mathsf{system\_setup}()$.
This trustless setup is possible because the cryptographic schemes implementing our schemes also provide a trustless setup (see \Cref{sec:impl}).

We provide the following implementation draft to demonstrate the feasibility of $\mathsf{system\_setup}()$.
\begin{enumerate}
    \item Assume public knowledge of $\mathsf{genesis\_data} \leftarrow \bigcup_{i=1}^{n}\aangle{pk_i, A_i, \balcom_i}$, where $A_i$ is the empty accumulator of process $p_i$ and $\balcom_i$ is a commitment to the initial balance of $p_i$ ($\init_i$).
    
    \item Assume private knowledge of the opening of $\balcom_i$, secret parameters of $A_i$ and secret key of signature key pair $\aangle{sk_i, pk_i}$ for each process $p_i$.
    
    \item Execute the setup operation of each cryptographic scheme.
    
    
    \item Generate a threshold signature $\ap_i$ (compatible with \Cref{alg:ap}) that will serve as the initial agreement proof of $p_i$ for each $v_i=\aangle{A_i,\balcom_i}$.
    
    \item Each process $p_i$ can now safely delete $\mathsf{genesis\_data}$ and output its initial parameters $\aangle{A_i, \balcom_i, \ap_i, \balopen_i}$.
\end{enumerate}

\section{Transfer batching} \label{sec:batching}
In this section, we briefly propose a method to commit to batches of transfers (instead of single transfers) and to verify those batches succinctly, \ie faster than if the batched transfers were verified individually.
Such a method is advantageous both in terms of anonymity and efficiency, which is explained in more detail in \Cref{sec:enhancements}.

\mparagraph{Incrementally Verifiable Computation (IVC).}
Informally, an IVC can be represented as a function $F$ that takes as input a previous execution of $F$ and some additional input.
For example, $F$ could be the function implementing the operation $\prove()$ of \Cref{alg:aat}.
Then each execution of $F$ would take as input an accumulator, an agreement proof and a balance commitment of a process $p_i$ and update the accumulator and balance commitment of $p_i$ accordingly.
Note that by ``chaining'' several iterations of $F$, we obtain the processing of a batch of transfers. 

\mparagraph{Folding scheme for IVC proofs.}
Folding schemes such as Nova~\cite{KST22} or Sangria~\cite{GWC19} allow the creation of efficient SNARKS that a given number of iterations of an IVC were correctly executed by generating a proof for each step, then combining them successively on-the-fly in constant time (factor of $2$ for Nova).
Once the prover wishes to demonstrate the correct processing of its IVC iterations, the IVC proof is compressed into a single, small and succinct SNARK proof.

\printbibliography
\end{document}